\newcommand{\algcomment}[1]{\colorbox{black!10}{#1}}
\newcommand{\LineIf}[2]{ \STATE \algorithmicif\ {#1}\ \algorithmicthen\ {#2} }
\newcommand{\LineIfElse}[3]{ \STATE \algorithmicif\ {#1}\ \algorithmicthen\ {#2} \algorithmicelse\ {#3} }
\newcommand{\true}{ \ensuremath{\mathbf{true}}\xspace}
\newcommand{\false}{ \ensuremath{\mathbf{false}}\xspace}
\newcommand{\yes}{\ensuremath{\mathbf{yes}}\xspace}
\newcommand{\supp}{ \ensuremath{\mathrm{supp}} }
\newcommand{\dc}{\ensuremath{\mathord{\downarrow}}}
\newcommand{\uc}{\ensuremath{\mathord{\uparrow}}}
\newcommand{\defproblemu}[3]{
  \vspace{1mm}
\noindent\fbox{
  \begin{minipage}{0.963\columnwidth}
  #1 \\
  {\bf{Input:}} #2  \\
  {\bf{Question:}} #3
  \end{minipage}
  }
  \vspace{1mm}
}
\newtheorem{theorem}{Theorem}[section]
\newtheorem{definition}[theorem]{Definition}
\newtheorem{lemma}[theorem]{Lemma}
\newtheorem{claim}[theorem]{Claim}
\newtheorem{observation}[theorem]{Observation}
\newtheorem{hypothesis}[theorem]{Hypothesis}
\newcommand{\poly}{\mathtt{poly}}
\newcommand{\eps}{\varepsilon}
\newcommand{\cF}{\mathcal{F}}
\newcommand{\ptime}{\ensuremath{\mathsf{P}}\xspace}
\newcommand{\nptime}{\ensuremath{\mathsf{NP}}\xspace}
\title{An Invitation to ``Fine-grained Complexity of \nptime{}-Complete Problems''}
\author{Jesper Nederlof}
\begin{document}
\maketitle
\begin{abstract}
Assuming that \ptime is not equal to \nptime, the worst-case run time of any algorithm solving an \nptime-complete problem must be super-polynomial. But what is the fastest run time we can get? Before one can even hope to approach this question, a more provocative question presents itself: Since for many problems the na\"ive brute-force baseline algorithms are still the fastest ones, maybe their run times are already optimal?

The area that we call in this survey ``fine-grained complexity of \nptime-complete problems'' studies exactly this question. We invite the reader to catch up on selected classic results as well as delve into exciting recent developments in a riveting tour through the area passing by (among others) algebra, complexity theory, extremal and additive combinatorics, cryptography, and, of course, last but not least, algorithm design.
\end{abstract}
\section{Introduction}
\label{sec:intro}
A natural end goal of algorithm design is to obtain algorithms with \emph{optimal worst-case run times}.
More precisely, one aims for
\begin{enumerate}
\item algorithms that solve every instance $x$ of a fixed computational problem within time $T(s(x))$, where $T$ expresses the worst-case run time of the designed algorithm in terms of a chosen size measure $s(x)$ of the instance $x$, and
\item a proof that this worst-case run time constitutes a fundamental \emph{barrier}, i.e. no algorithm can achieve a run time $T'(s(x))$ for any function $T'$ that grows significantly smaller than $T$ (for example $T'(s(x))  = O(T(s(x))^{0.99})$).
\end{enumerate}
In modern terms, this topic can be described as \emph{fine-grained complexity}. However, in the current literature this term is mostly used for polynomial time algorithms. 

Assuming that \ptime is not equal to \nptime, for every \nptime-complete problem the optimal worst-case run time $T(s(x))$ is super-polynomial in the bit-size $|x|$. Following the Cobham–Edmonds thesis, researchers commonly aim to avoid such super-polynomial run times. The study of \emph{parameterized complexity} does this by introducing a size measure $s(x)$ that can be much smaller than $|x|$.
In this case, the central question of parameterized complexity is whether the super-exponential behavior can be isolated to depend only on $s(x)$ and whether one can design an algorithm with run time $T(s(x))|x|^{O(1)}$.

In this survey, we restrict our choice of $s(x)$ to canonical parameters of an input such as the number of vertices of a graph.
Since such size functions $s(x)$ are polynomially related to the number of bits $|x|$ by which $x$ is encoded, this renders many typical questions in parameterized complexity trivial.
But instead we study the fine-grained complexity of \nptime-complete problems and aim for the optimal worst-case run time $T(s(x))$, not discouraged by the prospect that probably $T$ will be super-polynomial (and in fact, probably even exponential!). 

It is tempting to assume that such optimal worst-case run times are established by very natural and extremely simple algorithms, based on empirical evidence and Occam's razor: For many canonical \nptime-complete problems, these simple algorithms still essentially have the fastest worst-case run time despite decades of research predating even the definition of \nptime-completeness. For example,
\begin{itemize}
	\item Karl Menger already asked in the 1930s~\cite{SCHRIJVER20051} whether the trivial $n!$ time algorithm for the \textsc{Traveling Salesperson Problem} that simply tries all round trips can be improved. This question was answered positively in the 1960s with a simple $2^{n}n^2$ time dynamic programming approach by Bellman~\cite{bellman1962dynamic}, Held and Karp~\cite{held1962dynamic}, but this is essentially still the fastest known worst-case run time for the problem.
	\item In the 1950s and 1960s several Russian scientists studied in a series of papers whether na\"ive baseline algorithms (under the Russian term ``perebor'', which translates to ``brute-force'' or ``exhaustive'') can be improved for $\nptime$-hard problems that include \textsc{$k$-CNF-Sat} (see Section~\ref{sec:sat} for a definition of~\textsc{$k$-CNF-Sat}). Yablonski, falling for the aforementioned temptation, even claimed (erroneously) to have a proof that brute-force methods cannot be avoided. See the survey~\cite{DBLP:journals/annals/Trakhtenbrot84} for details.
\end{itemize}
Being several decades of investigation wiser, researchers realized that the na\"ive assumption that simple baseline algorithms reach the barrier of optimal worst-case run times is far from the truth: For many classic \nptime-complete problems such as \textsc{Hamiltonian Cycle}, $k$-\textsc{Coloring}, \textsc{Bin Packing} and \textsc{Subset Sum} much exciting progress has been made that undermines (or in some cases, even disproves) the earlier belief that brute-force cannot be improved.
Simultaneously, for some computational problems, most notably \textsc{$k$-CNF-Sat}, the question is quickly getting more importance: The \emph{Strong Exponential Time Hypothesis} (SETH, see Hypothesis~\ref{eth}) that states roughly that brute-force is unavoidable for solving \textsc{$k$-CNF-Sat} is by now a well-accepted hypothesis and is often used as evidence that algorithms should have an optimal worst-case run time.

\paragraph{Exact Exponential Time Algorithms.}
While the above question has been studied for many \nptime-complete problems individually in the previous century, the area of studying the precise worst-case run time for \nptime-complete problems as a whole gained traction thanks to several influential surveys that featured an inspiring list of challenging open problems, authored by Woeginger in the beginning of the century~\cite{DBLP:conf/aussois/Woeginger01, DBLP:conf/iwpec/Woeginger04, DBLP:journals/dam/Woeginger08}. The years afterwards, the field flourished under the names ``(moderately/exact) exponential time algorithms'' and a series of Dagstuhl seminars devoted to the topic~\cite{husfeldt_et_al:DagRep.3.8.40,fomin_et_al:DagSemProc.08431.2,husfeldt_et_al:DagSemProc.10441.1} were held, the textbook ``Exact Exponential Algorithms''~\cite{DBLP:series/txtcs/FominK10} and two survey articles~\cite{DBLP:journals/cacm/FominK13, DBLP:journals/cacm/KoutisW16} were published in the journal Communication of the ACM. See also the survey~\cite{DBLP:conf/stacs/Schoning05} centered around \textsc{$k$-CNF-Sat}.

Since the name ``Exact Exponential Time'' algorithms doesn't really distinguish itself from parameterized complexity (i.e. parameterized algorithms for \nptime-complete problems are typically exact exponential time algorithms), and we believe that the main question studied also really connects to the younger field of ``fine-grained complexity'' (even though that seems to restrict itself to polynomial time algorithms), we use (yet) another term in this survey to refer to the subfield of theoretical computer science at hand: ``fine-grained complexity of \nptime-complete problems''.

\paragraph{A Warm-Up Algorithm for \textsc{$3$-Coloring}.} As an illustration of the type of questions one deals with in the area of fine-grained complexity of \nptime-complete problems we study the following:

\defproblemu{\textsc{$k$-Coloring}}{An undirected graph $G=(V,E)$.}{Is there a function $c:V \rightarrow [k]$ such that $c(v)\neq c(w)$ for every $\{v,w\} \in E$.}

Let us fix $k=3$ and try to design a fast algorithm for \textsc{$3$-Coloring}.
While the na\"ive baseline algorithm goes over all $3^n$ candidates for $c$ and hence runs in $O^*(3^n)$ time, a slightly smarter algorithm would iterate over $X \subseteq V$ and check whether $X$ is an independent set and whether $G[V \setminus X]$ is bipartite: it is easy to see that such an $X$ exists if and only if the sought function exists, and checking bipartiteness can be easily done in polynomial time. In fact, in this strategy we can even restrict the enumeration to sets $X$ of size at most $n/3$ to get an $O^*(\binom{n}{n/3})=O^*(1.89^n)$ time algorithm.

We now describe a smarter algorithm (originally suggested in~\cite{DBLP:journals/jal/BeigelE05}):
\begin{theorem}
There is a randomized algorithm that solves 
\textsc{$3$-Coloring} in $1.5^n$ time, and outputs a solution if it exists with probability at least $1-1/e$.
\end{theorem}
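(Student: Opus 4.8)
The plan is to turn one random guess into a polynomial-time computation and then amplify by independent repetition. The crucial observation is that \emph{list coloring of $G$ with a list of size at most two at every vertex is solvable in polynomial time}: if each vertex $v$ comes with a list $L(v)\subseteq\{1,2,3\}$ of at most two admissible colors, introduce a Boolean variable $x_v$ that selects one of the (at most two) colors in $L(v)$, and turn each edge $\{u,v\}$ into the one or two $2$-clauses forbidding the assignments of $(x_u,x_v)$ that would color $u$ and $v$ alike. The resulting $2$-SAT formula, of size $O(|V|+|E|)$, is satisfiable if and only if $G$ has a proper coloring $c$ with $c(v)\in L(v)$ for all $v$, and $2$-SAT is decidable in linear time.

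Given this, I would perform the following random attempt: independently for each $v\in V$ pick a uniformly random two-element subset $L(v)$ of $\{1,2,3\}$ (there are three of them), then decide in polynomial time whether $G$ has a proper coloring respecting the lists, outputting such a coloring if found. Such a coloring is in particular a proper $3$-coloring of $G$, so the attempt never lies when $G$ is not $3$-colorable. Conversely, suppose $G$ is $3$-colorable and fix a proper coloring $c^\star$. For each vertex, exactly two of the three two-element subsets of $\{1,2,3\}$ contain $c^\star(v)$, so $\Pr[c^\star(v)\in L(v)]=2/3$, independently over $v$; hence with probability $(2/3)^n$ we have $c^\star(v)\in L(v)$ for all $v$, in which case the attempt returns some proper $3$-coloring. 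Thus one attempt succeeds with probability at least $(2/3)^n$ whenever $G$ is $3$-colorable, and the algorithm has one-sided error.

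Finally, I would run $t=\lceil(3/2)^n\rceil$ independent attempts and return any coloring found. When $G$ is $3$-colorable the probability that all attempts fail is at most $\bigl(1-(2/3)^n\bigr)^{t}\le e^{-t(2/3)^n}\le 1/e$, matching the claimed success probability $1-1/e$; when $G$ is not $3$-colorable the algorithm reports this, always correctly. Each attempt takes $\poly(n)$ time to sample the lists, build the $2$-SAT instance, and solve it, so the total running time is $O^*\!\bigl((3/2)^n\bigr)=O^*(1.5^n)$, with the number of repetitions fixed in advance.

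I do not anticipate a real obstacle here: the single idea is the random-list-to-$2$-SAT reduction, and the only points needing care are verifying that size-$\le 2$ list coloring really is a $2$-SAT instance and that the per-vertex survival probability is exactly $2/3$ (which is what makes $(2/3)^{-n}=(3/2)^n=1.5^n$ repetitions suffice). One should also note that ``$1.5^n$'' here absorbs the polynomial per-attempt overhead, i.e.\ the bound is $O^*(1.5^n)$.
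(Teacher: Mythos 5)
Your proposal is correct and follows essentially the same route as the paper: sample a uniformly random size-two color list per vertex, solve the resulting list-coloring instance in polynomial time (via a reduction to $2$-SAT, which the paper also mentions as one option), and repeat $O(1.5^n)$ times, with the $(2/3)^n$ per-trial success probability and the $1+x\le e^x$ amplification bound giving failure probability at most $1/e$. The only difference is cosmetic — you spell out the $2$-SAT encoding explicitly, which the paper leaves to the reader.
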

\begin{proof}
Consider a list-based variant of the $3$-\textsc{Coloring} problem in which we are given for every vertex $v \in V$ a list $L(v) \subseteq [3]$, and are looking for an assignment $c:V\rightarrow [k]$ such that $c(v) \in L(v)$ and $c(v)\neq c(w)$ for every $\{v,w\} \in E$. It can be easily shown that this problem can be solved in polynomial time if $|L(v)|\leq 2$ for each $v \in V(G)$, for example with a simple propagation algorithm or a reduction to $2$-\textsc{CNF-Sat} (which is also known to be solvable in polynomial time).

Now consider the following algorithm: For each vertex $v$, pick $L(v) \in \binom{[3]}{2}$ uniformly and independently at random and solve the resulting list-based variant in polynomial time. If it detects a function $c$, clearly it is correct. For the other direction, note that
\[
    \Pr_{L}[\forall_{v \in V} c(v) \in L(v) ] \geq (2/3)^n,
\]
since all lists are sampled independently.
Moreover, if $c(v) \in L(v)$ for all $v \in V$ then the list-based variant has a solution with the required properties. 
Hence, if we run $1.5^n$ trials of this polynomial time algorithm, the probability that we fail to output \yes if a solution $c$ exists is at most
\[
    (1-(2/3)^n)^{1.5^n} \leq 1/e,
\]
using the standard inequality $1+x \leq e^x$
\end{proof}

\paragraph{What this survey is (not) about.}
This survey aims to invite researchers in theoretical computer science into the field of fine-grained complexity of \nptime-complete problems.
We aim to convey that this is a beautiful field with elegant ideas and hosts many connections to other areas of theoretical computer science and mathematics.
To this end, we present proof sketches of a number of selected results. This includes both very recent works, to reflect the exciting and still developing character of the field, as well as older results that are too central and elegant to skip over. However, this survey by no means claims to be exhaustive. Some very important breakthroughs and research lines are omitted because, for example, there are already many other excellent surveys or textbooks discussing them. We will list a few of them in Section~\ref{sec:other}. An important emphasis of the survey, and the field in general, is its \emph{qualitative} character: Generally speaking, we deem results that improve run times beyond a natural barrier much more interesting than results that do not do this.

\section{Notation}
In the context of an instance $x$ of a computation problem or an input $x$ of an algorithm, we use $O^*()$ notation to omit factors that are polynomial in the length of the encoding of $x$, where we encode integers in binary.

If $b$ is a Boolean, we let $[b]$ denote the number $1$ if $b=\true$ and let it denote the number $0$ otherwise. On the other hand, if $i$ is an integer then we let $[i]$ denote the set $\{1,\ldots,i\}$.

For a set $S$ and integer $i$ we let $2^S$ denote the powerset of $S$ and let $\binom{S}{i}$ denote the family of subsets of $S$ of cardinality exactly $i$.
Using Stirling's approximation, it can be shown that
\begin{equation}\label{eq:binstir}
    \frac{n^n}{d^d (n-d)^{n-d}n^{O(1)}} \leq  \binom{n}{d} \leq \frac{n^n}{d^d (n-d)^{n-d}}n^{O(1)}.
\end{equation}
If $d=\alpha n$, then this is, up to factors polynomial in $n$, equal to
\[
    \frac{n^n}{d^d (n-d)^{n-d}} = \frac{n^n}{(\alpha n)^{\alpha n} ((1-\alpha)n)^{(1-\alpha)n}}=\left( \alpha^{-\alpha}  (1-\alpha)^{-(1-\alpha)}\right)^n = 2^{h(\alpha)n},
\]
where $h = - \alpha \lg \alpha - (1-\alpha)\lg(1-\alpha)$ is the binary entropy function. It can be shown with elementary calculus that
\begin{equation}\label{eq:entup}
    h(p) \leq p \lg (4/p).
\end{equation}

If $f$ is a function with $S$ as its domain and $X \subseteq S$ we denote $f(X) = \{f(x) : x \in X\}$. 

Matrices and vectors are denoted in boldface font. If $\textbf{a},\textbf{b}$ are two vectors of the same dimension $d$, we let $\langle a, b\rangle :=\sum_{i=1}^d \textbf{a}[i]\textbf{b}[i]$ denote their inner product.
If $\mathbf{M}$ is a matrix with rows indexed by $R$ and columns indexed by $C$, and $X\subseteq R$, $Y \subseteq C$ we let $\mathbf{M}[X,Y]$ be the submatrix of $\mathbf{M}$ formed by rows from $X$ and columns from $Y$. We also use $\mathbf{M}[X,\cdot]$ or $\mathbf{M}[\cdot,Y]$ to denote we do not restrict the rows/columns.

We let $i \equiv_p j$ denote that $i$ equals $j$ modulo $p$, omit $p$ if clear from context.

\section{Satisfiability of Conjunctive Normal Forms}\label{sec:sat}

One of the most well-studied \nptime-complete problems is that of determining the satisfiability of a boolean formula in Conjunctive Normal Form (CNF). Recall such formula is a conjunction of \emph{clauses}, which are disjunctions of \emph{literals}, where a literal is either a variable or its negation.
We say a boolean formula in CNF is a $k$-CNF if all clauses consist of at most $k$ literals.

\defproblemu{\textsc{$k$-CNF-Sat}}
{$k$-CNF formula $\varphi$ on $n$ variables and $m$ clauses.}
{Is there an assignment of the $n$ variables satisfying $\varphi$? }

The probably most famous hypotheses regarding the coarse/fine-grained complexity of \nptime-complete problems can now be formulated as follows:

\begin{hypothesis}[Exponential Time Hypothesis,\cite{DBLP:journals/jcss/ImpagliazzoP01}]\label{eth}
	There exists a $\delta >0$, such that no algorithm can solve \textsc{$3$-CNF-Sat} in $O(2^{\delta n})$ time.
\end{hypothesis}

\begin{hypothesis}[Strong Exponential Time Hypothesis,~\cite{DBLP:journals/jcss/ImpagliazzoP01}]\label{seth}
For every $\eps >0$, there exists a $k$ such that no algorithm can solve \textsc{$k$-CNF-Sat} in $O^*((2-\eps)^n)$ time.
\end{hypothesis}

While we use "fine-grained complexity" to refer to studying the possibility of an improvement of a $t(n)$ time bound to a $t(n)^{1-\Omega(1)}$ time bound, we can similarly use ``coarser-grained complexity'' to refer to studying the possibility of an improvement of a $t(n)$ time bound to a $t(n)^{o(1)}$ time bound.
It should be noted that, assuming Hypothesis~\ref{eth} we already have algorithms and lower bounds for many problems (parameterized by standard size measures) that are optimal in a coarser-grained manner. This includes all problems studied in this survey. For example, a $2^{o(n)}$ time algorithm for any problem in this survey (with $n$ being defined as in this survey as well) is known to refute the exponential time hypothesis. Such implications are typically a consequence of standard \nptime-completeness reductions and the sparsification lemma that we will discuss in detail below (Lemma~\ref{subsec:sparse}). See e.g.~\cite[Chapter 14]{DBLP:books/sp/CyganFKLMPPS15}.

\subsection{Algorithms for $k$-\textsc{CNF-Sat}}
\begin{algorithm}	
  \caption{Monien's and Speckenmeyer's algorithm for \textsc{$k$-CNF-Sat}.}
  \label{alg:ms}
	\begin{algorithmic}[1]
    \REQUIRE $\mathtt{MonienSpeckenmeyerkSAT}(\varphi)$ \hfill\algcomment{$\varphi$ is a $k$-CNF on $n$ variables}
    \ENSURE whether $\varphi$ is satisfiable
		\IF{there is an unsatisfied clause $l_1 \vee l_2 \vee \ldots \vee l_{k'}$}
            \FOR{$i=1,\ldots,k'$}\label{lin:loopi}
                \STATE $\rho \gets$ the restriction obtained by setting $\neg l_1,\neg l_2,\ldots, \neg l_{i-1}, l_i$\label{resdef}
                \LineIf{ $\mathtt{MonienSpeckenmeyerkSAT}(\varphi_{|\rho})$}{\algorithmicreturn \true}
            \ENDFOR
            \STATE \algorithmicreturn \false    
        \ENDIF
        \STATE \algorithmicreturn \true
	\end{algorithmic}
\end{algorithm}

The literature on the worst-case complexity of $k$-\textsc{CNF-Sat} is very rich.
There are several different algorithms that solve \textsc{$k$-CNF-Sat} in $O^*(2^{(1-1/O(k))n})$ time, but curiously it is not known whether this can be improved to a $O^*(2^{(1-1/o(k))n})$ time algorithm.
A lot of effort has been made to obtain small constants (for both constant $k$ and non-constant $k$) in the big-Oh term of the run time $O^*(2^{(1-1/O(k))n})$. We will not focus on such improvements here and refer to the state of the art~\cite{HansenKZZ19,Scheder24} for details.

We first describe some simple algorithms to solve \textsc{$k$-CNF-Sat}. The first one is slower than the second and the third (which are the state of the art, up to constants hidden in the big-Oh notation).

\subsubsection{Monien and Speckenmeyer's algorithm}
The algorithm by Monien and Speckenmeyer~\cite{MONIEN1985287} is the earliest one presenting a (modest) improvement over the trivial $O^*(2^n)$ time algorithm for \textsc{$k$-CNF-Sat}. It is outlined in Algorithm~\ref{alg:ms}. 
It uses the notion of a \emph{restriction}, which is a function $\rho: [n] \rightarrow \{0,1,*\}$ that sets a variable to $0,1$ or does not set it (corresponding to setting it to $*$).

The crucial step is in Line~\ref{resdef}, in which we define a restriction that all variables occurring in the first $i$ literals, and it does so in such a way that the first $i-1$ literals of a clause are not satisfied, but the $i$'th literal is satisfied.
Then it continues with determining whether the formula $\varphi_{|\rho}$ is satisfiable, which is the formula obtained by removing all clauses satisfied by $\rho$ and all variables set by $\rho$ (where the latter may result in an empty clause and hence an unsatisfiable formula).
If an assignment $\mathbf{x}$ satisfying $\varphi$ exists, then it will be detected at the iteration $i$ of the loop at Line~\ref{lin:loopi}, where $i$ is such that $\mathbf{x}$ satisfies $l_i$ but does not satisfy $l_1,\ldots,l_{i-1}$ 
If $T[n]$ is the number of recursive calls made by this algorithm, we have that $T[1]=1$ and $T[n] \leq \sum_{i=1}^k T[\max\{n-i,0\}]$. Hence, if we define $T'[1]=1$ and $T'[n] = \sum_{i=1}^k T'[\max\{n-i,0\}]$ then we have that $T[n]\leq T'[n]$ for all $n$. 

The numbers $T'[1],T'[2],\ldots$ are known as the \emph{Fibonacci $k$-step numbers}.
It can be shown with induction on $n$ or via combinatorial means\footnote{Using that $T'[n]$ is at most the number of subsets $X \subseteq [n]$ such that $X \cap \{i,\ldots,i+k\} \neq \emptyset$ for each $i \in [n-k]$, partition all of $[n]$ except at most $b$ elements into $n/b$ blocks of $b=\Theta(2^{k})$ consecutive elements $B_1,\ldots,B_{n/b}$ and argue that the number of options of $B_i\cap X$ is at most $2^{b}/c$ for some constant $c>1$.} that $T'[n]=2^{(1-2^{-O(k)})n}$.

\subsubsection{Sch\"oning's algorithm}
A considerably faster randomized\footnote{A derandomization based on covering codes is presented in~\cite{DBLP:journals/tcs/DantsinGHKKPRS02}.} algorithm by Sch\"oning is outlined in Algorithm~\ref{alg:ls}. A crucial ingredient is a subroutine $\mathtt{localSearch}(\varphi,\mathbf{x},d)$ that determines in $O^*(k^d)$ time whether $\varphi$ has a satisfying assignment of Hamming distance at most $d$ from $\mathbf{x}$. This subroutine is obtained with a small variant of Algorithm~\ref{alg:ms}, augmented with the following small observation: If there is an assignment $\mathbf{y}$ satisfying $\varphi$ but $\mathbf{x}$ does not satisfy $\varphi$ because some clause $l_1 \vee l_2 \vee \ldots l_p$ is not satisfied by $\mathbf{x}$, then $\mathbf{x}$ and $\mathbf{y}$ disagree in at least one of the $p$ variables of these literals. Hence we can recurse for $i=1,\ldots,p$ and flip the value of the variable underlying $l_i$ in $\mathbf{x}$ and assume that in one recursive call we moved to an assignment closer to $\mathbf{y}$ and hence decrease the distance parameter. Thus $\mathtt{localSearch}(\varphi,\mathbf{x},d)$ invokes at most $p \leq k$ direct recursive calls with distance parameter $d-1$, and hence the total number of (indirect) recursive calls invoked is at most $k^d$. Since any recursive call takes polynomial time, the claim $O^*(k^d)$ time follows.

Algorithm~\ref{alg:ls} now simply picks $\mathbf{x}$ at random and hopes that, if a solution $\mathbf{y}$ exists, that the Hamming distance $d(\mathbf{x},\mathbf{y})$ between $\mathbf{x}$ and $\mathbf{y}$ is at most $d$. This happens with probability at least $\sum_{i=0}^d\binom{n}{i}/2^n\geq \binom{n}{d}/2^n$. Hence after $2^n/\binom{n}{d}$ iterations the probability that $\true$ is returned is at least
\[
1- \left(1-\binom{n}{d}/2^n\right)^{\lceil 2^n/\binom{n}{d}\rceil } \geq 1-1/e \geq 1/2,
\]
where we use $1+x \leq e^x$ in the first inequality.
\begin{algorithm}	
  \caption{Sch\"oning's algorithm for \textsc{$k$-CNF-Sat}.}
  \label{alg:ls}
	\begin{algorithmic}[1]
    \REQUIRE $\mathtt{SchoningkSAT}(\varphi)$ \hfill\algcomment{$\varphi$ is a $k$-CNF on $n$ variables}
    \ENSURE \true with probability at least $\tfrac{1}{2}$ if $\varphi$ is satisfiable, and\false otherwise
		\STATE $d=n/(k+1)$ \label{lin:ksatadd}
		\FOR{$i=1 \ldots \lceil 2^n / \binom{n}{d} \rceil$}\label{lin:ksatloop}
			\STATE Pick $\mathbf{x} \in \{0,1\}^n$ uniformly at random \label{lin:ksatpick}
			\LineIf{$\mathtt{localSearch}(\varphi,\mathbf{x},d)$}{\algorithmicreturn \true} 
		\ENDFOR
		\STATE \algorithmicreturn \false
	\end{algorithmic}
\end{algorithm}

\noindent Now we use~\eqref{eq:binstir} to get that the run time $O^*(k^d 2^n/ \binom{n}{d})$ becomes
\[
\begin{aligned}
    O^*\left(k^d 2^n d^d (n-d)^{n-d}n^{-n}\right)&=O^*\left(2^n(kd)^d(n-d)^{n-d}n^{-n}\right)\\
    &=O^*\left(2^n \left( n\frac{k}{k+1} \right)^d \left(n\frac{k}{k+1} \right)^{n-d}n^{-n}\right)\\
    &=O^*\left(2^n \left(\frac{k}{k+1}\right)^n\right) = O^*\left(2^n \left(1-\frac{1}{k+1}\right)^n\right)\\
    &= O^*\left(2^{(1-1/O(k))n}\right),
\end{aligned}
\]
where we use $1+x \leq e^x$ in the last line.

\subsubsection{An algorithm based on random restrictions}
We give yet another algorithm with run time 
$O^*(2^{(1-1/O(k))n})$, based on random restriction and the switching lemma (often attributed to~\cite{DBLP:conf/stoc/Hastad86}, but the lemma builds on many previous works similar in spirit). An extension of the algorithm presented here that works for determining satisfiability of circuits of bounded depth can be found in~\cite{DBLP:journals/corr/abs-1107-3127}.\footnote{We are using a version presented in lecture notes by Valentine Kabanets \href{https://www2.cs.sfu.ca/~kabanets/407/lectures/lec11.pdf}{(link)}.}.

As before, let $\varphi$ be a $k$-CNF. Let the variables that occur in $\varphi$ be called $v_1,\ldots,v_n$, and let the clauses be called $C_1,\ldots,C_m$.

The switching lemma bounds the decision tree depth of random restrictions. For our application we need to be precise and use the following (somewhat lengthy) definition:

\begin{definition}
The \emph{canonical decision tree} of a CNF $\varphi$ is recursively defined by expanding a single root vertex $r$ into a tree as follows:
\begin{itemize}
\item if $\varphi$ has no clauses, the tree has $r$ as single vertex, referred to as a \emph{$1$-leaf},
\item if $\varphi$ has an empty clause the tree has $r$ as single vertex and is referred to as a \emph{$0$-leaf},
\item otherwise, define the \emph{level} of $r$ to be $0$ and its \emph{associated restriction} to be the restriction with $n$ stars. Let $i$ be the smallest integer such that $C_i$ is not yet satisfied and let its variables be $v'_1,\ldots,v'_p$. For $j=0,\ldots,p-1$, add for each vertex of level $j$ with associated restriction $\rho$ two children at level $j+1$ with as its associated restriction the restriction obtained from $\rho$ by setting $v'_i$ to either $\true$ or $\false$.\\
Recursively continue the construction for each vertex $r'$ at level $p$ with $\varphi$ being the associated restriction, attach the obtained canonical decision tree by identifying its root with $r'$.
\end{itemize}
\end{definition}

We let $CDTD(\varphi)$ denote the depth of the canonical decision tree of $\varphi$. It is not hard to see that this tree can be constructed in time linear in its size times factors polynomial in the input size, and since this is a binary tree its size is at most $2^{CDTD(\varphi)}$. Thus we can check in $O^*(2^{CDTD(\varphi)})$ time whether this tree has a $1$-leaf and hence determine whether $\varphi$ is satisfiable.

We now state the switching lemma. Many formulations circulate in the literature. We base ours on~\cite{beame1994switching}.

\begin{lemma}[Switching Lemma, \cite{DBLP:conf/stoc/Hastad86}]
\label{sl}
If $\varphi$ is a $k$-CNF formula and $\rho$ is a random restriction with $pn$ stars,\footnote{Both the version with exactly $pn$ stars and the version with independent star probability $p$ per variable are often stated in the literature. The current version is more handy for us.} then
\[
 \Pr_{\rho}[CDTD(\varphi_{|\rho}) \geq d] \leq (7kp)^d.
\]
\end{lemma}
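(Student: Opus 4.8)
The plan is to prove the Switching Lemma via H\r{a}stad's encoding argument (as streamlined by Beame, Razborov and others). The core idea is a clever injective encoding: we show that any restriction $\rho$ (with $pn$ stars) for which $CDTD(\varphi_{|\rho}) \geq d$ can be encoded, without loss of information, by a restriction $\sigma$ with $pn-d$ stars together with a small amount of auxiliary data — roughly $d$ ``labels'' each costing about $\lg k$ bits plus $1$ bit. Counting then gives
\[
  \Pr_\rho[CDTD(\varphi_{|\rho})\geq d] \;\leq\; \frac{\binom{n}{pn-d}\, 3^{pn-d}\, (\text{aux count})}{\binom{n}{pn}\, 3^{pn}},
\]
and one checks the ratio of binomial/power factors is at most $(p/(1-p))^d \leq (2p)^d$ (for $p\le 1/2$) while the auxiliary count is at most $k^d 2^d$ or so, yielding a bound of the form $(7kp)^d$ after absorbing constants.

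First I would set up the ``bad event'' more usefully: if $CDTD(\varphi_{|\rho})\geq d$ there is a root-to-node path in the canonical decision tree of $\varphi_{|\rho}$ of length $d$; I would truncate it to have length exactly $d$. Walking down this path, we encounter a sequence of clauses $C_{i_1}, C_{i_2},\ldots$ — at each stage, the first clause of $\varphi$ not yet satisfied (nor falsified) by the restriction built so far — and the path assigns values to the $\leq k$ ``free'' variables of each such clause in the canonical order, stopping partway through the last clause once $d$ variables have been set. Group the $d$ path-variables into consecutive blocks $v^{(1)}, v^{(2)}, \ldots, v^{(t)}$ according to which clause they belong to; block $j$ sits inside clause $C_{i_j}$ and has some size $s_j\geq 1$ with $\sum_j s_j = d$.

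Next I would define the encoding. Set $\sigma := \rho$ composed with the path assignment on exactly these $d$ variables; then $\sigma$ has $pn - d$ stars. The key point — this is where the canonical decision tree definition earns its keep — is that from $\sigma$ alone, together with (i) for each block $j$ a pointer, among the $\leq k$ literals of $C_{i_j}$, to the subset of positions that were ``free'' and the subset $v^{(j)}$ actually set, encodable by a symbol from a set of size $\leq k \cdot 2^k$ or, with the standard tighter bookkeeping, at most $\binom{\cdot}{\cdot}$ many options summing appropriately, and (ii) for each of the $d$ set variables one bit recording whether the path's value agrees with or differs from the value $\sigma$ assigns — the decoder can replay the construction: it knows $C_{i_1}$ is the first clause falsified-or-unsatisfied under $\sigma$ restricted away from block $1$, recovers block $1$'s variables from the pointer, recovers $\rho$'s behaviour there versus the path's from the bits, peels block $1$ off, and iterates. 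Hence the map $\rho \mapsto (\sigma, \text{labels}, \text{bits})$ is injective.

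The main obstacle, and the step that needs genuine care rather than routine calculation, is pinning down the precise ``label'' alphabet so that the product over blocks of the per-block label-count is at most $(Ck)^d$ for the constant $C$ implicit in the $7$: a naive ``$2^k$ per block'' bound is far too weak since blocks can be small, so one must encode, per block of size $s_j$ inside a clause with $r_j \leq k$ surviving literals, a choice costing like $\binom{r_j}{s_j}$ (which variables in that clause) and the proof must show $\prod_j \binom{r_j}{s_j} \, 2^{s_j} \leq (7k)^d / (\text{the } (p/(1-p))^d \text{ slack})$, i.e. the whole inequality balances only after combining the label count, the bit count $2^d$, and the binomial ratio. I would handle this by the standard trick of charging each block an encoding of size $\leq k$ options for ``where the next unset variable is, or a stop symbol,'' which telescopes cleanly to $(k+1)^d$, then verify $\binom{n}{pn-d}3^{pn-d}/\big(\binom{n}{pn}3^{pn}\big)\leq (p/(3(1-p)))^d$ via \eqref{eq:binstir} and multiply out, checking $(k+1)\cdot 2 \cdot (1/(1-p)) \leq 7k$ in the regime that matters (and noting the bound is trivially true when $7kp\geq 1$). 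Finally I would remark that the bound is vacuous unless $p < 1/(7k)$, which is exactly the regime in which it will be applied to the $k$-CNF-Sat algorithm.
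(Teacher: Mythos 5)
The paper does not actually prove the Switching Lemma: it is stated as \cref{sl} with citations to H\r{a}stad and to Beame's write-up, and is then used as a black box in the analysis of Algorithm~\ref{alg:sl}. So there is no in-paper proof to match; what follows evaluates your attempt against the standard encoding/relabelling argument from the literature, which is clearly the route you intend.

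Your high-level plan is the right one, and several details (truncating the bad path to length exactly $d$, grouping the $d$ queried variables into blocks by clause, the per-block ``which free literals of $C_{i_j}$ form the block'' pointer, the telescoping $(k+1)^d$ bookkeeping, the vacuity remark for $p\ge 1/(7k)$) are all faithful to the standard proof. But there is one genuine error in the core construction, and it is exactly the step where the encoding argument is clever. You define $\sigma$ to be $\rho$ composed with the \emph{path's} assignment on the $d$ block variables, and then propose, as auxiliary data (ii), one bit per variable ``recording whether the path's value agrees with or differs from the value $\sigma$ assigns.'' With your definition of $\sigma$ those two values are identical by construction, so the bits carry no information and cannot be used in decoding; worse, the decoder cannot recover $C_{i_1}$ from $\sigma$, because the canonical decision tree only continues past a clause if the path's assignment \emph{satisfies} it, so under your $\sigma$ the clause $C_{i_1}$ looks just like any other satisfied clause and does not stand out. (Your decoding sentence ``it knows $C_{i_1}$ is the first clause falsified-or-unsatisfied under $\sigma$ restricted away from block $1$'' is also circular: identifying block~$1$ is exactly what the decoder is trying to do.) The correct construction sets $\sigma$ on block $j$'s variables to the unique \emph{falsifying} assignment for the corresponding literals of $C_{i_j}$, and the per-variable bit records whether the path agreed or disagreed with that falsifying value. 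Then $C_{i_1}$ is forced: it is the first clause of $\varphi$ that $\sigma$ falsifies (on the free variables of block~$1$, with later blocks not yet peeled off), so the decoder can identify it, read the pointer to learn block~$1$'s variables, use the bits to recover both $\rho$'s stars and the path's choices there, re-apply the path's values, and recurse. This falsifying-assignment choice is what makes the map injective; without it the argument collapses. A secondary, smaller slip: the count of restrictions with $s$ stars is $\binom{n}{s}2^{n-s}$, not $\binom{n}{s}3^{s}$, so the ratio you should be manipulating is $\binom{n}{pn-d}2^{n-pn+d}\big/\bigl(\binom{n}{pn}2^{n-pn}\bigr)=2^d\binom{n}{pn-d}/\binom{n}{pn}\le \bigl(2p/(1-p)\bigr)^d$; the $2^d$ here is then cancelled against the ``agree/disagree'' bits and the remaining alphabet is balanced against $(7k)^d$ as you intend.
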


\begin{algorithm}	
  \caption{Algorithm for \textsc{$k$-CNF-Sat} based on Switching Lemma (from~\cite{DBLP:journals/corr/abs-1107-3127}).}
  \label{alg:sl}
	\begin{algorithmic}[1]
    \REQUIRE $\mathtt{SwitchkSAT}(\varphi)$ \hfill\algcomment{$\varphi$ is a $k$-CNF on $n$ variables}
    \ENSURE whether $\varphi$ is satisfiable, and\false otherwise
		\STATE $p \gets 1/(30k)$
        \STATE Pick $S \in \binom{[n]}{pn}$ uniformly at random
		\FOR{each restriction $\rho$ that only assigns stars to $\{v_i: i \in S \}$}\label{linres}
			\STATE Construct the canonical decision tree of $\varphi_{|\rho}$\label{lindt}
			\LineIf{a $1$-leaf is encountered}{\algorithmicreturn \true}\label{oneleaf}
		\ENDFOR
		\STATE \algorithmicreturn \false
	\end{algorithmic}
\end{algorithm}

Using the above discussion and the equation $\mathbb{E}[X] = \sum_{i=0}^\infty \Pr[X \geq i]$ that holds for any integer non-negative random variable, we see that the expected run time of Lines~\ref{lindt} and~\ref{oneleaf} is
\[
\begin{aligned}
    \mathbb{E}_{\rho}[2^{|CDTD(\varphi_{|\rho})|}] &= \sum_{i=0}^\infty \Pr_\rho[2^{|CDTD(\varphi_{|\rho})|} \geq i]\\
    &= \sum_{i=0}^\infty \Pr_\rho[|CDTD(\varphi_{|\rho})| \geq \lg i]\\
    &\algcomment{Using Lemma~\ref{sl}}
    \\
    &\leq \sum_{i=0}^\infty (7 k p )^{\lg i}
    \leq  \sum_{i=0}^\infty \left(\tfrac{1}{4}\right)^{\lg i} = \sum_{i=0}^\infty 1/i^2 = O(1),
\end{aligned}
\]
where the last step is the standard convergence fact of $p$-series (and can be proved by approximating the series with an integral).
Now, since the number of restrictions considered on Line~\ref{linres} is at most $2^{n-|S|}$ we get by linearity of expectation that the expected run time of Algorithm~\ref{alg:sl} is $O^*\left(2^{(1-p)n}\right)=O^*\left(2^{(1-\frac{1}{O(k)})n}\right)$.

\subsection{Sparsification Lemma}\label{subsec:sparse}
One of the perhaps most impactful lemmas on the fine-grained hardness of \textsc{$k$-CNF-Sat} is the sparsification lemma. Loosely stated, it shows that for some function $f$, \textsc{$k$-CNF-Sat} in general is almost as hard as \textsc{$k$-CNF-Sat} for which the number of clauses is at most $f(k)n$, i.e. linear in the number of variables if we consider $k$ to be constant.

\paragraph{Statement.}
As in the original paper that presented the sparsification lemma, we formulate it in terms of hitting sets of set systems for convenience.
If $\cF \subseteq 2^U$ and $X \subseteq U$, say $X$ \emph{hits} $\cF$ if $X$ intersects every set in $\cF$.

\begin{lemma}[\cite{DBLP:conf/focs/ImpagliazzoPZ98, DBLP:conf/coco/CalabroIP06}]\label{lem:sparsify}
	There is an algorithm that, given $k \in \mathbb{N}$, $\varepsilon > 0$ and set family $\cF \subseteq 2^U$ of sets with size at most $k$, produces set systems $\cF_1,\ldots,\cF_\ell \subseteq 2^{U}$ with sets of size at most $k$ in $O^*(\ell)$ time such that
	\begin{enumerate}
		\item every subset $X \subseteq U$ hits $\cF$ if and only if $X$ hits $\cF_i$ for some $i$,
		\item for every $i=1,\ldots,\ell$ each element of $U$ is in at most $d=\left(\frac{4k^2 \lg(1/\varepsilon)}{\varepsilon}\right)^{k-1}$ sets of $\cF_i$,
		\item $\ell$ is at most $2^{2\varepsilon n}$.
	\end{enumerate}
\end{lemma}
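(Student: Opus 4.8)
The plan is to recursively split the family $\cF$ based on a single "over-represented" element. The key measure of progress is the maximum degree of an element; once every element is in at most $d$ sets, we stop. First I would set up the recursive procedure: given a current family $\cF$, if every element has degree at most $d$, we output $\cF$ as one of the $\cF_i$ and stop. Otherwise, pick an element $u$ of high degree, and more carefully, pick a small sub-family of sets all containing $u$ that forms a "sunflower-free" or simply large enough collection — following Impagliazzo–Paturi–Zane, we look for a set $H$ of size $< k$ such that many sets of $\cF$ contain $H$ (a "heart" of a would-be sunflower). Concretely, one finds a non-empty $H$ with $|H| \le k-1$ that is contained in at least $\lceil 2/\eps \cdot |H| \rceil$ sets (the threshold being chosen to make the $\ell \le 2^{2\eps n}$ bound work); such an $H$ must exist whenever some element has degree exceeding $d$, by a counting/pigeonhole argument iterated over the "levels" $i = 1, \dots, k-1$ (if no element of degree $> 2/\eps$ exists we are at $d$ for $k=2$; otherwise pass to pairs, triples, etc., with the threshold growing by a factor roughly $2k^2\lg(1/\eps)/\eps$ at each level, which after $k-1$ levels gives exactly the stated $d$).

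Having found such an $H$, the branching rule is the standard case split used for hitting sets: either $X$ hits $\cF$ by intersecting $H$, or $X$ is disjoint from $H$. In the first branch, replace $\cF$ by $\cF \cup \{H\}$ — adding the smaller set $H$ only makes the family easier to hit and, crucially, a set $X$ disjoint from $H$ cannot hit this new family, so this branch really does capture exactly the case "$X \cap H \neq \emptyset$" once we also recurse. In the second branch, since $X$ is disjoint from $H$, every one of the (many) sets containing $H$ must be hit by $X$ within its remaining $\le k - |H|$ elements; so we may replace each such set $S \supseteq H$ by $S \setminus H$, strictly lowering its size. Recurse on both branches. Item~1 (correctness) then follows by a straightforward induction on the recursion, observing that in the first branch we have only strengthened the family toward the case $X \cap H \neq \emptyset$ and in the second we have faithfully recorded the constraint imposed by $X \cap H = \emptyset$.

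For the size bound $\ell \le 2^{2\eps n}$ (item~3), the point is to charge the branching to a potential function. When we branch on a heart $H$ contained in $t \ge \lceil (2/\eps)|H|\rceil$ sets: the "$S \setminus H$" branch shrinks $t$ sets each by $|H|$ elements, and one tracks the quantity (total size of all sets in $\cF$) or, better, a weighted count of sets of each size; the "add $H$" branch adds one small set. A clean way is to bound the number of leaves by $\binom{?}{?}$-type estimates or by showing the recursion tree, when one contracts the trivial branches, has the property that the "$X \cap H \neq \emptyset$" choices made along any root-to-leaf path involve disjoint sets $H$ (because once $H$ is "used" in the first branch it is in the family and no future heart equals it), so these choices correspond to selecting an independent packing of small sets; summing $\prod |H_j|$ over all such packings and using the threshold $t \ge (2/\eps)|H_j|$ to pay for each branch yields a geometric-type bound of $2^{2\eps n}$ after taking logarithms. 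The running time $O^*(\ell)$ (item~2 is immediate once the procedure halts) follows since each node of the recursion tree does polynomial work and the tree has $O(\ell)$ nodes.

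The main obstacle I expect is pinning down the exact thresholds in the "find a heart" step so that the degree bound telescopes to precisely $d = \bigl(4k^2\lg(1/\eps)/\eps\bigr)^{k-1}$ while simultaneously keeping the leaf count at $2^{2\eps n}$: the two requirements push in opposite directions (a larger threshold for "many sets contain $H$" makes the size bound easier but the degree bound worse), and making the induction over the $k-1$ levels bookkeep both quantities cleanly — in particular handling the interaction between adding the small sets $H$ in one branch and the degree accounting — is the delicate part. The rest (correctness via induction, polynomial per-node work) is routine.
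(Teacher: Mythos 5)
Your overall blueprint matches the paper's: branch at a small ``heart'' $H$ contained in many sets into the case $X\cap H\neq\emptyset$ (add $H$ to the family) and the case $X\cap H=\emptyset$ (replace each superset of $H$ by its petal). The flower/heart/petal decomposition, the use of inclusion-wise minimality to absorb removals, and the plan to charge each branch against a potential are all exactly what the paper does. However, the analysis has two genuine gaps that keep it from closing.

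First, the ``disjoint hearts along a root-to-leaf path'' claim is false as argued: adding $H=\{a,b\}$ to the family forbids a later heart from being a \emph{superset} of $H$ (supersets are no longer minimal), but it does not forbid a later heart $H'=\{a,c\}$ that merely intersects $H$. You only get distinctness, not disjointness, so the ``independent packing'' / product estimate you sketch does not go through. The paper avoids any such combinatorial claim about the hearts themselves.

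Second, and more fundamentally, a potential such as ``total size of all sets'' or ``weighted count of sets by size'' does not obviously increase at every branch, because when you add the heart $H$ and re-minimize, you also \emph{remove} every current superset of $H$ from the family, and you have no control on how many sets of larger sizes disappear this way (their degrees have not been tamed yet). The paper's fix is the phase function $\sigma(\cF)$: it tracks the largest $s<k$ for which the degree bounds already hold for all set sizes $\le s$, and the potential $\varphi(\cF)=\sum_{j\le \sigma(\cF)}\beta_{k-j+1}|\cF_j|$ only sums over those stabilized sizes. Crucially, the algorithm is forced to process flowers in order of increasing set size $s$ and increasing petal size $p$, precisely so that $\sigma$ is monotone non-decreasing along the recursion (Lemma~\ref{lem:sigma}); then the loss from re-minimization can be bounded by $\sum_j 2\theta_{j-h}\beta_{k-j+1}$ and is strictly dominated by the gain $\beta_{k-h+1}$ from adding $H$. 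Your ``pick an element of highest degree first'' heuristic does not enforce this ordering and so cannot support the same monotonicity argument. This phase-function idea is the missing ingredient; without it the potential analysis you gesture at cannot be made to telescope to $2^{2\eps n}$ leaves while simultaneously giving the degree bound $d=(4k^2\lg(1/\eps)/\eps)^{k-1}$.
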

In the original version of Lemma~\ref{lem:sparsify} the dependence on $k$ in item $2.$ was doubly-exponential; it was brought down to singly-exponential in~\cite{DBLP:conf/coco/CalabroIP06}.
It is an interesting question whether  this dependence can be further improved, even though some lower bounds in this direction are known~\cite{DBLP:conf/icalp/SanthanamS12}.

\paragraph{The case of graphs ($k=2$).}
A precursor of the sparsification lemma is a lemma from~\cite{DBLP:conf/soda/JohnsonS99} that states that \textsc{Vertex Cover} on sparse graphs is roughly as hard as \textsc{Vertex Cover} on general graphs (the paper actually speaks of the \textsc{Independent Set} problem, but this is equivalent to \textsc{Vertex Cover}, which is more naturally generalized to hypergraphs).

If $k=2$, then $\cF$ can be seen as a graph and $X$ hits $\cF$ exactly if it is a vertex cover of this graph, and we are in the aforementioned setting of~\cite{DBLP:conf/soda/JohnsonS99}. The proof idea in this setting is simple: If condition 2. of Lemma~\ref{lem:sparsify} does not hold, and we have a vertex of degree at least $d=\lg(1/\eps) / \eps$, then we branch on the decision whether we include this vertex $v$ in the vertex cover. We recurse on two subproblems, in one subproblem we include $v$ in the solution (lowering the number of vertices by $1$) and can remove it in the recursive call, whereas in the other subproblem we include all neighbors of $v$ in the solution and remove $v$ and all its neighbors from the recursive call (lowering the number of vertices by at least $d$). Doing this exhaustively, we generate at most $T[n]$ subproblems, where
\[
    T[n] \leq T[n-1] + T[n-d],
\]
and it can be shown that $T[n]\leq \binom{n}{n/d} \leq 2^{h(\eps/\lg(1/\eps))n}\leq 2^{2 \eps n}$, where we use~\eqref{eq:binstir} in the last step.

\paragraph{Relation with \textsc{$k$-CNF-SAT}.}
Lemma~\ref{lem:sparsify} is often stated in terms of CNF-formulas, but there is a simple reduction to the stated version. Since our version of the lemma is not directly about CNF-formulas we briefly illustrate one of the most useful consequences~\ref{lem:sparsify}:
We sketch how~\ref{lem:sparsify} can be used to show that a $2^{o(n+m)}$ time algorithm for \textsc{$k$-CNF-Sat} refutes Hypothesis~\ref{eth}. Suppose a $\delta>0$ with the condition of Hypothesis~\ref{eth} exists, and let $\varphi$ be a $3$-CNF on $n$ variables. Use Lemma~\ref{lem:sparsify} with $U$ to be all $2n$ literals (a positive and a negative literal for each variable) and set $\eps=\delta/5$, and $\mathcal{F}$ has a set for each clause of $\varphi$ consisting of its literals (being elements of $U$).

We obtain $2^{4 \delta n /5}$ set systems $\cF_1,\ldots,\cF_\ell$ such that every subset $X \subseteq U$ hits $\cF$ if and only if $X$ hits $\cF_i$ for some $i$.
We then cast every set family $\cF_i$ into a $k$-CNF $\varphi_i$ with one clause per set with literals being all elements of the set. 
Call a subset $X \subseteq U$ \emph{valid} if it includes exactly $1$ literal of $v_i$ and $\neg v_i$ for each $i$.
Applying condition 1. of Lemma~\ref{lem:sparsify} for all valid subsets, we have that $\varphi$ is satisfiable if and only if some $\varphi_i$ is satisfiable. Since the number of clauses in each $\varphi_i$ is $O(n)$, we can determine satisfiability of each $\varphi_i$ (and hence of $\varphi)$ in total time $2^{4\delta n/5}2^{o(n+m)}\leq O(2^{5\delta/6})$ time with the assumed subexponential time algorithm, contradicting Hypothesis~\ref{eth}.

While this addresses coarser-grained reductions (conditioned on 
Hypothesis~\ref{eth}), Lemma~\ref{lem:sparsify} is also very often used as first step for more fine-grained reduction (conditioned on Hypothesis~\ref{seth}) such as e.g. the reductions from~\cite{DBLP:journals/talg/AbboudBHS22,DBLP:journals/talg/CyganDLMNOPSW16} .

\subsubsection{The Algorithm}
Before we describe the algorithm, we need the following definition: 
\begin{definition}[Flowers and Petals]\label{def:flower} For brevity, we refer to a set of size $s$ an \emph{$s$-set}.
	An \emph{$s$-flower} is a collection of $s$-sets $S_1,\ldots,S_z$ such that the \emph{heart} $H:=\cap_{i=1}^z S_i$ is non-empty. The sets $S_1 \setminus H,\ldots,S_z \setminus H$ are referred to as the \emph{petals}; the quantity $|S_i \setminus H|$ (which is independent of $i$ in an $s$-flower) is called the \emph{petal size}. 
\end{definition}

The algorithm from Lemma~\ref{lem:sparsify} is in Algorithm~\ref{alg:reduce}. The constants $\theta_1,\ldots,\theta_k$ will be defined later. A flower $S_1,\ldots,S_z$ with petal size $p$ is called \emph{good} if $z \geq \theta_p$. We let $\pi(\cF)$ denote the set of inclusion-wise minimal sets of $\cF$. 

\begin{algorithm}[H]
	\caption{Algorithm implementing Lemma~\ref{lem:sparsify}.}
	\label{alg:reduce}
	\begin{algorithmic}[1]
		\REQUIRE $\mathtt{reduce}(\cF)$ \hfill\algcomment{Assumes the sets of $\cF$ do not contain each other ($\cF=\pi(\cF)$)} 
		\ENSURE A collection of set systems as promised in Lemma~\ref{lem:sparsify}.
		\FOR{$s=2,\ldots,k$}
			\FOR{$p=1,\ldots,s-1$}
				\item[] \algcomment{check if there exists a good $s$-flower, and if so branch on it} 
				\IF{there exists an $s$-flower $S_1,\ldots,S_z$ with petal-size $p$ and $z \geq \theta_p$}
					\STATE $H \gets \cap_{i=1}^z S_i$
                    \STATE $\cF_{heart} \gets \pi(\cF \cup \{H \})$
                    \STATE $\cF_{petals} \gets \pi(\cF \cup \{S_i \setminus H: i=1,\ldots,z\})$.
                    \item[] \algcomment{branch on whether we hit the heart or \emph{all} of the petals}
					\STATE \algorithmicreturn\ $\mathtt{reduce}( \cF_{heart} ) \cup \mathtt{reduce}(\cF_{petals})$. \label{sparsify:branch}
				\ENDIF
			\ENDFOR
		\ENDFOR
		\STATE \algorithmicreturn\ $\{\cF\}$.
	\end{algorithmic}
\end{algorithm}
Note that $\pi$ preserves the family of hitting sets and hence every $X \subseteq U$ hits $\cF$ if and only if it hits $\cF_{heart}$ (if $H \cap X \neq \emptyset$) or $\cF_{petals}$ (if $H \cap X = \emptyset$), and thus Item $1$ of Lemma~\ref{lem:sparsify} is indeed true.
\begin{observation}\label{obs:jflower}
	If $\cF$ has no good $j$-flower, each $h$-set is contained in at most $\theta_{j-h}-1$ sets of size $j$.
\end{observation}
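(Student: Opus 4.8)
The plan is to argue directly: suppose, toward a contradiction, that some $h$-set $A$ is contained in $t \geq \theta_{j-h}$ sets of size $j$, and from this produce a good $j$-flower. First note that we may assume $h \geq 1$ (otherwise $j-h$ falls outside the range $\{1,\dots,k-1\}$ on which the constants $\theta_p$ are defined, so the statement is vacuous), and also $t \geq 2$, since the $\theta_p$ will be chosen to be large constants — in particular at least $2$ — so $t \leq 1$ already yields the claimed bound.

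List the size-$j$ sets of $\cF$ containing $A$ as $S_1,\dots,S_t$; these are pairwise distinct. Let $H := \bigcap_{i=1}^{t} S_i$. Since $\emptyset \neq A \subseteq H$, the heart $H$ is non-empty, so $S_1,\dots,S_t$ form a $j$-flower. Its petal size is $p := j - |H|$: from $A \subseteq H$ we get $|H| \geq h$, hence $p \leq j-h$; and since the $S_i$ are distinct $j$-sets, none of them can equal $H$, so $p \geq 1$. Thus $1 \leq p \leq j-h \leq j-1$, i.e.\ $\theta_p$ is well defined. I then invoke that the constants are chosen to be non-decreasing, $\theta_1 \leq \theta_2 \leq \dots \leq \theta_{k-1}$ (a property one arranges when pinning them down, and in any case one that the final bound $d$ of Lemma~\ref{lem:sparsify} forces). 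Then $t \geq \theta_{j-h} \geq \theta_p$, so $S_1,\dots,S_t$ is a good $j$-flower with petal size $p$, contradicting the hypothesis that $\cF$ has no good $j$-flower. Therefore $t \leq \theta_{j-h}-1$.

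The only genuine point requiring care is the monotonicity of $(\theta_p)_p$: the common intersection $H$ of the size-$j$ sets containing $A$ may strictly contain $A$, so the flower we extract has petal size $p$ possibly smaller than $j-h$, and we need its threshold $\theta_p$ to be no larger than $\theta_{j-h}$ to close the argument. Everything else — non-emptiness of the heart, positivity of the petal size, and that $p$ lies in the admissible index range — is immediate from $A \subseteq S_i$, $|A| = h \geq 1$, and distinctness of the $S_i$, so no further estimates are needed.
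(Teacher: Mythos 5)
Your proof is correct and takes essentially the same contrapositive route as the paper: assume an $h$-set is in $\geq \theta_{j-h}$ size-$j$ sets and extract a good $j$-flower from them. The one real difference is that you explicitly handle a subtlety the paper glosses over. The paper asserts ``these $\theta_{j-h}$ sets will form a $j$-flower with $H$ as heart and since the petal size is $j-h$, it will be good,'' but the heart of that flower is by definition $\bigcap_i S_i$, which may \emph{strictly} contain the given $h$-set, making the petal size $p < j-h$. You correctly note that the argument then closes only because $\theta_p$ is non-decreasing in $p$ (which holds here since $\theta_p = \alpha(4\alpha k)^{p-1}$ with $4\alpha k > 1$), so $z \geq \theta_{j-h} \geq \theta_p$ still certifies the flower as good. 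Your remaining housekeeping ($h \geq 1$ so the heart is non-empty, $t \geq 2$ so the petal size is $\geq 1$) is also sound — though the parenthetical you give for $h \geq 1$ (index out of range) is less to the point than the reason your own proof actually uses, namely that $A \neq \emptyset$ is what guarantees a non-empty heart. Net: same approach, but your write-up is the more careful of the two.
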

To see that this is true, note that if some $h$-set $H$ is contained in at least $\theta_{j-h}$ sets of size $j$, then these $\theta_{j-h}$ sets will form a $j$-flower with $H$ as heart and since the petal size is $j-h$, it will be good. 

Using $h=1$, we see that every element of $U$ is in at most $\theta_{k-1}$ sets of $\cF$ if $\cF$ is output by $\mathtt{reduce}$ as in this case it does not have good $j$-flowers for every $j \leq k$. Hence, it follows from Observation~\ref{obs:jflower} that Item $2$ of Lemma~\ref{lem:sparsify} is satisfied as long as $\theta_{k-1}\leq d$.

\subsubsection{Bounding the run time and output size.}
The original proof (and also the proof in the textbook~\cite{DBLP:series/txtcs/FlumG06}) features several claims about clauses being subsequently added and removed several times along recursive paths of Algorithms~\ref{alg:reduce}. We employ an (arguably) more direct approach.\footnote{After sending this survey for review, the author learned that another write-up of a more direct proof of the sparsification lemma appeared at the conference SOSA'26~\cite{sparsificationSOSA}.}

We bound the run time of $\mathtt{reduce}(\cF)$ by assigning a potential function $\varphi$ to a set system $\cF$ which indicates the progress towards sparsification. The potential function $\varphi$ is defined as follows:
\begin{definition} Let $\alpha$ be some parameter to be set later, $\beta_j=(4\alpha k)^{j-1}$ and $\theta_j=\alpha\beta_j$. Define $\sigma(\cF)$ to be the largest $s <k$ such that every $h$-set is in at most $2\theta_{j-h}$ sets of size $j$ for every $j \leq s$. Define $\varphi(\cF)=\sum_{j=1}^{\sigma(\cF)} \beta_{k-j+1}|\cF_j|$, where $\cF_j$ denotes all $j$-sets of $\cF$.
\end{definition}

Clearly, $\varphi(\cF)$ is non-negative. We proceed by showing that in every recursive call $\varphi$ increases substantially and upper bounding $\varphi$ if $\mathtt{reduce}$ terminates. For the upper bounding part, note that
\begin{equation}\label{eq:upppot}
\begin{aligned}
\varphi(\cF) &= \sum_{j=1}^{\sigma(\cF)} \beta_{k-j+1}|\cF_j| \\
&\leq \sum_{j=1}^{\sigma(\cF)} \beta_{k-j+1} (n 2\theta_{j-1})&\hfill\algcomment{each element is in at most $2\theta_{j-1}$ sets of size $j$} \\
&\leq k\beta_{k-j+1}\alpha2\beta_{j-1}n & \hfill\algcomment{$\theta_j=\alpha\beta_j$} \\
&\leq k\alpha2\beta_{k-1}n & \hfill\algcomment{$\beta_j=(4\alpha k)^{j-1}$} \\
&\leq \beta_k n.
\end{aligned}
\end{equation}

Now we will consider a particular recursive call of $\mathtt{reduce}$ and lower bound the increase of the potential. As $\varphi(\cF)$ involves a sum up to $\sigma(\cF)$, it will be useful to first show that $\sigma(\cF)$ never decreases. Intuitively, $\sigma$ can be thought of as the \emph{phase} of the algorithm: the sets of size at most $\sigma(\cF)$ are processed already by the algorithm and are guaranteed to be nice in the sense that they cannot be extremely concentrated.

\begin{lemma}\label{lem:sigma}
	$\sigma(\cF) \leq \min \{\sigma(\cF_{heart}),\sigma(\cF_{petals})\}$.
\end{lemma}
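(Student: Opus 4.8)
The plan is to show the two inequalities $\sigma(\cF) \leq \sigma(\cF_{heart})$ and $\sigma(\cF) \leq \sigma(\cF_{petals})$ separately, by arguing that the defining property of $\sigma(\cF)$ — namely, that for every $j \leq \sigma(\cF)$, every $h$-set lies in at most $2\theta_{j-h}$ sets of size $j$ — is not destroyed when we pass from $\cF$ to $\cF_{heart}$ or to $\cF_{petals}$. Set $s := \sigma(\cF)$ throughout. The key observation is that both $\cF_{heart} = \pi(\cF \cup \{H\})$ and $\cF_{petals} = \pi(\cF \cup \{S_i \setminus H : i\})$ are obtained from $\cF$ by adding a few new sets and then taking $\pi$ (deleting non-minimal sets). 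Deleting sets can only \emph{decrease} the number of $j$-sets containing a given $h$-set, so the only danger is the \emph{added} sets.

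First I would handle $\cF_{heart}$. The only new set is $H$, which has size $|H| \geq 1$; moreover, since the flower being branched on has petal size $p \geq 1$ and set size $s' \leq k$, we have $|H| = s' - p$. If $|H| > s = \sigma(\cF)$, then adding $H$ does not affect any of the counts relevant to sizes $j \leq s$, so the property defining $\sigma(\cF_{heart}) \geq s$ holds trivially (up to the $\pi$-deletions, which only help). If $|H| \leq s$, I need to check that for each $j \leq s$ and each $h$-set $A$, the set $A$ is still in at most $2\theta_{j-h}$ sets of size $j$ in $\cF_{heart}$. The count can only go up by one — due to $H$ itself — and only when $j = |H|$ and $A \subseteq H$. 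So it suffices to show that in $\cF$, such an $A$ was already in \emph{strictly fewer} than $2\theta_{j-h}$ sets of size $j=|H|$, i.e. at most $2\theta_{j-h}-1$. This is where the factor-of-two slack in the definition of $\sigma$ versus the threshold $\theta$ in the algorithm is used: the branched flower is \emph{good} (has $z \geq \theta_p$ petals of size $p$), which via Observation~\ref{obs:jflower}-type reasoning pins down structural facts about concentration, but more directly, I expect one uses that $\sigma(\cF) = s$ already guarantees the count is $\leq 2\theta_{j-h}$ and the extra $+1$ from $H$ keeps us at $\leq 2\theta_{j-h}$ only if we were at $\leq 2\theta_{j-h}-1$; the point to nail down is that the strict inequality holds, which should follow because $H$ itself was not already a set of $\cF$ of size $|H|$ containing $A$ in the relevant sense, or because the flower structure forces room. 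I would write this out carefully by case analysis on whether $|H| \le s$.

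For $\cF_{petals}$, the new sets are the petals $P_i := S_i \setminus H$, each of size exactly $p$ (the petal size), and there are $z$ of them with $z \geq \theta_p$. Here the subtlety is larger: we are potentially adding \emph{many} sets all of size $p$, so if $p \leq s$ we must check that no $h$-set's count in size-$p$ sets is pushed past $2\theta_{p-h}$. For $h < p$: an $h$-set $A$ is contained in petal $P_i$ only if $A \subseteq S_i$; but the $S_i$ share the common heart $H$, and $A \cap H = \emptyset$ would be needed for $A$ to sit inside a petal while the $S_i$ themselves are size-$s'$ sets — I expect the argument is that if $A$ were contained in more than (roughly) $\theta_{p-h}$ petals, then $A \cup H$ would be the heart of a larger good flower among the $S_i$'s, contradicting that $\mathtt{reduce}$ branched on the \emph{first} good flower in its loop order (sizes $s$ ascending, petal sizes $p$ ascending), so no good flower of smaller size or, at equal size, smaller petal parameter, existed. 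This "minimality of the chosen flower in the loop order" is the crucial ingredient, and it is the step I expect to be the main obstacle: one has to translate "$A$ is in too many petals" into "there is a good flower the algorithm should have found earlier," being careful about the exact loop order and the precise threshold constants $\theta_j = \alpha\beta_j$ so that the factor-2 slack absorbs the $+1$-type losses. For $h = p$ the only new size-$p$ set containing $A$ as a subset is $A$ itself if $A$ equals some petal, contributing at most one, handled as before. Assembling these cases gives $\sigma(\cF_{petals}) \geq s$, and together with the $\cF_{heart}$ bound this proves the lemma.
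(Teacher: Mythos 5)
Your overall plan matches the paper's proof: show the increment caused by the added sets cannot push any $h$-set past the $2\theta_{j-h}$ threshold at any level $j\leq\sigma(\cF)$, and derive a contradiction from the algorithm's loop order if it did. The $\cF_{petals}$ half is sketched correctly (an $h$-set $X$ in too many petals makes $S_1,\ldots,S_z$ admit a sub-flower with heart $H\cup X$ and strictly smaller petal size, which $\mathtt{reduce}$ would have preferred). However, in the $\cF_{heart}$ half there is a genuine gap: you correctly reduce to showing that the old count of size-$|H|$ supersets of $A\subseteq H$ was at most $2\theta_{|H|-h}-1$, but then guess at the reason (``$H$ was not already a set of $\cF$,'' ``the flower structure forces room''), and neither guess is the right one.

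The decisive fact you are missing is this: if the algorithm branches on a good $s'$-flower with petal size $p$, then the only sets ever added have size $|H|=s'-p<s'$ (for $\cF_{heart}$) or $p<s'$ (for $\cF_{petals}$), and because the loop iterates over set sizes in \emph{ascending} order, $\cF$ contains no good $j$-flower for any $j<s'$. Observation~\ref{obs:jflower} then gives that every $h$-set is in at most $\theta_{j-h}-1$ sets of size $j$ in $\cF$, not merely $2\theta_{j-h}$. This much stronger bound is what makes the argument close: for $\cF_{heart}$, $\theta_{j-h}-1+1=\theta_{j-h}\leq 2\theta_{j-h}$ trivially, with room to spare; for $\cF_{petals}$, a violation at level $j$ forces at least $(2\theta_{j-h}+1)-(\theta_{j-h}-1)>\theta_{j-h}$ petals to contain $X$, exactly the count needed to exhibit the contradictory sub-flower. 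The factor-of-two slack in the definition of $\sigma$ is needed only to accommodate the petals case, not the heart case; your write-up conflates these. You should also note that you never need to worry about levels $j\geq s'$, since no set of that size is added.
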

\begin{proof}
	For obtaining a contradiction, suppose that $j=\min \{\sigma(\cF_{heart}),\sigma(\cF_{petals})\}+1 \leq \sigma(\cF)$. By the definition of $\sigma$ this means that a set of size $j$ must have been added, i.e. the picked $s$-flower either has heart-size $j$ or petal-size $j$ so that in either $\cF_{heart}$ or $\cF_{petals}$ some $h$-set is in at least $2\theta_{j-h}$ sets of size $j$. As $j < s$, $\cF$ does not contain a good $j$-flower and thus every $h$-set is contained in at most $\theta_{j-h}-1$ sets of size $j$ in $\cF$. Thus in either $\cF_{heart}$ or $\cF_{petals}$, more than $\theta_{j-h}$ supersets of size $j$ of a fixed $h$-set $X$ are added. For $\cF_{heart}$ only $H$ was added so this is not possible. For $\cF_{petals}$ this means $X$ is contained in at least $l=\theta_{j-h}$ of the petals $S_1\setminus H,\ldots,S_z\setminus H$ that are without loss of generality $S_1\setminus H,\ldots,S_l\setminus H$. But then the set $S_1,\ldots, S_l$ also is an $s$-flower, and it has heart $H\cup X$. This gives a contradiction with $S_1,\ldots,S_z$ being picked as $S_1,\ldots,S_l$ has petal size $j-h$ so it is good since $l= \theta_{j-h}$, and since it has smaller petal-size it would have been preferred by the algorithm.
\end{proof}
By Lemma~\ref{lem:sigma} we do not need to worry about $\sigma$ when analyzing the increase of $\varphi$, as $\sigma$ can only increase (which only increases $\varphi$). We proceed by lower bounding the increase of $\varphi(\cF_{heart})$ compared to $\varphi(\cF)$. Note that in $\cF_{heart}$ an $h$-set is added, and at most $2\theta_{j-h}$ sets of size $j\leq \sigma(\cF)$ are removed due to the $\pi$ operation and the addition of the heart $H$ as $H$ is in at most $2\theta_{j-h}$ sets of size $j\leq \sigma(\cF)$ (note that we do not have to account for $j$-sets with $j > \sigma(\cF)$ as they do not contribute to $\varphi(\cF)$). Therefore we have that
\begin{equation}
\begin{aligned}\label{eq:heartpot}
	\varphi(\cF_{heart}) &\geq \varphi(\cF) + \beta_{k-h+1} - \sum_{j=h+1}^{\sigma(\cF)} 2\theta_{j-h}\beta_{k-j+1}\\
 					  &\geq \varphi(\cF) + \beta_{k-h+1} - 2\alpha k \beta_{k-h} \hfill\ \ \ \ \ \algcomment{$\theta_j =\alpha\beta_j$ and $\beta_j = (4\alpha k)^{j-1}$}\\
      							  &> \varphi(\cF).
\end{aligned}
\end{equation}
Similarly comparing $\cF_{petals}$ with $\cF$, note that the $z$ petals of size $p$ are added and each such set is contained in at most $2\theta_{j-p}$ sets of size $j$. Hence, we obtain
\begin{equation}
\begin{aligned}\label{eq:petalpot}
	\varphi(\cF_{petals}) &\geq \varphi(\cF) + z\left(\beta_{k-p+1}-\sum_{j=p+1}^{\sigma(\cF)}2\theta_{j-p}\beta_{k-j+1}\right)\\
 &\ \ \ \algcomment{$\theta_j =\alpha\beta_j$ and $\beta_j = (4\alpha k)^{j-1}$ }\\
							   &\geq \varphi(\cF) + z(\beta_{k-p+1} - 2\alpha k \beta_{k-p})\\
          &\ \ \  \algcomment{$S_1,\ldots,S_z$ is a good $s$-flower with petal size $p$}\\
							   &\geq \varphi(\cF) + \theta_p\beta_{k-p+1}/2\\
  							   &\geq \varphi(\cF) + \alpha\beta_p\beta_{k-p+1}/2\\
   							   &\geq \varphi(\cF) + \tfrac{1}{2}\alpha\beta_k.
\end{aligned}
\end{equation}

Define $T(m)$ as the number of recursive calls of $\mathtt{reduce}(\cF)$ if $m=\varphi(\cF)$.
By~\eqref{eq:upppot} we have that $T(m)=0$ if $m \geq \beta_k n$ and otherwise by~\eqref{eq:heartpot} and~\eqref{eq:petalpot} we have that
\[
	T(m) \leq T(m+1) + T(m+\tfrac{1}{2}\alpha\beta_k).
\] 
Since $\beta_k n/(\tfrac{1}{2}\alpha\beta_k)=2n/\alpha$, this implies that $T(0)\leq \sum_{i=0}^{2n/\alpha}\binom{\beta_k n}{i} \leq \frac{2n}{\alpha}\binom{\beta_k n}{2n/\alpha}$.
Writing this in terms of the binary entropy function $h(p)=p\lg \tfrac{1}{p}+(1-p)\log\tfrac{1}{1-p}$ by using the inequality $\binom{n}{k} \leq 2^{h(k/n)n}$ we obtain that $T(0)\leq \frac{2 n}{\alpha}2^{\lambda n}$ for
\begin{equation}
\begin{aligned}\label{eq:petalpot2}
	\lambda	&\leq h\left(\frac{2}{\alpha\beta_k}\right)\beta_k & \\
			&\leq \frac{2}{\alpha}\lg(2\alpha\beta_k)  & \hfill\algcomment{Using $h(p)\leq p\lg(4/p)$~\eqref{eq:binstir}}\\
			&\leq \frac{4(k-1)\lg(8\alpha k)}{\alpha}  & \hfill\algcomment{Using $\beta_{k}=(4\alpha k)^{k-1}$}\\
\end{aligned}
\end{equation}
Using $\alpha \approx k\lg(1/\varepsilon)/\varepsilon$ we have that $\lambda\leq  2\varepsilon$ (for small enough $\varepsilon$) and we may pick $d$ to be
\[
 \theta_{k-1} = \alpha(4\alpha k^2)^{k-2} \leq (4\alpha k^2)^{k-1} = \left(\frac{4k^2 \lg(1/\varepsilon)}{\varepsilon}\right)^{k-1}.
\]
This finishes the proof of Lemma~\ref{lem:sparsify}.

\section{\textsc{Set Cover} and its Special Cases}\label{sec:setcov}
Much recent progress has been made on variants of the following well-known \nptime-complete problems:

\defproblemu{\textsc{Set Cover}}
{A universe $U$, a set system $\mathcal{S} \subseteq 2^U$ with $|U|=n$, and an integer $k$.}
{Are there sets $S_1,\ldots,S_k \in \mathcal{S}$ such that $\cup_{i=1}^ k S_i = U$?}

\defproblemu{\textsc{Set Partition}}
{A universe $U$, a set system $\mathcal{S} \subseteq 2^U$ with $|U|=n$, and an integer $k$.}
{Are there \underline{disjoint} sets $S_1,\ldots,S_k \in \mathcal{S}$ such that $\cup_{i=1}^ k S_i = U$?}

\subsection{\textsc{Set Cover} in $2^{n}n^{O(1)}$ time with Yates's algorithm and Inclusion/Exclusion}
We will use boldface font to indicate that a variable is a matrix or a vector. For an integer $s$, we let $\mathbf{I}_s$ denote the identity matrix with dimensions equal to $s$.
Fix a field $\mathbb{F}$. Given matrices $\mathbf{A}\in \mathbb{F}^{R\times C}, \mathbf{B} \in \mathbb{F}^{R_2\times C_2}$, we define the \emph{Kronecker product} $\mathbf{A} \otimes \mathbf{B}$ as the matrix whose rows and columns are indexed with $R \times R_2$ and $C \times C_2$ with entry $(\mathbf{A} \otimes \mathbf{B})[(r,r_2),(c,c_2)] = \mathbf{A}[r,c]\cdot\mathbf{B}[r_2,c_2]$.
The fact that the Kronecker product and normal matrix product distribute is often called the \emph{mixed product property}:
\[
    (\mathbf{A} \otimes \mathbf{B})(\mathbf{C} \otimes \mathbf{D})=\mathbf{A}\mathbf{C}\otimes\mathbf{B}\mathbf{D}.
\]
The \emph{$n$th Kronecker power $\mathbf{A}^{\otimes n}$} is the product of $n$ copies of $\mathbf{A}$, so the matrix that has its rows indexed by $R^n$ and its columns indexed by $C^n$ and entries defined by
\[
	\mathbf{A}^{\otimes n}[r_1,\ldots,r_n,c_1,\ldots,c_n]=\prod_{i=1}^n \mathbf{A}[r_i,c_i].
\]
Kronecker powers will be useful for us by virtue of the following lemma:
\begin{lemma}[Yates' Algorithm~\cite{yates1937design}]\label{lem:yates}
	Let $\mathbf{A} \in \mathbb{F}^{ R \times C}$, $n$ be an integer and $\mathbf{v} \in \mathbb{F}^{C^n}$ given as input. Then $\mathbf{A}^{\otimes n}\mathbf{v}$ can be computed in $O(\max\{|R|^{n+2},|C|^{n+2}\})$ time.
\end{lemma}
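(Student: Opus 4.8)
The plan is to factor the Kronecker power $\mathbf{A}^{\otimes n}$, by repeated use of the mixed product property, into a product of $n$ matrices each acting non-trivially on only one ``coordinate'', and then apply these factors to $\mathbf{v}$ one after another. Concretely, I would first establish by induction on $n$ the factorization
\[
	\mathbf{A}^{\otimes n} \;=\; \mathbf{M}_1\mathbf{M}_2\cdots\mathbf{M}_n, \qquad\text{where}\qquad \mathbf{M}_i \;:=\; \mathbf{I}_{|C|^{\,i-1}} \otimes \mathbf{A} \otimes \mathbf{I}_{|R|^{\,n-i}}.
\]
For the inductive step one writes $\mathbf{A}^{\otimes n} = \mathbf{A}\otimes\mathbf{A}^{\otimes(n-1)} = (\mathbf{A}\otimes\mathbf{I}_{|R|^{\,n-1}})(\mathbf{I}_{|C|}\otimes\mathbf{A}^{\otimes(n-1)})$ using the mixed product property, and then expands $\mathbf{I}_{|C|}\otimes\mathbf{A}^{\otimes(n-1)}$ via the induction hypothesis together with $\mathbf{I}_{|C|}\otimes(\mathbf{X}\mathbf{Y}) = (\mathbf{I}_{|C|}\otimes\mathbf{X})(\mathbf{I}_{|C|}\otimes\mathbf{Y})$, which is again an instance of the mixed product property.

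The structural point making this useful is that $\mathbf{M}_i$ has rows indexed by $C^{\,i-1}\times R\times R^{\,n-i}$ and columns indexed by $C^{\,i-1}\times C\times R^{\,n-i}$. Hence if $\mathbf{v}^{(n)} := \mathbf{v}\in\mathbb{F}^{C^n}$ and we set $\mathbf{v}^{(j-1)} := \mathbf{M}_j\,\mathbf{v}^{(j)}$ for $j=n,n-1,\ldots,1$, then each $\mathbf{v}^{(k)}$ lies in $\mathbb{F}^{C^k\times R^{\,n-k}}$ and the final vector is $\mathbf{v}^{(0)} = \mathbf{A}^{\otimes n}\mathbf{v}\in\mathbb{F}^{R^n}$. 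Multiplying $\mathbf{v}^{(j)}$ by $\mathbf{M}_j$ amounts to the following: for each of the $|C|^{\,j-1}|R|^{\,n-j}$ choices of coordinates other than the $j$-th, multiply the corresponding length-$|C|$ slice of $\mathbf{v}^{(j)}$ by the $|R|\times|C|$ matrix $\mathbf{A}$, at cost $O(|R|\,|C|)$ arithmetic operations; so step $j$ runs in $O(|C|^{\,j}|R|^{\,n-j+1})$ time.

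Summing over $j = 1,\ldots,n$, the total running time is $\sum_{j=1}^{n} O(|C|^{\,j}|R|^{\,n+1-j})$. When $|R|\neq|C|$ this is a geometric series whose sum is $O(|C|\cdot|R|^{\,n+1})$ if $|R|>|C|$ and $O(|R|\cdot|C|^{\,n+1})$ if $|R|<|C|$, hence $O(\max\{|R|^{\,n+2},|C|^{\,n+2}\})$; when $|R|=|C|$ the sum is $n\,|R|^{\,n+1}$, matching the stated bound up to the factor $n$ (immaterial in the applications, where $|R|,|C|=O(1)$). Reading, allocating and writing each intermediate vector $\mathbf{v}^{(j)}$, all of length at most $\max\{|R|,|C|\}^{\,n}$, fits within the same budget. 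I do not anticipate a genuine obstacle: the whole argument is the bookkeeping in the factorization — in particular tracking that the coordinates already processed have switched from type $C$ to type $R$, which is exactly why the identity blocks in $\mathbf{M}_i$ are asymmetric ($\mathbf{I}_{|C|^{\,i-1}}$ versus $\mathbf{I}_{|R|^{\,n-i}}$) — plus the elementary estimate of the resulting sum. This is the same divide-and-conquer schema underlying the Fast Fourier Transform and the fast zeta/subset-sum transforms.
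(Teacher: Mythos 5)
Your proof is essentially identical to the paper's: both factor $\mathbf{A}^{\otimes n}$ via the mixed product property into $n$ single-coordinate factors $\mathbf{I}\otimes\mathbf{A}\otimes\mathbf{I}$ and apply them one at a time to the vector, with step $j$ costing $O(|C|^{j}|R|^{n-j+1})$ arithmetic operations. Your bookkeeping is in fact a little more careful than the paper's displayed formula (which has the $\mathbf{I}_r$ and $\mathbf{I}_c$ blocks swapped, making $\mathbf{A}^{[n-1]}\cdots\mathbf{A}^{[0]}\mathbf{v}$ dimensionally inconsistent as printed), and you correctly flag the extra factor of $n$ when $|R|=|C|$, which the lemma statement absorbs into its $O(\cdot)$ and which is harmless in the $O^*$ applications.
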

\begin{proof}
Let $r=|R|$ and $c=|C|$.
The lemma is proved by a simple Fast Fourier Transform style procedure. We follow the presentation of \cite[Section 3.1]{Kaski18}. Observe that by the mixed product property it follows that $\mathbf{A}^{\otimes n} = \mathbf{A}^{[n-1]}\mathbf{A}^{[n-2]}\cdots\mathbf{A}^{[0]}$, where
\begin{equation}
\label{yates}
    \mathbf{A}^{[\ell]} = \underbrace{\mathbf{I}_r \otimes \mathbf{I}_r \otimes \ldots \otimes \mathbf{I}_r}_{n-\ell-1}\otimes\; \mathbf{A} \otimes \underbrace{\mathbf{I}_c \otimes \mathbf{I}_c \otimes \ldots \otimes \mathbf{I}_c}_{\ell}.
\end{equation}
Now the algorithm that computes $\mathbf{A}^{\otimes n}\mathbf{v}$ na\"ively evaluates the following expression:
\[
	\mathbf{A}^{[n-1]}\left(\mathbf{A}^{[n-1]}\left(\ldots\mathbf{A}^{[1]}\left(\mathbf{A}^{[0]}\mathbf{v}\right)\right)\right).
\]
The runtime bound follows from the observation that the number of arithmetic operations needed to multiply $\mathbf{A}^{[\ell]}$ with a vector is at most $O(r^{n-\ell}c^{\ell+1})$.
\end{proof}
We consider vectors indexed by the power set $2^U$ (in lexicographical order), and the matrices
\[
\mathbf{Z} = \left(\begin{tabular}{cc} 
1&0\\
1&1
\end{tabular}\right) \qquad \text{and \qquad}
\mathbf{M} = \left(\begin{tabular}{cc} 
1&0\\
-1&1
\end{tabular}\right)
\]
will play a crucial role.
In particular, if $\mathbf{v}$ is indexed by $2^{U}$ then $\mathbf{Z}^{\otimes n}\mathbf{v}$ is the so-called \emph{zeta} transform of $\mathbf{v}$: it satisfies $(\mathbf{Z}^{\otimes n}\mathbf{v})[Y]=\sum_{X \subseteq Y}\mathbf{v}[X]$ for any $Y$. Similarly, the \emph{M\"obius} transform $(\mathbf{M}^{\otimes n}\mathbf{v})[Y]=\sum_{X \subseteq Y}(-1)^ {|Y\setminus X|}\mathbf{v}[X]$ for any $Y$. Since $\mathbf{M}\mathbf{Z}=\mathbf{I}_2$ we have by the mixed product property that $\mathbf{M}^{\otimes k}$ and $\mathbf{Z}^{\otimes k}$ are inverses of each other for each $k$.\footnote{The zeta and M\"obius transformations typically are more generally defined in the context of an arbitrary partial order set; we restrict our attention to these transformations in the special case of the subset lattice.}

\begin{theorem}[\cite{DBLP:journals/siamcomp/BjorklundHK09}]\label{thmsc}
\textsc{Set Cover} can be solved in time $2^nn^{O(1)}$.
\end{theorem}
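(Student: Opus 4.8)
The plan is to count, for each candidate ``number of sets used'' and more importantly for each way of covering, via an inclusion-exclusion over subsets of the universe. Concretely, define for a subset $W \subseteq U$ the quantity $a(W)$ to be the number of sets $S \in \mathcal{S}$ that avoid $W$, i.e. $S \subseteq U \setminus W$. Then the number of ordered $k$-tuples $(S_1,\ldots,S_k) \in \mathcal{S}^k$ whose union avoids $W$ is exactly $a(W)^k$. By the principle of inclusion-exclusion, the number $c_k$ of ordered $k$-tuples whose union is \emph{all} of $U$ equals
\[
    c_k = \sum_{W \subseteq U} (-1)^{|W|} a(W)^k.
\]
\textsc{Set Cover} is then a \yes-instance if and only if $c_k \neq 0$ (working over $\mathbb{Z}$, or one can argue $c_k > 0$ directly since it literally counts tuples; a cleaner route is to note $c_k$ counts surjective-union tuples and so it is a nonnegative integer that is positive iff a cover exists). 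This reduces the problem to computing all the values $a(W)$ for $W \subseteq U$ efficiently and then evaluating the alternating sum.

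The key observation for computing $a(W)$ is that $a(W)$ is a zeta-type transform of a simple vector. Let $\mathbf{b} \in \mathbb{Z}^{2^U}$ be defined by $\mathbf{b}[X] = |\{ S \in \mathcal{S} : S = U \setminus X\}|$, the number of sets whose complement is exactly $X$; this vector can be filled in $O^*(1)$ time per set of $\mathcal{S}$, hence in $O^*(2^n)$ time overall (and it has at most $|\mathcal{S}|$ nonzero entries). Then $a(W)$, the number of sets contained in $U \setminus W$, equals $\sum_{X \supseteq W} \mathbf{b}[X]$, i.e. the ``up-set'' zeta transform of $\mathbf{b}$. This is computed by applying $\mathbf{Z}^{\otimes n}$ — after reindexing by complements, or equivalently applying the transpose/flipped version of $\mathbf{Z}$ — so by Lemma~\ref{lem:yates} with $|R|=|C|=2$ the whole vector $(a(W))_{W \subseteq U}$ is obtained in $2^{n} n^{O(1)}$ time.

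With the vector $(a(W))_W$ in hand, the remaining work is a single pass: compute $c_k = \sum_{W} (-1)^{|W|} a(W)^k$ in $O^*(2^n)$ time (each $a(W)$ is an integer bounded by $|\mathcal{S}| \leq 2^n$, so each power $a(W)^k$ has $O^*(1)$ bit-length after noting $k \leq n$, and all arithmetic is polynomial-time in the encoding length of the instance). Output \yes iff $c_k \neq 0$. Summing the three phases — building $\mathbf{b}$, running Yates' algorithm, and evaluating the inclusion-exclusion sum — gives total time $2^n n^{O(1)}$, as claimed.

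The step I expect to require the most care is the correctness of the inclusion-exclusion identity together with the claim that $c_k \neq 0$ is the right test: one must be careful that $c_k$ counts \emph{ordered} tuples (repetitions allowed) whose union is all of $U$, verify the sign-reversing argument $\sum_{W \subseteq U}(-1)^{|W|} a(W)^k = \#\{(S_1,\dots,S_k) : \bigcup_i S_i = U\}$ holds by swapping the order of summation and using $\sum_{W \subseteq Z}(-1)^{|W|} = [Z = \emptyset]$ applied to $Z = U \setminus \bigcup_i S_i$, and confirm this count is strictly positive exactly when a cover with at most $k$ sets exists (using that we may repeat sets in a tuple, a cover with fewer than $k$ sets still yields a valid $k$-tuple by repeating one of them). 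The transform bookkeeping — making sure we apply the correct orientation of $\mathbf{Z}$ so that we sum over supersets rather than subsets — is routine but also deserves a sentence of justification.
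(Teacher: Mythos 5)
Your proposal is correct and is essentially the paper's proof. Writing $a(W)=\mathbf{z}[U\setminus W]$ in the paper's notation, your alternating sum $c_k=\sum_{W\subseteq U}(-1)^{|W|}a(W)^k$ is exactly $(\mathbf{M}^{\otimes n}\mathbf{z'})[U]$, so the three phases (zeta transform via Yates' algorithm, pointwise $k$th power, inclusion--exclusion) coincide; the only cosmetic differences are that you index by complements and evaluate the final Möbius step at the single entry $U$ directly rather than computing the full Möbius transform.
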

\begin{proof}
Let $\mathbf{v}$ be the indicator vector of $\mathcal{S}$ (i.e. $\mathbf{v}[X]=1$ if $X \in \mathcal{S}$ and $\mathbf{v}[X]=0$ otherwise).
\begin{enumerate}
    \item Compute $\mathbf{z}=\mathbf{Z}^{\otimes n}\mathbf{v}$
    \item Compute the vector $\mathbf{z'}$ defined as $\mathbf{z'}[X]=(\mathbf{z}[X])^k$
    \item Compute $\mathbf{M}^{\otimes n}\mathbf{z'}$, and output\true if and only if $(\mathbf{M}^{\otimes n}\mathbf{z'})[U]>0$.
\end{enumerate}
Using Lemma~\ref{lem:yates} for Step 1. and Step 3. leads to a $2^{n}n^{O(1)}$ time algorithm.
To see correctness, let us define another vector $\mathbf{s}$ indexed by $2^U$ as follows. For every $X \subseteq U$ we let
\[
 \mathbf{s}[X]= \left|\left\{ (S_1,\ldots,S_k) \in \mathcal{S}^k : \bigcup_{i=1}^ k S_i = X \right\}\right|,
\]
and note that the instance of \textsc{Set Cover} is a yes-instance precisely when $\mathbf{s}[U]>0$.
Observe that for any $X\subseteq U$ we have that
\[
\begin{aligned}
	\mathbf{z'}[X] &=\left|\left\{ (S_1,\ldots,S_k) \in \mathcal{S}^k : S_i \subseteq X \text{ for all } i \in\{1,\ldots,k\}\right\}\right| \\
	&= \left|\left\{ (S_1,\ldots,S_k) \in \mathcal{S}^k : \bigcup_{i=1}^ k S_i \subseteq X \right\}\right|  = (\mathbf{Z}^{\otimes n}\mathbf{s})[X].
\end{aligned}
\]
Hence, $(\mathbf{M}^{\otimes n}\mathbf{z'})[U]=(\mathbf{M}^{\otimes n}\mathbf{Z}^{\otimes n}\mathbf{z'})[U]=\mathbf{s}[U]$, and the correctness follows.
\end{proof}

\subsection{Avoiding Computation over the Whole Subset Lattice}
For a set family $\cF$ we denote $\dc\cF:= \{ X : X \subseteq F, F \in \cF\}$ and $\uc\cF:= \{ X: F \subseteq X, F \in \cF\}$. Often $\dc\cF$ is called the \emph{down-closure} of $\cF$ and $\uc\cF$ is called the \emph{up-closure} of $\cF$.
For a vector $\mathbf{v}$, we let $\supp(\mathbf{v})$ denote the set of indices of $\mathbf{v}$ at which $\mathbf{v}$ has a non-zero value.
Revisiting the proof of Theorem~\ref{thmsc} and in particular Lemma~\ref{lem:yates} it can be observed that, since both $\mathbf{Z}$ and $\mathbf{M}$ are lower-triangular, $\uc\supp(\mathbf{A}^{[\ell]}\mathbf{v}) \subseteq \uc\supp(\mathbf{v})$ in~\eqref{yates} for the special case that $\mathbf{A}=\mathbf{Z}$ or $\mathbf{A}=\mathbf{M}$. Restricting the algorithm from the proof of Lemma~\ref{lem:yates} to only compute entries indexed by $\uc\supp(\mathbf{v})$ therefore directly leads to an algorithm that computes $\mathbf{M}^{\otimes n}\mathbf{v}$ and $\mathbf{Z}^{\otimes n}\mathbf{v}$ in time $|\uc\supp(\mathbf{v})| n^{O(1)}$.
\begin{lemma}[\cite{DBLP:journals/corr/abs-0802-2834}]
\label{lem:trimmob}
\textsc{Set Cover} can be solved in time $|\uc\mathcal{S}|n^{O(1)}$.
\end{lemma}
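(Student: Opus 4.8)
The plan is to run verbatim the three-step algorithm from the proof of Theorem~\ref{thmsc} --- zeta transform, entrywise $k$-th power, M\"obius transform, then test the $U$-entry --- but to execute the two transforms with the \emph{trimmed} version of Yates' algorithm described just before the lemma: store each vector in a dictionary keyed by subsets (paying only $n^{O(1)}$ overhead per access) and, at every pass of~\eqref{yates}, only touch entries whose index lies in $\Lambda := \uc\mathcal{S}$. The one thing that needs checking is that this restriction is lossless, i.e.\ that the supports of all intermediate vectors, across all three steps, stay inside $\Lambda$; once that is established the running time is immediate.

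First I would note that $\supp(\mathbf{v})=\mathcal{S}$, so $\uc\supp(\mathbf{v})=\Lambda$, and since $\mathbf{Z}$ is lower-triangular (as already observed, $\uc\supp(\mathbf{A}^{[\ell]}\mathbf{w})\subseteq\uc\supp(\mathbf{w})$ for $\mathbf{A}\in\{\mathbf{Z},\mathbf{M}\}$), every intermediate vector produced while computing $\mathbf{z}=\mathbf{Z}^{\otimes n}\mathbf{v}$ has support contained in $\Lambda$; hence Step~1 costs $|\Lambda|\,n^{O(1)}$ and $\supp(\mathbf{z})\subseteq\Lambda$. Step~2 replaces $\mathbf{z}[X]$ by $\mathbf{z}[X]^k$; because $x\mapsto x^k$ sends $0$ to $0$ and a nonzero integer to a nonzero integer, $\supp(\mathbf{z}')=\supp(\mathbf{z})\subseteq\Lambda$, and the step costs merely $O(|\Lambda|)$ since it only iterates over stored entries. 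Step~3 then computes $\mathbf{M}^{\otimes n}\mathbf{z}'$; as $\mathbf{M}$ is also lower-triangular and $\uc\supp(\mathbf{z}')\subseteq\uc\uc\mathcal{S}=\Lambda$, all intermediate vectors here too are supported on $\Lambda$, so this step costs $|\Lambda|\,n^{O(1)}$. Finally $U\in\Lambda$ (every $S\in\mathcal{S}$ has $S\subseteq U$; the trivial cases $\mathcal{S}=\emptyset$ and $k=0$ are handled separately), so the entry $(\mathbf{M}^{\otimes n}\mathbf{z}')[U]$ is among those computed, it equals $\mathbf{s}[U]$ by the calculation in the proof of Theorem~\ref{thmsc}, and we answer \yes iff it is positive. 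Adding the three contributions gives total time $|\uc\mathcal{S}|\,n^{O(1)}$.

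I expect the only real obstacle to be the routine implementation details rather than any new idea: precomputing (or incrementally discovering) the index set $\Lambda=\uc\mathcal{S}$ within the budget $|\Lambda|\,n^{O(1)}$ --- e.g.\ by a breadth-first closure that repeatedly adds, for each discovered set and each element outside it, the corresponding one-element-larger superset, deduplicating via the dictionary --- and checking that operating on a sparse dictionary in place of the dense array indexed by $2^U$ costs only polynomial overhead in each of the $n$ passes of~\eqref{yates}. Everything else reduces to the two facts already in hand: that $\mathbf{Z}$ and $\mathbf{M}$ are lower-triangular (so transforms cannot push support outside an up-closure) and that the entrywise power preserves the zero pattern.
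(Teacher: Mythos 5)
Your proposal is correct and is exactly the paper's intended argument: run the three-step algorithm from Theorem~\ref{thmsc}, with each Yates pass restricted to the up-closed index set $\uc\supp(\mathbf{v})=\uc\mathcal{S}$, using lower-triangularity of $\mathbf{Z}$ and $\mathbf{M}$ to guarantee every intermediate support stays inside that set. You merely spell out a bit more explicitly the behaviour of the entrywise $k$-th power and the degenerate cases ($k=0$, $\mathcal{S}=\emptyset$), which the paper leaves implicit.
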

A well-studied special case of \textsc{Set Cover} (though, with exponentially many sets) is the \textsc{$k$-Coloring} problem as defined in Section~\ref{sec:intro}.  
The \textsc{$k$-Coloring} problem can be thought of an instance of $\textsc{Set Cover}$ where $\mathcal{S}$ is the family of all (inclusion-wise maximal) independent sets of $G$. 
As such, Theorem~\ref{thmsc} gave the first $O^*(2^n)$ time algorithm for \textsc{$k$-Coloring}, presenting a new natural barrier of worst-case complexity after a long line of research.\footnote{Actually, unlike for $k$-\textsc{CNF-Sat} and \text{Set Cover} with sets of size at most $k$, we currently do not even know how to solve \textsc{$k$-Coloring} for $k=7$ in time $O((2-\eps)^n)$ for some constant $\eps>0$. For $k=5,6$, such improved algorithms were only found recently (see~\cite{Zamir21}).}

In the case where $\mathcal{S}$ is defined as the set of maximal independent sets, we have that all sets in $\uc\mathcal{S}$ are dominating sets of $G$. Using Shearer's entropy lemma, it was shown in~\cite{DBLP:journals/corr/abs-0802-2834} that graphs of maximum degree $d$ have at most $(2^{d+1}-1)^{n/(d+1)}$ dominating sets. Therefore Lemma~\ref{lem:trimmob} gives the following result:

\begin{theorem}[\cite{DBLP:journals/corr/abs-0802-2834}]\label{colbd}
For every $d$, \textsc{$k$-Coloring} on graphs of maximum degree $d$ can be solved in time $O^*((2^{d+1}-1)^{n/(d+1)})$.
\end{theorem}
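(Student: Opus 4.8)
The statement is essentially a matter of assembling the pieces already developed, so I will plan it that way. First, recall the reduction of \textsc{$k$-Coloring} to \textsc{Set Cover} in which the set system $\mathcal{S}$ is the family of all inclusion-wise maximal independent sets of $G$ (as noted just above): a proper $k$-coloring exists iff $U=V$ can be covered by $k$ members of $\mathcal{S}$. Second, apply Lemma~\ref{lem:trimmob} to solve this instance in $|\uc\mathcal{S}|\,n^{O(1)}$ time. Third, use the observation already recorded above that every member of $\uc\mathcal{S}$ is a dominating set of $G$, so that $|\uc\mathcal{S}|$ is at most the number of dominating sets of $G$. It therefore only remains to bound the number of dominating sets of a graph of maximum degree $d$ by $(2^{d+1}-1)^{n/(d+1)}$ — exactly the combinatorial fact recorded above — and substituting it into the running time of Lemma~\ref{lem:trimmob} yields the theorem.

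Since that combinatorial fact may simply be invoked, the only real remaining content is its proof, which I would sketch via Shearer's entropy inequality. Let $\mathbf{D}$ be a uniformly random dominating set of $G$, regarded as a random vector in $\{0,1\}^V$, so that the number of dominating sets equals $2^{H(\mathbf{D})}$. For each vertex $v$, the restriction $\mathbf{D}_{N[v]}$ of $\mathbf{D}$ to the closed neighborhood $N[v]$ is never identically zero (otherwise $v$ is not dominated), so its support has size at most $2^{|N[v]|}-1\le 2^{d+1}-1$, whence $H(\mathbf{D}_{N[v]})\le \lg(2^{d+1}-1)$. If $G$ is $d$-regular then every vertex lies in exactly $d+1$ of the sets $\{N[v]\}_{v\in V}$, so Shearer's inequality gives $(d+1)H(\mathbf{D})\le \sum_{v\in V}H(\mathbf{D}_{N[v]})\le n\lg(2^{d+1}-1)$, i.e. the number of dominating sets is at most $(2^{d+1}-1)^{n/(d+1)}$, as desired.

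The step that needs more care — and what I expect to be the main obstacle — is removing the regularity assumption, since then a low-degree vertex lies in fewer than $d+1$ closed neighborhoods and the uniform Shearer bound is too crude. One route is to use the fractional version of Shearer's inequality with a suitable (fractional) cover of $V$ by closed neighborhoods, bounding each term through $H(\mathbf{D}_{N[v]})\le \lg(2^{|N[v]|}-1)\le \tfrac{|N[v]|}{d+1}\lg(2^{d+1}-1)$ via the fact that $x\mapsto \tfrac1x\lg(2^x-1)$ is increasing on $[1,\infty)$; the nontrivial point is verifying that the cover one needs exists for every graph of maximum degree at most $d$. Alternatively one may simply cite the dominating-set count stated above and keep the proof to the three-step assembly, concluding that \textsc{$k$-Coloring} on graphs of maximum degree $d$ is solved in $O^*((2^{d+1}-1)^{n/(d+1)})$ time.
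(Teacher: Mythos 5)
Your three-step assembly — cast \textsc{$k$-Coloring} as \textsc{Set Cover} over maximal independent sets, apply Lemma~\ref{lem:trimmob}, observe that $\uc\mathcal{S}$ consists of dominating sets, and invoke the dominating-set count — is exactly the paper's proof, which likewise treats the count as a cited fact from \cite{DBLP:journals/corr/abs-0802-2834}. As far as the theorem itself is concerned the proposal is correct and takes the same route.

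However, the fallback plan you sketch for the non-regular case would not go through as stated. You need a fractional cover $(\alpha_v)_{v\in V}$ of $V$ by closed neighborhoods with $\sum_{v\in N[u]}\alpha_v\ge 1$ for every $u$ \emph{and} $\sum_v \alpha_v\,|N[v]|\le n$; but $\sum_v\alpha_v|N[v]|=\sum_u\sum_{v\in N[u]}\alpha_v\ge n$ by the cover constraint, so both can hold only if the cover is exact, i.e.\ $\sum_{v\in N[u]}\alpha_v=1$ for all $u$. Such a nonnegative exact fractional cover by closed neighborhoods does not always exist. Take the bull graph: a triangle $abc$ with pendants $d$ at $a$ and $e$ at $b$. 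The exact-cover equations force $\alpha_d=0$ (subtract the equation for $c$ from that for $a$), hence $\alpha_a=1$; symmetrically $\alpha_e=0$ and $\alpha_b=1$; and then the equation for $c$ gives $\alpha_c=-1$, which is infeasible.

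The standard fix is not to insist on an exact cover but to \emph{pad} the family: take every closed neighborhood $N[v]$ once, and for each $u$ additionally include the singleton $\{u\}$ with multiplicity $d+1-|N[u]|$. Every vertex now lies in exactly $d+1$ sets, so Shearer gives
\[
(d+1)\,H(\mathbf{D})\ \le\ \sum_{v\in V}H\!\left(\mathbf{D}_{N[v]}\right)\ +\ \sum_{u\in V}\bigl(d+1-|N[u]|\bigr)\,H(\mathbf{D}_{\{u\}})\ \le\ \sum_{v\in V}\Bigl(\lg\!\bigl(2^{|N[v]|}-1\bigr)+\bigl(d+1-|N[v]|\bigr)\Bigr),
\]
using $H(\mathbf{D}_{\{u\}})\le 1$. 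It remains to check, for each $s\le d+1$, that $\lg(2^{s}-1)+(d+1-s)\le \lg(2^{d+1}-1)$; exponentiating, this reads $2^{d+1-s}(2^{s}-1)=2^{d+1}-2^{d+1-s}\le 2^{d+1}-1$, which holds since $s\le d+1$. Summing over $v$ then yields $H(\mathbf{D})\le \tfrac{n}{d+1}\lg(2^{d+1}-1)$, as required, and notably this argument does not use the monotonicity of $x\mapsto\tfrac1x\lg(2^x-1)$ at all.
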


We will showcase more applications of the idea behind Lemma~\ref{lem:trimmob}, and use the following extension (which is a slight extension of observations made in~\cite{DBLP:journals/corr/abs-0802-2834}):

\begin{lemma}[Crossing the Middle layer]
\label{lem:crossmid}
Let $\mathcal{S}$ be downward closed (i.e. if $S \in \mathcal{S}$ and $S' \subseteq S$, then $S' \in \mathcal{S}$) and let $\mathtt{A_o}$ be an oracle algorithm that determines in $\mathtt{T_o}$ time whether a given set $S \subseteq U$ belong to $\mathcal{S}$.
Then there is an algorithm that takes as input a set family $\mathcal{F} \subseteq \binom{U}{\geq |U|/2}$, and determines whether there exist $S_1,\ldots,S_k \in \mathcal{S}$, $i \in [k]$ and $F \in\mathcal{F}$ such that $\cup_{j=1}^{i-1}S_j \supseteq F$ and $\cup_{j=i}^{k}S_j \supseteq U \setminus F$.

Moreover, for every $\eps >0$ there exists an $\eps'>0$ such that, if $|\dc\mathcal{F}| \leq (2-\eps)^n$, then the algorithm runs in time $O((2-\eps')^n\cdot \mathtt{T_o})$ time.
\end{lemma}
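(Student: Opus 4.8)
The plan is to reduce the question to computing, for every member of $\mathcal{F}$ and of its complement family $\overline{\mathcal{F}}:=\{U\setminus F : F\in\mathcal{F}\}$, the minimum number of sets of $\mathcal{S}$ needed to cover it, truncated at $k$. Write $\mathrm{cov}(X)$ for the least $t$ with $X\subseteq S_1\cup\cdots\cup S_t$ and $S_1,\dots,S_t\in\mathcal{S}$, and $\mathrm{cov}(X)=\infty$ if no such $t$ exists. Since the answer is trivially \false when $\mathcal{S}=\emptyset$, I would assume $\emptyset\in\mathcal{S}$ (which holds because $\mathcal{S}$ is downward closed), and then also $k\le n$, as any coverable set is covered by $\le n$ sets and surplus sets may be taken to be $\emptyset$. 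First I would observe that the sought configuration exists if and only if some $F\in\mathcal{F}$ satisfies $\mathrm{cov}(F)\le k-1$ and $\mathrm{cov}(F)+\mathrm{cov}(U\setminus F)\le k$: from such an $F$ one builds the configuration by taking $i=\mathrm{cov}(F)+1\le k$, covering $F$ by $\mathrm{cov}(F)$ sets and $U\setminus F$ by $\mathrm{cov}(U\setminus F)\le k-i+1$ sets of $\mathcal{S}$ (which may be taken inside $F$, resp.\ $U\setminus F$, as $\mathcal{S}$ is downward closed) and padding with copies of $\emptyset$; the converse is immediate since the prefix covers some $F$ and the suffix covers $U\setminus F$. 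Hence it suffices to compute $\min\{k,\mathrm{cov}(X)\}$ for all $X\in\mathcal{F}\cup\overline{\mathcal{F}}$.

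To do this I would run a ``trimmed'' covering iteration on each of the two downward-closed families $\dc\mathcal{F}$ and $\dc\overline{\mathcal{F}}$ (which contain $\mathcal{F}$ and $\overline{\mathcal{F}}$ respectively and can be materialised by a downward breadth-first search). For a downward-closed family $\mathcal{D}$, let $C_r\colon\mathcal{D}\to\{0,1\}$ indicate the sets coverable by $\le r$ sets of $\mathcal{S}$; then $C_0(Z)=[Z=\emptyset]$, and $C_r(Z)=1$ exactly when some $S\in\mathcal{S}$ with $S\subseteq Z$ has $C_{r-1}(Z\setminus S)=1$. For $Z\in\mathcal{D}$ every set $S$ and $Z\setminus S$ arising here again lies in $\mathcal{D}$, so the step $C_{r-1}\mapsto C_r$ is a union product of two functions supported on $\mathcal{D}$, and by a variant of the argument behind Lemma~\ref{lem:trimmob} — since $\mathbf{Z}$ and $\mathbf{M}$ are lower triangular, computing an entry of the zeta or Möbius transform, together with all of Yates' intermediate entries leading to it (Lemma~\ref{lem:yates}), only references entries indexed by subsets of that index, which stay inside $\mathcal{D}$ — this union product can be evaluated at all indices of $\mathcal{D}$ in $|\mathcal{D}|\,n^{O(1)}$ time (for detection one may work over $\mathbb{Z}$, where no cancellation spoils the nonnegative union product). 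The only access to $\mathcal{S}$ needed is its indicator on $\mathcal{D}$, obtained with $|\mathcal{D}|$ calls to $\mathtt{A_o}$. Iterating $r=1,\dots,k\le n$ thus yields $\min\{k,\mathrm{cov}(X)\}$ for all $X\in\mathcal{D}$ in time $|\mathcal{D}|\,(\mathtt{T_o}+n^{O(1)})$; carrying this out for $\mathcal{D}=\dc\mathcal{F}$ and $\mathcal{D}=\dc\overline{\mathcal{F}}$ and then applying the criterion above gives a correct algorithm running in time $O\big((|\dc\mathcal{F}|+|\dc\overline{\mathcal{F}}|)(\mathtt{T_o}+n^{O(1)})\big)$.

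For the claimed running time I would use that complementation maps $\dc\overline{\mathcal{F}}$ bijectively onto $\uc\mathcal{F}$, so it remains to prove the combinatorial claim that $\mathcal{F}\subseteq\binom{U}{\ge n/2}$ together with $|\dc\mathcal{F}|\le(2-\eps)^n$ forces $|\uc\mathcal{F}|\le(2-\eps')^n$ for some $\eps'=\eps'(\eps)>0$. For this, fix a size $w\ge n/2$: a size-$w$ member of $\uc\mathcal{F}$ is a superset of some $F\in\mathcal{F}$ with $n/2\le|F|\le w$, and for a fixed $F$ of size $f$ there are $\binom{n-f}{w-f}=\binom{n-f}{n-w}\le\binom{\lceil n/2\rceil}{n-w}$ such supersets, as $n-f\le n/2$. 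Hence, with $m:=|\mathcal{F}|\le|\dc\mathcal{F}|$,
\[
\bigl|\uc\mathcal{F}\cap\tbinom{U}{w}\bigr|\ \le\ \min\bigl\{\,\tbinom{n}{w},\ m\cdot\tbinom{\lceil n/2\rceil}{n-w}\,\bigr\}.
\]
Summing over $w$, writing $w=(1-s)n$ with $s\in[0,\tfrac12]$, and using $\binom{n}{w}\le 2^{h(s)n}n^{O(1)}$, $\binom{\lceil n/2\rceil}{n-w}\le 2^{h(2s)n/2}n^{O(1)}$ and $m\le 2^{(1-\delta)n}$ with $\delta:=\lg\tfrac{2}{2-\eps}>0$, I obtain $|\uc\mathcal{F}|\le n^{O(1)}\cdot 2^{n\cdot\max_{s\in[0,1/2]}\phi(s)}$ with $\phi(s):=\min\{\,h(s),\ (1-\delta)+\tfrac12 h(2s)\,\}$. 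The point is that $\phi$ is continuous on the compact interval $[0,\tfrac12]$ and $\phi(s)<1$ everywhere: for $s<\tfrac12$ already $h(s)<1$, while at $s=\tfrac12$ the second argument equals $1-\delta<1$. Therefore $\max_s\phi(s)=1-\delta'$ for some $\delta'>0$, whence $|\uc\mathcal{F}|\le n^{O(1)}2^{(1-\delta')n}\le(2-\eps')^n$ for a suitable $\eps'>0$ and all large $n$, and the total running time is $O\big((2-\eps)^n+(2-\eps')^n\big)(\mathtt{T_o}+n^{O(1)})=O\big((2-\eps'')^n\mathtt{T_o}\big)$ for any $\eps''<\min\{\eps,\eps'\}$. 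I expect this last combinatorial claim to be the main obstacle; the algorithmic part is then a fairly routine combination of Lemmas~\ref{lem:yates} and~\ref{lem:trimmob}.
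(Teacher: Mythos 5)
Your proposal is correct and follows essentially the same route as the paper: the paper abstracts the dynamic programming into a claim about restricting Yates' algorithm to $\dc\mathcal{F}$ (rather than writing out the union-product iteration), and it bounds $|\dc\mathcal{F}'|$ where $\mathcal{F}' = \{U\setminus F : F\in\mathcal{F}\}$ by the same size-stratified minimum-of-two-binomials calculation that you apply to the complement-bijective family $\uc\mathcal{F}$. These are cosmetic rephrasings of the same argument.
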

\begin{proof}[Proof sketch]
By restricting the algorithm in the proof of Lemma~\ref{lem:yates} to only compute table entries indexed by $\dc\mathcal{F}$ we get the following result:
\begin{claim}\label{clmmid}
    In time $\mathtt{T_o}|\dc\mathcal{F}|n^{O(1)}$, we can compute for each $F \in \dc\mathcal{F}$ and each $i \in [k]$ whether there exist sets $S_1,\ldots,S_i \in \mathcal{S}$ such that $\cup_{j=1}^{i}S_j \supseteq F$.   
\end{claim}

First, use Claim~\ref{clmmid} to compute for each $i$ and $F \in \mathcal{F}$ whether there exist $S_1,\ldots,S_i \in \mathcal{S}$ such that $\cup_{j=1}^{i}S_j \supseteq F$. Second, we use Claim~\ref{clmmid} with set family $\mathcal{F}' = \{ U \setminus F : F \in \mathcal{F} \}$ to compute for each $i$ and $F \in \mathcal{F}$ whether there exist $S_{k-i},\ldots,S_{k} \in \mathcal{S}$ such that $\cup_{j=i}^{k}S_j \supseteq U \setminus F$. Now we can output YES if and only if these conditions hold for some $F$ and $i$.

The run time of this algorithm is $(|\dc \mathcal{F}|+|\dc\mathcal{F}'|)\mathtt{T_o}\cdot n^{O(1)}$.
Since $|\dc\mathcal{F}| \leq (2-\eps)^n$ by assumption, it remains to upper bound $|\dc \mathcal{F}'|$. Let $a_\sigma$ be the number of sets in $\dc \mathcal{F}'$ of size $\sigma n$. Since a set in $\dc\mathcal{F}'$ must be a subset of $U \setminus F$ for some $F \in \mathcal{F}$, we have that
\[
 a_\sigma \leq  \sum_{F \in \mathcal{F}} \binom{|U \setminus F|}{\sigma n} \leq (2-\eps)^n\binom{n/2}{\sigma n}.
\]
On the other hand, we trivially have that $a_\sigma \leq \binom{n}{\sigma n}$. Hence, by~\eqref{eq:binstir} and its subsequent remark we have that
\begin{equation}\label{eq:bnd}
    a_{\sigma n} \leq \left( \min \{ 2^{h(\sigma)}, (2-\eps)2^{h(2\sigma)/2}\}  \right)^n.
\end{equation}
Now the lemma follows since $h(\sigma) < 1$ whenever $\sigma \neq \tfrac{1}{2}$, and if $\sigma=\tfrac{1}{2}$ then $h(2\sigma)=0$. More formally and quantitatively, one can use the inequality $h(\tfrac{1}{2}-x)=h(\tfrac{1}{2}+x)\leq 1-x^2$ that holds for all $x \in (0,\tfrac{1}{2})$.
\end{proof}

As we will see below, in several special cases of \textsc{Set Cover}, it is possible to construct a family $\mathcal{F}$ that \emph{witnesses} a solution of the \textsc{Set Cover} instance which is small enough for getting a non-trivially fast algorithm via Lemma~\ref{lem:crossmid}. The first example simply construct $\mathcal{F}$ at random by adding each set of size roughly $n/2$ to it with probability $1/2^{\Omega(n)}$:

\begin{theorem}[\cite{DBLP:conf/esa/Nederlof16}]\label{thm:sclarge}
For every $\sigma >0$, there exists $\eps >0$ such that any instance of \textsc{Set Cover} with $k \geq \sigma n$ can be solved with a Monte Carlo algorithm in $|\mathcal{S}|^{O(1)}(2-\eps)^n$.
\end{theorem}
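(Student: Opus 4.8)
The plan is to apply Lemma~\ref{lem:crossmid} with a randomly constructed witness family $\mathcal{F}$. First I would reduce to the downward-closed setting: replace $\mathcal{S}$ by its down-closure $\dc\mathcal{S}$, which does not change whether $U$ can be covered by $k$ sets (a cover by subsets of members of $\mathcal{S}$ yields a cover by the members themselves, and vice versa), and note that membership in $\dc\mathcal{S}$ can be tested in $|\mathcal{S}|^{O(1)}$ time by checking whether the queried set is contained in some member of $\mathcal{S}$. Thus the oracle time $\mathtt{T_o}$ in Lemma~\ref{lem:crossmid} is polynomial in $|\mathcal{S}|$.

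Next I would build $\mathcal{F}$. Fix $i^\star := \lfloor k/2 \rfloor$. If the instance is a yes-instance, there are sets $S_1,\dots,S_k\in\dc\mathcal{S}$ with $\bigcup_j S_j = U$; set $F^\star := \bigcup_{j=1}^{i^\star} S_j$, so $F^\star \supseteq \bigcup_{j\le i^\star}S_j$ and $U\setminus F^\star \subseteq \bigcup_{j>i^\star}S_j$. The issue is that $F^\star$ need not lie in $\binom{U}{\ge n/2}$. I would fix this by a standard balancing trick: since each $S_j$ has size at most $n$ and $\dc\mathcal{S}$ is downward closed, we may split any oversized $S_j$ so that each prefix union grows by a controlled amount; more simply, because we only need $\bigcup_{j\le i}S_j \supseteq F$ for \emph{some} $i\in[k]$, and the prefix unions $\emptyset = U_0 \subseteq U_1 \subseteq \dots \subseteq U_k = U$ increase monotonically, there is an index $i$ with $|U_i| \ge n/2$; moreover by further splitting each set (legal since $\dc\mathcal{S}$ is downward-closed, and increasing $k$ only up to a constant factor using $k\ge\sigma n$, or alternatively padding with singletons) one may assume the jump $|U_i|-|U_{i-1}|$ is $o(n)$, so we can pick $i$ with $n/2 \le |U_i| \le n/2 + o(n)$, and take $F^\star := U_i$. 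Now sample $\mathcal{F} \subseteq \binom{U}{\ge n/2}$ by including each set of size in $[n/2,\, n/2+o(n)]$ independently with probability $q := c^{-n}$ for a constant $c>1$ to be chosen. By~\eqref{eq:binstir}, the number of candidate sets is $2^{(1+o(1))n}$, so $\mathbb{E}[|\mathcal{F}|] = 2^{(1+o(1))n} q$; choosing $c$ slightly larger than $2$ makes $\mathbb{E}[|\dc\mathcal{F}|] \le \sum_{S\in\mathcal{F}}2^{|S|} \le (2-\eps)^n$ for some $\eps>0$, and by Markov's inequality $|\dc\mathcal{F}| \le (2-\eps/2)^n$ with probability at least, say, $3/4$. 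Conditioned on a yes-instance, $\Pr[F^\star \in \mathcal{F}] = q = c^{-n}$, so repeating the whole procedure $c^n \cdot O(1)$ times drives the failure probability below any constant.

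Finally I would wire these pieces together. On each trial, sample $\mathcal{F}$, and if $|\dc\mathcal{F}| > (2-\eps/2)^n$ discard and resample; otherwise run the algorithm of Lemma~\ref{lem:crossmid} on $(\dc\mathcal{S}, \mathcal{F})$, which runs in time $O((2-\eps')^n \cdot \mathtt{T_o}) = |\mathcal{S}|^{O(1)}(2-\eps')^n$ and reports whether there exist $S_1,\dots,S_k\in\dc\mathcal{S}$, $i\in[k]$, $F\in\mathcal{F}$ with $\bigcup_{j<i}S_j \supseteq F$ and $\bigcup_{j\ge i}S_j \supseteq U\setminus F$ — which by the two conditions above is exactly a certificate that $U$ is $k$-coverable. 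Summing the geometric number $O(c^n)$ of trials with the per-trial time $(2-\eps')^n$, and noting $c$ can be taken close enough to $2$ that $c < 2-\eps'$ (this is the one constraint to check: $c$ governs the sampling overhead and $\eps'$ the Lemma~\ref{lem:crossmid} runtime, and both are functions of the slack we take above $2$, so a single parameter choice makes $c \le 2-\eps'$ with $\eps' > 0$), the total running time is $|\mathcal{S}|^{O(1)}(2-\eps)^n$ for a suitable final $\eps>0$, proving the theorem.

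The main obstacle is precisely the balancing step: ensuring the prefix union $F^\star$ lands in $\binom{U}{\ge n/2}$ with size close to $n/2$, so that the sampling probability $q$ only needs to beat the count $\binom{n}{n/2} = 2^{(1+o(1))n}$ of middle-layer sets rather than the full $2^n$, and simultaneously ensuring $\mathbb{E}[|\dc\mathcal{F}|] < (2-\eps)^n$. Both hinge on the monotonicity of prefix unions together with the freedom to split sets (valid because $\dc\mathcal{S}$ is downward-closed) and the assumption $k \ge \sigma n$, which guarantees enough ``room'' in the index set to interpolate to size $\approx n/2$ without blowing up $k$ by more than a constant; tracking how the constants $\sigma, c, \eps, \eps'$ depend on one another is the part that needs care.
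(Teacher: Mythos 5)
Your proof has a genuine gap in the trial-counting step, and it traces back to missing the one place where the hypothesis $k \geq \sigma n$ actually does real work.

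You construct a \emph{single} witness set $F^\star$ (after a balancing trick to land it in the middle layers), sample $\mathcal{F}$ by including each middle-layer set independently with probability $q = c^{-n}$ for some $c$ close to $2$, and observe that $\Pr[F^\star \in \mathcal{F}] = c^{-n}$. To compensate you propose $c^n$ independent repetitions. But each repetition already costs $(2-\eps')^n \cdot |\mathcal{S}|^{O(1)}$ time from Lemma~\ref{lem:crossmid}, so the total runtime is $\bigl(c(2-\eps')\bigr)^n \cdot |\mathcal{S}|^{O(1)}$. Since $c$ is near $2$, $c(2-\eps')$ is near $4$, not below $2$; the condition ``$c < 2-\eps'$'' you flag as the only constraint to check does not make the \emph{product} $c(2-\eps')$ less than $2$ — for that you would need $c<1$, which is impossible. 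Pushing $c$ down to fix this instead blows up $\mathbb{E}[|\dc\mathcal{F}|]$ and breaks the runtime of Lemma~\ref{lem:crossmid}. So there is no value of $c$ that makes your scheme work: with one witness, either $\mathcal{F}$ is too large or too many trials are needed.

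The missing idea, which is the core of the paper's proof, is that the solution gives rise to \emph{exponentially many} potential witnesses, not one. After reducing to a solution in which every set has size at most $\eps_1 n$ (the paper does this by guessing a large set and recursing; your splitting via $\dc\mathcal{S}$ is a legitimate alternative route), fix disjoint $S'_1,\dots,S'_k$ covering $U$ and consider, for a uniformly random $L\subseteq[k]$, the partial union $\bigcup_{i\in L}S'_i$. A Hoeffding bound shows that with constant probability this union has size in $[n/2,\,(1/2+\eps_2)n]$, so the family $\mathcal{W}$ of valid witness sets has size $\Omega(2^k)=\Omega(2^{\sigma n})$. That is precisely where $k\geq\sigma n$ enters. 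With $p=2^{-\sigma n}$ as the sampling probability, the expected number of witnesses landing in $\mathcal{F}$ is $\Omega(1)$, so a constant number of trials suffices, and $\mathbb{E}[|\mathcal{F}|]\leq 2^{(1-\sigma)n}$ keeps the Markov bound on $|\dc\mathcal{F}|$ in the right ballpark. In short: you correctly identified the reduction to $\dc\mathcal{S}$, the role of Lemma~\ref{lem:crossmid}, and the balancing issue, but you must replace ``one witness, exponentially many trials'' by ``exponentially many witnesses, constantly many trials'', and the Hoeffding argument over random $L\subseteq[k]$ is the mechanism that delivers the many witnesses.
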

\begin{proof}[Proof sketch]
Let $\eps_1$ be a parameter that only depends on $\sigma$ to be set later.
If the solution contains a set of size at least $\eps_1 n$, we can detect this solution in $|\mathcal{S}|^{O(1)}(2-\eps_1)^n$ time by guessing the set and solving the \textsc{Set Cover} instance induced by all elements not in the guessed set with Lemma~\ref{thmsc}. Hence we may assume from now on that all sets of size at least $\eps_1 n$ are not used in a solution.

Now our algorithm constructs $\mathcal{F} \subseteq 2^{U}$ by including in it each subset of $U$ with cardinality at least $n/2$ and at most $(1/2+\eps_2)n$ independently with probability $p=2^{-\sigma n}$, where $\eps_2$ is a parameter that only depends on $\sigma$ and is much smaller than $\eps_1$.
Subsequently it runs Lemma~\ref{lem:crossmid} with the set families $\cF$ and $\mathcal{S}$ and outputs whether it detected $S_1,\ldots,S_k \in \mathcal{S}$, $i \in [k]$ and $F \in\mathcal{F}$ such that $\cup_{j=1}^{i-1}S_j \supseteq F$ and $\cup_{j=i}^{k}S_j \supseteq U \setminus F$.

Ensuring that $\eps_2$ is much smaller than $\sigma$, it is not too hard to show that $\Pr[|\dc\mathcal{F}| \leq 2^{(1-\eps')n}]\geq 0.9$ for some $\eps' >0$ that only depends on $\sigma >0$ with a Markov bound and argument similar to the one used in~\eqref{eq:bnd}. Therefore this algorithm runs in the claimed time.

It remains to analyse the probability that the algorithm is correct. If the algorithm outputs YES, it is clear that the found sets $S_1,\ldots,S_k$ indeed witness a solution to the \textsc{Set Cover} instance. For the other direction, suppose $S_1,\ldots,S_k \subseteq \mathcal{S}$ are such that their union equals $U$. Arbitrarily pick $S'_i \subseteq S_i$, for every $i \in [k]$ such that $\cup_{i=1}^k S'_i=U$ and $S'_i$ and $S'_j$ are disjoint whenever $i\neq j$.
If we pick a random subset $L \subseteq [k]$ obtained by including each element of $[k]$ independently with probability $1/2$ to $L$, then we get by a Hoeffding bound that $\Pr[ n/2 \leq  \sum_{i \in L}|S_i| \leq (\tfrac{1}{2}+\eps_2)n ]\geq \Omega(1)$ since $|S_i| \leq \eps_1 n$ for each $i \in [k]$ and $\eps_1$ is much smaller than $\eps_2$ and $\sigma$. This means that the set
\[
    \mathcal{W} = \left\{ \bigcup_{i \in L} S_i : n/2 \leq  \sum_{i \in L}|S_i| \leq (\tfrac{1}{2}+\eps_2)n \right\}, 
\]
is of size $\Omega(2^k)=\Omega(2^{\sigma n})$, and hence the probability that $\mathcal{F}$ and $\mathcal{W}$ intersect is
\[
    1-(1 - p)^{|\mathcal{W}|} \geq 1-\exp(- p |\mathcal{W}|)=1-\exp(-\Omega(1)) = 1-\Omega(1).
\]
The proof follows since the algorithm clearly outputs YES if $\mathcal{F}$ and $\mathcal{W}$ intersect.
\end{proof}
A more special (but perhaps more natural) case of \textsc{Set Cover} is that where all sets are of bounded size. The following result was achieved by dynamic programming over `relevant' subsets.
\begin{theorem}[\cite{DBLP:conf/iwpec/Koivisto09}]
For every $d$, there exists an $\eps >0$ that only depends on $d$ such that any instance of \textsc{Set Cover} with sets of size at most $d$ can be solved in $O^*((2-\eps)^{n})$ time. 
\end{theorem}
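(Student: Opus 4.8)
The plan is to reduce directly to Theorem~\ref{thm:sclarge}, exploiting that bounded set size forces $k$ to be linear in $n$. Indeed, in any yes-instance the $k$ chosen sets cover all $n$ elements while together containing at most $kd$ elements, so $k \ge n/d$; hence, after the trivial check that rejects instances with $k < \lceil n/d\rceil$, we are always in the regime to which Theorem~\ref{thm:sclarge} applies, and that theorem does the rest.

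Concretely, the steps are: (i) if $k < \lceil n/d \rceil$, output \false in polynomial time; (ii) otherwise run the algorithm of Theorem~\ref{thm:sclarge} with the constant $\sigma := 1/d$, which is admissible because $d$ is fixed and $k \ge n/d = \sigma n$. Theorem~\ref{thm:sclarge} then supplies an $\eps > 0$ depending only on $\sigma$, hence only on $d$, together with a Monte Carlo algorithm running in time $|\mathcal{S}|^{O(1)}(2-\eps)^n$. The only remaining bookkeeping is that, since each set of $\mathcal{S}$ is a subset of $U$ of size at most $d$, we have $|\mathcal{S}| \le \sum_{i=0}^{d}\binom{n}{i} = n^{O(1)}$ for fixed $d$, so the $|\mathcal{S}|^{O(1)}$ factor is polynomial in the input length and is swallowed by the $O^*$, leaving the promised $O^*((2-\eps)^n)$ bound.

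Because this argument relies entirely on Theorem~\ref{thm:sclarge}, there is essentially no new obstacle to overcome: the genuinely hard part was already carried out in proving that theorem (the random sparsification of the witness family $\mathcal{F}$ together with Lemma~\ref{lem:crossmid}). The one real caveat is that Theorem~\ref{thm:sclarge} is randomized, so this route establishes the statement only for Monte Carlo algorithms. The original proof of the cited result is instead a \emph{deterministic} dynamic program over a suitably pruned family of ``relevant'' subsets — those arising as partial unions of a hypothetical solution, where the bound $|S|\le d$ is what keeps this family of size $(2-\eps)^n$ — and if a deterministic guarantee is desired one would present that dynamic program rather than the reduction above.
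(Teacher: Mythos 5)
Your reduction is logically sound and genuinely different from the route the paper refers to. The paper itself does not reprove the Koivisto theorem; it defers to the original paper~\cite{DBLP:conf/iwpec/Koivisto09} and a textbook treatment, where the proof is a \emph{deterministic} dynamic program over a pruned family of ``relevant'' partial unions, and the bounded set size is what keeps that family of size $2^{(1-1/\Omega(d))n}$. Your argument instead observes that bounded set size forces $k \geq n/d$ (after the trivial rejection for $k < \lceil n/d \rceil$, which is correct since $k$ sets of size $\le d$ cover at most $kd$ elements), so one can invoke Theorem~\ref{thm:sclarge} with $\sigma = 1/d$; since for fixed $d$ there are only $\sum_{i\le d}\binom{n}{i} = n^{O(1)}$ distinct sets, the $|\mathcal{S}|^{O(1)}$ factor in that theorem is absorbed into the $O^*$. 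This is clean, and there is no circularity since Theorem~\ref{thm:sclarge} is established independently via Lemma~\ref{lem:crossmid}.

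What your route buys is brevity and re-use of the machinery already in the survey; what it costs is (a) randomization, since Theorem~\ref{thm:sclarge} is only Monte Carlo while Koivisto's algorithm is deterministic — a gap you correctly flag — and (b) a likely worse quantitative $\eps$, since the $\eps$ coming out of the random-$\mathcal{F}$ construction and the Hoeffding bound in Theorem~\ref{thm:sclarge}'s proof will in general be inferior to the explicit $2^{(1-1/\Omega(d))n}$ bound the paper attributes to Koivisto's method. Both give some $\eps > 0$ depending only on $d$, which is what the stated theorem asks for, so as a proof of the literal statement (in the randomized model) your argument works; but it should be presented as a derivation of a weaker (randomized) variant, not as a proof of the cited deterministic result.
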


We skip the proof and instead refer to the original paper~\cite{DBLP:conf/iwpec/Koivisto09} or a textbook treatment in~\cite[Section 3.4]{DBLP:series/txtcs/FominK10}.
In fact, the algorithm of Koivisto achieves a run time of $2^{(1-1/\Omega(d))n}$. This run time is the best known, and somewhat curiously, similar to fastest run time for $k$-CNF-SAT as discussed in Section~\ref{sec:sat}.

\subsection{Application to Bin Packing}
Another example in which the ideas behind Lemma~\ref{lem:crossmid} play a crucial role is a recent algorithm for the following problem. For $w:[n]\rightarrow \mathbb{N}$, we use the short hand notation $W(X)=\sum_{e \in X}w(e)$.

\defproblemu{\textsc{Bin Packing}}{Non-negative integers $w(1),\ldots,w(n),c,k$ }{A partition of $[n]$ into sets $S_1,\ldots,S_k$ such that $w(S_i)=\sum_{e \in S_i}w(e) \leq c$ for all $i$.}

\noindent Indeed, \textsc{Bin Packing} is a special case of \textsc{Set Cover} where $[n]=U$ and $\mathcal{S}$ consists of all subsets $S \subseteq [n]$ such that $w(S)\leq c$. Therefore, Theorem~\ref{thmsc} already solves this problem in $2^{n}n^{O(1)}$ time.
The special structure of the sets created in this instance can however be exploited:
\begin{theorem}[\cite{DBLP:journals/siamcomp/NederlofPSW23}]
\label{bp}
For every $k$ there exists an $\eps >0$ such that \textsc{Bin Packing} can be solved in $O^*((2-\eps)^n)$ time.
\end{theorem}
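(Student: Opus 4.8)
The plan is to view \textsc{Bin Packing} as the special case of \textsc{Set Cover} already noted in the excerpt, namely $U=[n]$ and $\mathcal S=\{S\subseteq[n]:W(S)\le c\}$, and to run the trimmed zeta/M\"obius machinery behind Lemma~\ref{lem:crossmid}. This fits that lemma's hypotheses: $\mathcal S$ is downward closed, and testing $S\in\mathcal S$ is polynomial, so $\mathtt T_o$ is only a polynomial factor. First I would do the routine preprocessing: delete weight-$0$ items (they can be returned to bin $1$ afterwards, which only lowers $n$), reject if $W([n])>kc$, and accept if $k\ge n$ or $c\ge W([n])$. After this all weights are positive integers, so every bin of a solution holds at most $c$ items. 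Since $W([n])\le kc$, fewer than $k^{3}$ items are \emph{heavy} (weight $>c/k^{2}$), so I can afford to branch over the $k^{k^{3}}$ assignments of heavy items to bins; fixing one leaves residual capacities $c_1,\dots,c_k$ and an instance in which every remaining (\emph{light}) item has weight at most $c/k^{2}$ and can be slotted into any bin with $\ge c/k^2$ spare capacity.

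The heart of the argument is a dichotomy on an optimal solution $S_1,\dots,S_k$. If no bin is \emph{giant}, say $|S_j|\le\gamma n$ for all $j$ for a small constant $\gamma=\gamma(k)$, then growing a set $L\subseteq[k]$ one bin at a time, the step at which $|\bigcup_{j\in L}S_j|$ first reaches $n/2$ leaves it in $[n/2,(\tfrac12+\gamma)n]$; hence for a suitable partition $[k]=L\sqcup\bar L$ the set $F=\bigcup_{j\in L}S_j$ is a ``balanced'' witness of exactly the shape Lemma~\ref{lem:crossmid} seeks (the $L$-bins cover $F$, the $\bar L$-bins cover $[n]\setminus F$). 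I would then run Lemma~\ref{lem:crossmid} with a sparse family $\mathcal F\subseteq\binom{[n]}{\ge n/2}$ of candidate witnesses, and I expect engineering $\mathcal F$ to be the main obstacle: it must be guaranteed to contain such an $F$ while satisfying $|\dc\mathcal F|\le(2-\eps)^n$. The naive choice $\mathcal F=\binom{[n]}{[n/2,(\tfrac12+\gamma)n]}$ fails (its down-closure already has $\approx 2^n$ sets), and a uniformly random sparse subfamily as in Theorem~\ref{thm:sclarge} also fails here, since that argument needed \emph{exponentially many} balanced witnesses whereas for constant $k$ there are at most $2^k$. The way out should be to use that, after the heavy-item branch and after also guessing $L$ ($2^k$ options), the witness $F$ equals a \emph{known} set of heavy items together with the (unknown) light items assigned to the $L$-bins, whose total weight is at most the \emph{known} quantity $C_L=\sum_{j\in L}c_j$; intersecting the cardinality window $[n/2,(\tfrac12+\gamma)n]$ with the weight bound $W(F)\le C_L$, and choosing $L$ to keep $C_L$ as small as the light-item flexibility permits, should pull $|\dc\mathcal F|$ below $(2-\eps)^n$. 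Making this quantitatively tight — fixing $\gamma$, the heaviness threshold, and coordinating them with the choice of $L$ — is where the real work lies.

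It remains to treat the complementary \emph{giant-bin} case: if in every optimal solution some bin $S_{j^\star}$ has more than $(\tfrac12+\gamma)n$ items, then $T=[n]\setminus S_{j^\star}$ has fewer than $(\tfrac12-\gamma)n$ items and must be packed by the other $k-1$ bins. Here I would branch over $j^\star$ and, crucially, \emph{not} enumerate all of $T$ (too costly to follow with a second exponential factor, since $\binom{n}{(\tfrac12-\gamma)n}$ is already a base close to $2$), but instead invoke the trimmed-transform machinery once more to decide ``$W(F)\le c$ and $[n]\setminus F$ packs into $k-1$ bins'' with a witness family supported on the sets obtainable by redistributing light items between $S_{j^\star}$ and the other bins, pinning $|T|$ to a fixed value $\tau$ comfortably below $n/2$ so that the relevant down-closure stays within $\binom{[n]}{\le(\tfrac12-\gamma)n}$ and hence has size $\le 2^{h(1/2-\gamma)n}\le(2-\eps)^n$ (using $h(\tfrac12-\gamma)\le 1-\gamma^2$ from the excerpt, for $\eps$ small enough relative to $\gamma$). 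Taking the OR over the $k^{O(k^{3})}$-many sub-runs of the balanced and giant-bin procedures then yields a single algorithm running in $(2-\eps')^n$ time for a new constant $\eps'=\eps'(k)>0$; the remaining bookkeeping is chasing these constants through the entropy estimates~\eqref{eq:binstir}--\eqref{eq:entup}.
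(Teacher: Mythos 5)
Your opening move is the paper's: cast \textsc{Bin Packing} as \textsc{Set Cover} with the downward-closed $\mathcal S=\{S:W(S)\le c\}$ and feed a witness family into Lemma~\ref{lem:crossmid}. You also correctly diagnose why the random-sampling trick from Theorem~\ref{thm:sclarge} is unavailable for constant $k$: there are only $2^k$ candidate cuts $\bigcup_{j\in L}S_j$, far too few to hit a sparse random $\mathcal F$. But from there your proof diverges from the paper's and, as you half-concede yourself, does not close.

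The gap is precisely the ``main obstacle'' you flag. Your proposed repair — bound $W(F)\le C_L$ for a well-chosen $L$, intersect with the cardinality window $[n/2,(\tfrac12+\gamma)n]$, and hope $|\dc\mathcal F|\le(2-\eps)^n$ — fails on the most degenerate inputs. If every weight equals $1$ (or equals a single common value), then for any $L$ covering half the items $C_L$ is about $n/2$, the constraint $W(F)\le C_L$ is just a cardinality constraint, and $\dc\mathcal F$ is essentially all of $\binom{[n]}{\le n/2}$, of size about $2^{n-1}$. No amount of heavy/light bookkeeping or choice of $L$ rescues this, because the obstruction is not the combinatorics of the bins but the arithmetic of $w$: when the weight function has massive additive collisions, the set of $X$ with any fixed sum is huge. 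The same issue undermines the giant-bin branch: you say you would ``invoke the trimmed-transform machinery once more'' with a family of candidate sets $T=[n]\setminus S_{j^\star}$, but you have no small family of such candidates — enumerating subsets of size $\tau<(\tfrac12-\gamma)n$ with $W\le C_{\bar L}$ is again $\approx 2^{h(\tau/n)n}$ when weights collide, and taking down-closures does not help since Lemma~\ref{lem:crossmid} needs $\mathcal F\subseteq\binom{U}{\ge n/2}$, not $\le$.

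What the paper does instead is a win/win on the weight function $w$ itself, not on the solution's shape. It defines the single witness family $\mathcal F=\{X:w(X)=ck/2\}$ (after normalizing to a \emph{tight} instance $w([n])=kc$, which forces every bin to be exactly full). If $|\mathcal F|\le(2-\eps)^n$, Lemma~\ref{lem:crossmid} finishes the job. Otherwise $\beta(w)=\max_v|\{X:w(X)=v\}|$ is huge, and the decisive ingredient — an additive-combinatorics lemma stating that $|w(2^{[n]})|\ge 2^{\eps n}$ forces $\beta(w)\le(2-\eps')^n$ — yields, in the contrapositive, that the number of distinct subset sums is $2^{o(n)}$. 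In that case a pseudo-polynomial dynamic program in time $|w(2^{[n]})|^k n^{O(1)}$ (a $k$-dimensional extension of the \textsc{Subset Sum} DP) beats $(2-\eps)^n$ outright. Your proposal never considers this DP escape valve, nor the anticoncentration statement that makes the dichotomy work, and without something playing their role the ``engineer $\mathcal F$'' step has no way to handle highly collision-prone weight functions. So while your framing and your negative observations are sound, this is not a complete proof sketch: the key lemma is missing, and it is not of the bin-size-dichotomy flavour you are pursuing but an arithmetic fact about $w$.
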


Since the proof of this theorem requires quite a number of (technical) ideas that are beyond the scope of this survey, we make two simplifying assumptions:
\begin{description}
    \item[] \textbf{A1} The instance is \emph{tight} in the sense that $w([n])=k\cdot c$,
    \item[] \textbf{A2} There is a solution $S_1,\ldots,S_k$ and integer $\ell$ such that $\sum_{j=1}^\ell |S_j| = n/2$.
\end{description}
Assumption \textbf{A1} is rather strong and, roughly speaking, in the original paper~\cite{DBLP:journals/siamcomp/NederlofPSW23} it is lifted it by rounding the integers in such a way that the slack of a solution in each bin (i.e. the quantity $c - w(S_i)$) becomes equal to $0$.
Assumption \textbf{A2} is somewhat more mild and can be dealt with in a way that is similar to the proof of Theorem~\ref{thm:sclarge}: Roughly speaking, if there is a solution $S_1,\ldots,S_k$ in which some $S_i$ is very large, we can detect the solution by other means. Otherwise, we can order the sets of the solution as $S_1,\ldots,S_k$ such that $\sum_{j=1}^\ell |S_j| = n/2$.

Given these assumptions, the following simple lemma immediately gives a strong indication towards improvements of the aforementioned $2^nn^{O(1)}$ time algorithm for \textsc{Bin Packing}
\begin{lemma}
    Let $w(1),\ldots,w(n),k,c$ be an instance of \textsc{Bin Packing} satisfying assumptions \textbf{A1} and \textbf{A2}, and let
    \begin{equation}\label{eqf}
        \mathcal{F} = \{ X \subseteq [n] : w(X) = c \cdot k / 2 \}.
    \end{equation}
    For every $\eps >0$ there exists an $\eps'>0$ such that if $|\mathcal{F}| \leq (2-\eps)^n$, then the instance of \textsc{Bin Packing} can be solved in $(2-\eps')^n$ time.
\end{lemma}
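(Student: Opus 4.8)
The plan is to reduce to Lemma~\ref{lem:crossmid}. Put $U = [n]$ and $\mathcal{S} := \{\, S \subseteq [n] : w(S) \le c \,\}$; since the weights are non-negative, $\mathcal{S}$ is downward closed, and testing membership in $\mathcal{S}$ costs only $\mathtt{T_o} = n^{O(1)}$ (just compute $w(S)$). As the set family required by Lemma~\ref{lem:crossmid} we take the restriction of $\mathcal{F}$ to the exact middle layer,
\[
 \mathcal{G} := \mathcal{F} \cap \binom{[n]}{n/2} = \Bigl\{\, X \subseteq [n] : |X| = n/2,\ w(X) = ck/2 \,\Bigr\} \ \subseteq\ \binom{[n]}{\ge n/2};
\]
being a subfamily of $\mathcal{F}$ it has size at most $(2-\eps)^n$, and it can be listed in $O^*((2-\eps)^n)$ time (e.g. by meet-in-the-middle over two halves of $[n]$), comfortably within the bound we are about to establish.

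First I would check that $\mathcal{G}$ witnesses solutions in the sense of Lemma~\ref{lem:crossmid}. Note that A1 forces every bin of every Bin Packing solution to have weight \emph{exactly} $c$: a partition $S_1,\dots,S_k$ with all $w(S_i) \le c$ has $\sum_i w(S_i) = w([n]) = kc$, leaving no slack. Consequently, if the instance is a yes-instance then A2 (after reordering the bins, and in the form in which it is naturally obtained — a uniformly random half of the bins has both the right number $k/2$ of bins and, since no bin is large, the right total size $n/2$) provides a solution with $\sum_{j=1}^{k/2}|S_j| = n/2$. Setting $F^\star := S_1 \cup \dots \cup S_{k/2}$ we get $|F^\star| = n/2$ and $w(F^\star) = (k/2)c = ck/2$, so $F^\star \in \mathcal{G}$, and with $i := k/2 + 1$ we have $\bigcup_{j<i} S_j = F^\star \supseteq F^\star$ and $\bigcup_{j \ge i} S_j = [n]\setminus F^\star \supseteq [n]\setminus F^\star$ — exactly the configuration Lemma~\ref{lem:crossmid} looks for. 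Conversely, if the algorithm reports $S_1,\dots,S_k \in \mathcal{S}$, an index $i$ and $F \in \mathcal{G}$ with those containments, then $\bigcup_{j=1}^k S_j \supseteq F \cup ([n]\setminus F) = [n]$ with each $w(S_j) \le c$, and replacing $S_j$ by $S_j \setminus \bigcup_{j'<j} S_{j'}$ yields a genuine partition of $[n]$ into $k$ parts of weight at most $c$. Thus it suffices to run the algorithm of Lemma~\ref{lem:crossmid} on $(\mathcal{S}, \mathcal{G})$ and echo its answer.

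It remains to verify the promised running time. By Lemma~\ref{lem:crossmid} this algorithm runs in time $O\!\left((2-\eps')^n\,\mathtt{T_o}\right) = O\!\left((2-\eps')^n n^{O(1)}\right)$ as soon as $|\dc\mathcal{G}| \le (2-\eps'')^n$ for some $\eps'' > 0$ (the polynomial factor then being absorbed into a slightly smaller $\eps'$). This is the only point at which the hypothesis $|\mathcal{F}| \le (2-\eps)^n$ is used. Writing $a_\sigma$ for the number of sets of $\dc\mathcal{G}$ of size $\sigma n$, we have $a_\sigma = 0$ for $\sigma > 1/2$, while for $\sigma \le 1/2$ each such set is contained in some $X \in \mathcal{G}$ of size $n/2$, so simultaneously $a_\sigma \le \binom{n}{\sigma n}$ and $a_\sigma \le |\mathcal{G}|\,\binom{n/2}{\sigma n} \le (2-\eps)^n\binom{n/2}{\sigma n}$. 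These are exactly the two bounds combined in~\eqref{eq:bnd} in the proof of Lemma~\ref{lem:crossmid}, and the argument following it (via $h(\tfrac{1}{2}-x) = h(\tfrac{1}{2}+x) \le 1 - x^2$) gives $|\dc\mathcal{G}| = \sum_\sigma a_\sigma \le (2-\eps'')^n$ with $\eps'' > 0$ depending only on $\eps$.

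The step I expect to be the main obstacle is this last one: turning the cardinality hypothesis into the down-closure bound $|\dc\mathcal{G}| \le (2-\eps'')^n$ — it goes through only because $\mathcal{G}$ is confined to a single middle-size layer, and it forces us to redo the entropy estimate underlying~\eqref{eq:bnd}. A secondary subtlety is that the witness $F^\star$ must actually lie in $\mathcal{F}$, i.e. have weight exactly $ck/2$ \emph{as well as} size $n/2$; this is precisely why $\mathcal{F}$ is defined through the middle weight $ck/2$, and if the split index supplied by A2 were an arbitrary $\ell$ rather than $k/2$ one would first have to apply the balancing argument from the proof of Theorem~\ref{thm:sclarge} to make $\ell = k/2$.
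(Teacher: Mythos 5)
Your proposal takes exactly the approach the paper (very tersely) gestures at: reduce to Lemma~\ref{lem:crossmid}. In fact, you silently repair two imprecisions in the paper's one-line sketch: you take $\mathcal{S}=\{S: w(S)\le c\}$ rather than the paper's $\{S : w(S)=c\}$, which is necessary because Lemma~\ref{lem:crossmid} requires $\mathcal{S}$ to be downward closed, and you restrict $\mathcal{F}$ to the middle layer so that the hypothesis $\mathcal{F}\subseteq\binom{U}{\ge n/2}$ of Lemma~\ref{lem:crossmid} is actually met. Your $a_\sigma$ accounting for $|\dc\mathcal{G}|$ is correct and reproduces exactly the estimate in~\eqref{eq:bnd}.

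The issue you flag in your closing paragraph is, however, a real gap, not a secondary subtlety: restricting to $\mathcal{G}=\mathcal{F}\cap\binom{[n]}{n/2}$ forces the witness $F^\star=S_1\cup\dots\cup S_\ell$ supplied by \textbf{A2} to satisfy simultaneously $|F^\star|=n/2$ and $w(F^\star)=ck/2$, i.e. $\ell c = ck/2$, i.e. $\ell = k/2$ — which \textbf{A2} as stated does not provide. The parenthetical you offer (``a uniformly random half of the bins has both the right number $k/2$ of bins and the right total size $n/2$'') is precisely the kind of concentration heuristic that gives both quantities only approximately, not exactly, so it does not close the gap. A clean fix within the current framework would replace $\mathcal{G}$ by sets of size in a narrow band $[n/2, (1/2+\eps_2)n]$ and take $F^\star$ to be the union of a $k/2$-element subfamily of bins chosen to land in that band (which needs the ``no bin is very large'' discussion that precedes \textbf{A2}), or alternatively iterate over $\ell$ with $\mathcal{F}_\ell=\{X : w(X)=\ell c\}$ — but the latter would require the size hypothesis $|\mathcal{F}_\ell|\le(2-\eps)^n$ for all $\ell$, i.e. a bound on $\beta(w)$ rather than only on $|\mathcal{F}_{k/2}|$ as stated. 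So the gap is genuine and you are right to surface it; it is a weakness of the lemma/\textbf{A2} as stated in the survey rather than of your reasoning given those statements, and the paper's own ``proof sketch'' glosses over it entirely.
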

\begin{proof}[Proof sketch]
The set $\cF$ can be enumerated in $(2-\eps)^{n} n^{O(1)}$ time with standard methods	(as also outlined in Section~\ref{sec:sss}). Apply Lemma~\ref{lem:crossmid} with $\mathcal{S}$ consisting of all subsets $S \subseteq [n]$ such that $w(S) = c$, and $\mathcal{F}$ as defined in~\eqref{eqf}.
\end{proof}

But what if $|\mathcal{F}| \geq (2-\eps)^{n}$?
This looks like a rather special case, but its exact structure is not immediately clear.
For further analysis, let us define the \emph{maximum frequency} 
\begin{align*}
\beta(w) &= \max_{v}|\{ X\subseteq \{1,\ldots,n \} : w(X)=v \}|\\
\intertext{Thus, $|\mathcal{F}| \leq \beta(w)$. We also introduce the parameter \emph{number of distinct subset sums}:}\\
    |w(2^{[n]})| &= |\{w(X) : X \subseteq [n]\}|.
\end{align*}
The two parameters will be competing and lead us to a win/win strategy based on which of the two is small enough: It is fairly straightforward to solve \textsc{Bin Packing} in time $|w(2^{[n]})|^kn^{O(1)}$ with a variant of the pseudo-polynomial time algorithm for \textsc{Subset Sum}.\footnote{For $i=1,\ldots,n$ and $(c_1,\ldots,c_k) \in w(2^{[n]})^k$ define a table entry that stores whether there exists a partition $S_1,\ldots,S_k$ of $[i]$ such that $w(S_i)=c_i$.}

To get some intuition about why the parameters are competing, here are two extremal cases:
\[
\begin{aligned}
    (w_a(1),w_a(2),\cdots,w_a(n))&=(0,0,0,\cdots,0)&|w_a(2^{n})| &= 1,&\beta(w_a) &= 2^n,\\
    (w_b(1),w_b(2),\cdots,w_b(n))&=(1,2,4,\cdots,2^{n-1})&|w_b(2^{n})| &= 2^n,&\beta(w_b) &= 1.
\end{aligned}
\]

The following result in additive combinatorics shows that indeed the two parameters are competing and therefore gives the final ingredient for Theorem~\ref{bp}:
\begin{lemma}[\cite{DBLP:journals/siamcomp/NederlofPSW23}]
For every $\eps >0$ there exists an $\eps'>0$ such that $|w(2^{[n]})| \geq 2^{\eps n}$ then $\beta(w) \leq (2-\eps')^n$.
\end{lemma}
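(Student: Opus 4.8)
The plan is to prove the contrapositive-flavored statement directly: assuming $|w(2^{[n]})| \geq 2^{\eps n}$ (many distinct subset sums), show that no value $v$ is attained by too many subsets, i.e. $\beta(w) \leq (2 - \eps')^n$. The natural strategy is a doubling/additive-energy argument. If $v$ is the value with $\beta(w) = |\{X : w(X) = v\}|$, write $\mathcal{A} = \{X \subseteq [n] : w(X) = v\}$ and note that $\mathcal{A}$ is essentially a ``level set'' of a linear functional on the Boolean cube. The key observation is that if $X, Y \in \mathcal{A}$ then $w(X \setminus Y) = w(Y \setminus X)$, so the symmetric-difference structure of $\mathcal{A}$ is highly constrained; in particular the set of symmetric differences $\{X \triangle Y : X, Y \in \mathcal{A}\}$ lives inside the zero-level set of $w$ applied with signs, which forces $\mathcal{A}$ to be ``close to'' a subcube on which $w$ is constant on many coordinates.

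More concretely, I would argue as follows. First I would reduce to the case where $\mathcal{A}$ is large, say $|\mathcal{A}| \geq (2-\eps')^n$, aiming for a contradiction. Consider the coordinates $i \in [n]$ that are ``pivotal'' for $\mathcal{A}$ in the sense that flipping $i$ frequently stays inside the same fiber only when $w(i)$ can be compensated — but this is the wrong framing; instead, look at it from the subset-sum side. Partition $[n]$ and consider the bipartite structure: if $\mathcal{A}$ is large, then for a random split of $[n]$ into halves $L, R$, the projections $\{X \cap L : X \in \mathcal{A}\}$ and $\{X \cap R : X \in \mathcal{A}\}$ are both large, and for each value $u$ in the first projection's image under $w$, the set of $X \cap R$ completing it to value $v$ is a fiber of $w|_R$. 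A counting argument (essentially Cauchy–Schwarz / the fact that fibers partition a set of size $2^{n/2}$) then shows that if $\mathcal{A}$ has density $(2-\eps')^n / 2^n = (1-\delta)^n$ with $\delta \to 0$, then $w$ must take few distinct values on a large sub-structure, and one can bootstrap this to bound $|w(2^{[n]})|$: the number of distinct subset sums is controlled by the number of distinct values on each half, and a large fiber on the whole cube forces a large fiber — hence few values — on at least one half, recursively. Iterating the halving $O(\log(1/\eps))$ times converts ``one huge fiber'' into ``few values everywhere,'' contradicting $|w(2^{[n]})| \geq 2^{\eps n}$.

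I expect the main obstacle to be making the recursive ``halving'' argument quantitatively tight: each step loses constant factors in the exponent, and one must ensure that after $O(\log(1/\eps))$ rounds the accumulated loss still leaves $\eps'$ strictly positive and that the bound $|w(2^{[n]})| \geq 2^{\eps n}$ is genuinely violated rather than merely weakened. An alternative, possibly cleaner route that avoids recursion: use the inequality relating additive energy to the number of distinct sums. The number of ordered pairs $(X,Y) \in (2^{[n]})^2$ with $w(X) = w(Y)$ equals $\sum_v |\{X : w(X)=v\}|^2 \geq \beta(w)^2$ on one hand, and is at most $2^{2n} / |w(2^{[n]})|$ is false in general, but a suitable Plünnecke–Ruzsa-type or Freiman-type statement on the Boolean cube — or simply the observation that $w(X) = w(Y) \iff w(X \setminus Y) = w(Y \setminus X)$ combined with a dimension argument on the linear map $w$ — should give $\beta(w) \cdot |w(2^{[n]})| \leq 2^{(1+o(1))n}$ type control, which immediately yields $\beta(w) \leq 2^{(1-\eps)n} \cdot 2^{o(n)} \leq (2-\eps')^n$. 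The hard part there is identifying the exact combinatorial inequality on the hypercube that delivers the clean product bound; I would search the literature on ``distinct subset sums'' (Erdős-type problems) and on level sets of linear forms over $\{0,1\}^n$ for the precise lemma.
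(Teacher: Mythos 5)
Note first that the survey does not prove this lemma; it is stated with a citation to~\cite{DBLP:journals/siamcomp/NederlofPSW23}, plus the remark that~\cite{Jain2021Anticoncentration} improved the dependence of $\eps'$ on $\eps$ from exponential to polynomial. So there is no in-paper proof to compare against, and your proposal must be judged on its own terms.

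As written it is an outline of two candidate strategies rather than a proof, and both stop at the same place: the key quantitative step is named but not carried out. Your route two rests on a ``product bound'' $\beta(w)\cdot|w(2^{[n]})|\leq 2^{(1+o(1))n}$; you correctly note that the naive energy upper bound fails, but the replacements you gesture at (Pl\"unnecke--Ruzsa, Freiman) do not straightforwardly apply, since those theorems control sumsets $A+B$ of arbitrary \emph{sets}, whereas here you need control of the preimage (fiber) structure of the linear form $X\mapsto w(X)$ restricted to the cube $\{0,1\}^n$ --- which is the province of (inverse) Littlewood--Offord theory, and is indeed what~\cite{Jain2021Anticoncentration} brings to bear. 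Moreover that product bound is strictly stronger than what the lemma claims (the lemma only needs some $\eps'(\eps)>0$, not a uniform $2^{o(n)}$ slack), so even if a version of it is true it needs its own proof rather than a literature pointer. Your route one (recursive halving with Cauchy--Schwarz across the two halves) is closer in spirit to the actual argument in~\cite{DBLP:journals/siamcomp/NederlofPSW23}, and the observation that a density-$(1-\delta)$ fiber on the full cube forces a large fiber on a half is sound; but as written there is no explicit lemma propagating the density from one round to the next, no accounting showing the losses over $O(\log(1/\eps))$ rounds still leave $\eps'>0$, and no endgame converting ``large fibers on all subcubes'' into an actual bound on $|w(2^{[n]})|$. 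You flag all of this yourself, so the gap is an honest one --- but the missing step, turning ``one huge fiber'' into ``few distinct subset sums'', is precisely the content of the lemma and cannot be taken for granted.
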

While the original proof from~\cite{DBLP:journals/siamcomp/NederlofPSW23} gave an $\eps'$ that was exponentially small in $\eps$, this was improved to a polynomial dependency in~\cite{Jain2021Anticoncentration}.

\subsection{Set Cover with Containers}
In this section we provide a new promise version of \textsc{Set Cover} that will be used in the next section to give a proof of Theorem~\ref{thm:kcolzam}. Our proof is different from the original proof~\cite{DBLP:conf/stoc/Zamir23}, but it gives a less general result. The advantage from presentation purposes of our alternative proof is however that we can use general results on \textsc{Set Cover} as a (somewhat natural) black box.

\defproblemu{\textsc{Set Cover with Containers}}{Universe $U$ of cardinality $n$, `container' subsets $C_1,\ldots,C_k \subseteq U$, and an oracle $\mathtt{A_o}$. Oracle $\mathtt{A_o}$ takes a subset of $U$ as input and outputs in $\mathtt{T_o}$ time\true or\false. The promise property is that if there are $S_1,\ldots,S_k$ with $\cup_{i=1}^k S_i = U$ and $\mathtt{A_o}(S_i)=\true$ for each $i\in\{1,\ldots,k\}$, then there are such $S_1,\ldots,S_k$ with the additional property that $S_i \subseteq C_i$ for each $i \in \{1,\ldots,k\}$.}{Are there $S_1,\ldots,S_k$ with $\cup_{i=1}^k S_i = U$ and $\mathtt{A_o}(S_i)=\true$ for each $i\in\{1,\ldots,k\}$?}

\begin{theorem}\label{containedsetcover}
For any $k$, there exist $\eps,\hat{\eps}>0$ such that \textsc{Set cover with containers} $C_1,\ldots,C_k$ satisfying $|C_i| \leq (1/2+\eps)n$ can be solved in $O((2-\hat{\eps})^n\cdot\mathtt{T_o})$ time.
\end{theorem}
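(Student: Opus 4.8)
The plan is to reduce \textsc{Set Cover with Containers} to an ordinary \textsc{Set Cover} instance by exploiting the promise, and then to solve that instance with Lemma~\ref{lem:crossmid}, using a family $\mathcal F$ tailored to the containers so that $|\dc\mathcal F|$ stays below $2^n$.

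\textbf{Eliminating the oracle.} First I would, for each container $C_i$, enumerate all $2^{|C_i|}\le 2^{(1/2+\eps)n}$ of its subsets, querying $\mathtt{A_o}$ on each, to obtain the list of \emph{valid pieces} $\mathcal S_i=\{S\subseteq C_i:\mathtt{A_o}(S)=\true\}$. Let $\mathcal S=\bigcup_i\dc\mathcal S_i$; this family is downward closed, and a polynomial-time membership test for it can be precomputed in $O(k\,2^{(1/2+\eps)n}(\mathtt{T_o}+n))$ total time (inside each $2^{C_i}$ fill a table $\mathtt{ok}_i[S]=\mathtt{A_o}(S)\vee\bigvee_{e\in C_i\setminus S}\mathtt{ok}_i[S\cup\{e\}]$ from large $S$ down to small). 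If every $\mathcal S_i$ is empty we answer NO; otherwise $\emptyset\in\mathcal S$. Now I claim that, by the promise, the instance is a YES-instance iff $U$ can be covered by at most $k$ sets of $\mathcal S$: for ``$\Leftarrow$'' one inflates each set of such a cover to a valid piece (which has $\mathtt{A_o}$-value \true), and for ``$\Rightarrow$'' the promise replaces any solution by one with $S_i\subseteq C_i$, whose sets then all lie in $\mathcal S$. So it remains to solve an ordinary \textsc{Set Cover} instance $(\star)$ on $U$ with family $\mathcal S$, in which \emph{every set has size at most $(1/2+\eps)n$}. (If $k\le 1$ answer NO; assume $k\ge2$.)

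\textbf{Crossing the middle layer.} To solve $(\star)$ I would invoke Lemma~\ref{lem:crossmid} with $\mathcal S$ and with $\mathcal F$ consisting of (a) all valid pieces of size $\ge\lceil n/2\rceil$ and (b) for each not-too-small container $C_i$, all sets $(U\setminus C_i)\cup X$ with $X\subseteq C_i$ and $\lceil n/2\rceil\le|(U\setminus C_i)\cup X|\le\lceil n/2\rceil+\eps n$. The subsets of a valid piece of $C_i$ all lie in $2^{C_i}$, contributing $\le2^{|C_i|}\le2^{(1/2+\eps)n}$ to $\dc\mathcal F$; and a ``padded complement'' of $C_i$ differs from $U\setminus C_i$ in only $O(\eps n)$ elements (because $|U\setminus C_i|\ge(1/2-\eps)n$ is already close to $n/2$), so---after first peeling off any container of size $\le\eps_0 n$ by branching over its $\le 2^{\eps_0 n}$ possible pieces (an overhead of $2^{k\eps_0 n}$, absorbed for $\eps_0$ small enough)---it contributes $\le2^{(1-\eps_0)n+O(\eps\lg(1/\eps))n}$. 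Hence $|\dc\mathcal F|\le(2-\hat\eps)^n$ once $\eps$ is small enough relative to $\eps_0$, so Lemma~\ref{lem:crossmid} runs in $O((2-\eps')^n\mathtt{T_o})$ time. The ``$\Leftarrow$'' half of correctness is immediate: a crossing gives a set $F$ covered by a prefix of the sought sets and $U\setminus F$ covered by the suffix, hence a cover of $U$ by at most $k$ sets.

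\textbf{The obstacle.} What remains---and where I expect essentially all the work to be---is the forward direction: from a cover of $U$ by $\le k$ sets of $\mathcal S$ one must produce a cover, an ordering of its sets, and a split point whose prefix-union lies in $\mathcal F$. If the cover can be reordered so that a size-$\ge n/2$ valid piece comes first, that piece serves as $F$ (placed alone in the prefix, the remaining $k-1$ sets covering its complement). Otherwise every piece used has size $<n/2$, and one instead wants to choose a container $C_c$ and split the cover so that the prefix covers $U\setminus C_c$ while the suffix covers all but $O(\eps n)$ of $C_c$, so that $F$ may be taken of the form $(U\setminus C_c)\cup X$; it is precisely $|C_c|\le(1/2+\eps)n$ that makes such an $F$ have size $\approx n/2$ and keeps $\mathcal F$ small. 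Establishing this for \emph{every} such cover---in particular taming ``balanced'' covers where no single container's complement is covered economically (think of roughly equal pieces lying in pairwise highly-overlapping containers)---is the technical heart of the proof and is where the dependence of $\eps,\hat\eps$ on $k$ enters; I would expect to need either a more elaborate $\mathcal F$ (complements of unions of a bounded number of containers) or an additional recursion on $k$ to close this case.
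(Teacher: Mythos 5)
Your setup is sound: you correctly reduce to a downward-closed family of ``valid pieces'', recognize Lemma~\ref{lem:crossmid} as the engine, and correctly sense that the crux is choosing $\mathcal F$ so that some ordered split of an arbitrary solution is guaranteed to land in $\mathcal F$. You also explicitly flag the forward direction as open and even guess that ``complements of unions of a bounded number of containers'' may be needed. That guess is essentially on target, so the diagnosis of where the difficulty lies is accurate --- but the gap is genuine, and it is precisely the part the paper spends its effort on.

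The paper's construction is cleaner than your type-(b) padded complements: it iterates over all $L\subseteq[k]$, and whenever $|C_L|:=|\cup_{i\in L}C_i|\le(1-\eps')n$ it applies Lemma~\ref{lem:crossmid} with $\mathcal F=\binom{C_L}{\ge n/2}$ (so $\dc\mathcal F\subseteq 2^{C_L}$ and $|\dc\mathcal F|\le 2^{(1-\eps')n}$ automatically, no padding needed). Correctness of the forward direction then reduces to a purely combinatorial statement (Lemma~\ref{lem:annoying}): given a solution with $S_i\subseteq C_i$, there exists $L\subseteq[k]$ with $\sum_{i\in L}|S_i|\ge n/2$ and $|C_L|\le(1-\eps')n$; setting $F=\cup_{i\in L}S_i$, the prefix $\{S_i\}_{i\in L}$ covers $F$ and the suffix covers $U\setminus F$. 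This is exactly the ``balanced'' case you worry about, and the lemma rules it out by a nontrivial argument: replace the weights $|S_i|$ by a bounded-magnitude surrogate via the Frank--Tardos weight-reduction theorem (Lemma~\ref{franktardos}), which guarantees an integer weight vector $w'$ with $\|w'\|_\infty\le 2^{O(k^3)}$ preserving all sign patterns of $\{-1,0,1\}$-combinations; a counting/averaging argument over $B$ then shows that a positive fraction of elements are ``good'', from which the desired $L$ is extracted by pigeonhole. Without this lemma (or a substitute of comparable strength) your correctness argument does not close, so the proposal is not a proof. I'd also note that your type-(b) family $\{(U\setminus C_i)\cup X\}$ aims the split the wrong way around compared to what the above lemma delivers; the argument targets subsets of $C_L$, not complements, and handling the balanced case by peeling small containers and padding would require a separate justification you have not supplied.
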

In order to prove Theorem~\ref{containedsetcover}, we first prove a lemma about set families that is formulated in terms of bipartite graphs in order to use standard graph notation (i.e. $N(v)$ for the neighborhood of a vertex $v$, $d(v)$ for $|N(v)|$ and $N(X)$ for $\cup_{v \in X}N(v)$).
\begin{lemma}
\label{lem:annoying}
For any $k$, there exist $\eps, \eps' >0$ such that the following holds: Let $G=(A \cup B,E)$ be a bipartite graph with $|A|=k$ and $|B|=n$ such that $d(a) \leq (\tfrac{1}{2}+\eps)n$ for every $a \in A$, and let $w:A\rightarrow \mathbb{N}$.
Then there exists an $X \subseteq A$ such that $w(X) \geq w(A)/2$ and $|N(X)|\leq (1-\eps')n$.
\end{lemma}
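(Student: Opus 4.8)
The plan is to reduce the statement to a density claim about a random vertex of $B$, and then to establish that claim from the degree bound $d(a)\le(\tfrac12+\eps)n$ together with the structure of the ``heavy'' subsets of $A$ with respect to $w$. First, two normalizations. If some single $a\in A$ has $w(a)\ge w(A)/2$, then $X=\{a\}$ works, since $|N(X)|=d(a)\le(\tfrac12+\eps)n\le(1-\eps')n$ as soon as $\eps'\le\tfrac12-\eps$; and if $w(A)=0$ any singleton works. So assume from now on that $w(a)<w(A)/2$ for every $a$ and that $w(A)\ge 1$. Write $\bar N(a):=B\setminus N(a)$, so $|\bar N(a)|\ge(\tfrac12-\eps)n$, and for $b\in B$ write $N^{-1}(b):=\{a\in A:b\in N(a)\}$. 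For any $X\subseteq A$ we have $B\setminus N(X)=\bigcap_{a\in X}\bar N(a)=\{b\in B:N^{-1}(b)\cap X=\emptyset\}$. Hence, calling $L\subseteq A$ \emph{light} if $w(L)\le w(A)/2$, it suffices to find a light $L$ with $|\{b:N^{-1}(b)\subseteq L\}|\ge\eps' n$: then $X:=A\setminus L$ has $w(X)\ge w(A)/2$ and $|N(X)|\le n-\eps' n$.

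Since $|A|=k$ there are at most $2^k$ light sets, and every $b$ with $w(N^{-1}(b))\le w(A)/2$ has $N^{-1}(b)\subseteq L$ for at least one light $L$ (namely $L=N^{-1}(b)$). So by averaging it is enough to prove that, provided $\eps$ is small enough as a function of $k$,
\[
 \Pr_{b\in B}\bigl[\,w(N^{-1}(b))\le w(A)/2\,\bigr]\ \ge\ \delta
\]
for some constant $\delta=\delta(k)>0$; one may then take $\eps':=\delta/2^k$, as some single light $L$ is then hit by at least a $2^{-k}$ fraction of the relevant $b$'s.

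Suppose for contradiction that this probability is below $\delta$; writing $T:=N^{-1}(b)$ for uniformly random $b$, with probability more than $1-\delta$ the set $T$ is \emph{strictly heavy}, i.e.\ $w(T)>w(A)/2$. Two strictly heavy sets intersect (their weights sum to more than $w(A)$), so the strictly heavy subsets of $A$ form an intersecting, upward-closed family cut out by the threshold $w(A)/2$; meanwhile the degree hypothesis is exactly that the marginals $\Pr_b[a\in T]=d(a)/n$ are all at most $\tfrac12+\eps$. I would now split on how concentrated $w$ is. If the $\lfloor k/2\rfloor$ heaviest vertices have weight at most $w(A)/2$, then every strictly heavy set has size at least $\lfloor k/2\rfloor+1$, so $\sum_a d(a)=\sum_b|N^{-1}(b)|\ge(\lfloor k/2\rfloor+1)(1-\delta)n$ and hence $\max_a d(a)\ge\tfrac{\lfloor k/2\rfloor+1}{k}(1-\delta)n>(\tfrac12+\eps)n$ once $\delta,\eps$ are small enough in terms of $k$ --- a contradiction. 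In the complementary case some set of at most $\lfloor k/2\rfloor$ vertices already has weight more than $w(A)/2$, and here one instead exhibits the required light $L$ (equivalently, a heavy $X$ with $|\bigcap_{a\in X}\bar N(a)|\ge\eps' n$) directly. This uses the crucial elementary fact that two subsets of $B$ of density at least $\tfrac12-\eps$ whose intersection has density below $\eps'$ must be near-complements of each other, which makes the relation ``$|\bar N(a)\cap\bar N(a')|<\eps' n$'' very rigid: it cannot contain a triangle (a triangle forces $\tfrac12<3(\eps+\eps')$), and a ``star'' in this relation forces the common intersection $\bigcap_{a'}\bar N(a')$ over its leaves to have density bounded below by an inclusion--exclusion estimate; combining this rigidity with the up-closed structure of the heavy family locates the desired $X$.

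In either case the displayed probability is at least $\delta$, and the reduction above produces the required light $L$, hence the required $X$; choosing $\eps$ small in terms of $k$, then $\delta$, then $\eps'=\delta/2^k$ finishes the proof. The step I expect to be the real obstacle is the contradiction of the previous paragraph, and specifically the ``concentrated weight'' case for general $k$: there strictly heavy sets can be small, so one cannot simply count degrees, and the bookkeeping that converts the rigidity of the near-disjointness relation into a single heavy set with large common co-neighborhood is where essentially all of the work sits and where the precise hierarchy $\eps\ll\eps'\ll 1/k$ must be pinned down.
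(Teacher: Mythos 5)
Your reduction matches the skeleton of the paper's argument: reduce the lemma to showing that a $\delta(k)$-fraction of vertices $b\in B$ are ``good'' (i.e.\ $w(N^{-1}(b))\le w(A)/2$), then pigeonhole over the at most $2^k$ light subsets of $A$ to get a single light $L$ with $|\{b:N^{-1}(b)\subseteq L\}|\ge\eps' n$ and set $X=A\setminus L$. However, the way you try to establish the good-fraction claim has a genuine gap, and you say so yourself: the ``concentrated weight'' case is never completed, and the rigidity sketch (triangle-freeness of the near-disjointness relation on $A$, plus structure of the intersecting up-set of heavy subsets) is not developed into an actual argument locating a heavy $X$ with a large common co-neighborhood.

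The paper closes exactly this gap with a tool you do not mention, the Frank--Tardos lemma (Lemma~\ref{franktardos}). Applied to the vector $(w(a))_{a\in A}$, it produces $w':A\to\{0,\dots,W\}$ with $W=2^{O(k^3)}$ that preserves $\mathrm{sign}(w(X)-w(A\setminus X))$ for every $X\subseteq A$, hence preserves exactly which subsets are heavy. The point is quantitative: once the weights are integers of magnitude at most $W$, any bad $b$ (one with $w'(N^{-1}(b))>w'(A)/2$) actually satisfies $w'(N^{-1}(b))\ge w'(A)/2+\tfrac12$, so the excess over the threshold is at least $\tfrac12$, which as a fraction of $w'(A)\le kW$ is at least $1/(2kW)$, a quantity depending only on $k$. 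Then the one-line averaging bound $b_{\mathrm{bad}}\bigl(w'(A)/2+\tfrac12\bigr)\le\sum_a w'(a)\,d(a)\le(\tfrac12+\eps)n\,w'(A)$ immediately bounds the number of bad vertices, with no case split at all. Without such a rounding step, a bad $b$ can exceed the threshold $w(A)/2$ by an arbitrarily small relative amount, so this averaging inequality gives no information at all; that is precisely why your approach gets stuck in the concentrated case. Your ``spread-out weights'' case is correct and is essentially the special case of the paper's counting argument where the integer gap is supplied for free by the cardinality lower bound $|T|\ge\lfloor k/2\rfloor+1$, but to handle general $w$ you need a replacement for that gap, and the Frank--Tardos rounding is what supplies it.
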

Before we prove the lemma, we state a lemma that we will need in the proof (curiously, it is not clear to us whether there is a simpler way to prove Lemma~\ref{containedsetcover} without this lemma).
\begin{lemma}[\cite{franktardos}]
\label{franktardos}
For any vector $\mathbf{w}=(w_1,\ldots,w_k) \in \mathbb{Z}^k$,  there is a vector $\mathbf{w'}=(w' _1,\ldots,w'_k) \in \mathbb{Z}^k$ such that $||\mathbf{w'}||_\infty\leq 2^{O(k^3)}$ and $sign(\langle\mathbf{w},\mathbf{x}\rangle)=sign(\langle\mathbf{w'},\mathbf{x}\rangle)$ for each $\mathbf{x}\in\{-1,0,1\}^k$.
\end{lemma}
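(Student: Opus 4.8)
The plan is to prove the lemma from scratch by a polyhedral argument, rather than through the simultaneous Diophantine approximation route of~\cite{franktardos}. The starting point is that for $\mathbf{x}\in\{-1,0,1\}^k$ and $\mathbf{w}\in\mathbb{Z}^k$ the number $\langle\mathbf{w},\mathbf{x}\rangle$ is an integer whose sign is a purely linear-algebraic quantity: setting $Z=\{\mathbf{x}\in\{-1,0,1\}^k:\langle\mathbf{w},\mathbf{x}\rangle=0\}$ and $P=\{\mathbf{x}:\langle\mathbf{w},\mathbf{x}\rangle>0\}$, it suffices to produce an integer vector $\mathbf{w}'$ with small $\|\mathbf{w}'\|_\infty$ such that $\langle\mathbf{w}',\mathbf{x}\rangle=0$ for all $\mathbf{x}\in Z$ and $\langle\mathbf{w}',\mathbf{x}\rangle>0$ for all $\mathbf{x}\in P$ (the case $\langle\mathbf{w},\mathbf{x}\rangle<0$ follows by applying $P$ to $-\mathbf{x}$, and $\mathbf{w}=\mathbf{0}$ is trivial). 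In other words, $\mathbf{w}$ lies in a relatively open face of the central hyperplane arrangement $\{\,\mathbf{u}:\langle\mathbf{u},\mathbf{x}\rangle=0\,\}_{\mathbf{x}\in\{-1,0,1\}^k}$, and we must exhibit a short integer vector in that same face.

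First I would pin down the linear span of the face. Let $\mathbf{M}$ be the $\{-1,0,1\}$-matrix whose rows are the elements of $Z$, and let $L=\ker_{\mathbb{Q}}\mathbf{M}\ni\mathbf{w}$, with $d:=\dim L\le k$. Choosing a maximal independent set of rows of $\mathbf{M}$ and a complementary invertible square block of columns, Cramer's rule produces $d$ linearly independent integer vectors $\mathbf{v}_1,\dots,\mathbf{v}_d\in\ker_{\mathbb{Z}}\mathbf{M}$ spanning $L$ over $\mathbb{Q}$, whose entries are (up to sign) minors of $\mathbf{M}$ of order at most $k$, hence bounded by $k^{k/2+1}=2^{O(k\log k)}$ by Hadamard's inequality. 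Writing $\mathbf{w}=\sum_j y_j\mathbf{v}_j$ with $y\in\mathbb{R}^d$, the requirement on $\mathbf{w}'$ translates to finding a short integer $\tilde y\in\mathbb{Z}^d$ lying in the open cone $\mathcal{C}=\{\,y:\langle\mathbf{a}_{\mathbf{x}},y\rangle>0\text{ for all }\mathbf{x}\in P\,\}$, where $\mathbf{a}_{\mathbf{x}}=(\langle\mathbf{v}_1,\mathbf{x}\rangle,\dots,\langle\mathbf{v}_d,\mathbf{x}\rangle)\in\mathbb{Z}^d$ satisfies $\|\mathbf{a}_{\mathbf{x}}\|_\infty\le k\cdot 2^{O(k\log k)}=2^{O(k\log k)}$; then $\mathbf{w}':=\sum_j\tilde y_j\mathbf{v}_j$ works. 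Note $\mathcal{C}\ni y$ is nonempty, and a short argument shows the $\mathbf{a}_{\mathbf{x}}$ ($\mathbf{x}\in P$) span $\mathbb{R}^d$: if some $z\neq\mathbf{0}$ were orthogonal to all of them, then $\sum_j z_j\mathbf{v}_j$ would be orthogonal to every element of $\{-1,0,1\}^k$ — including all $\mathbf{e}_i$, and all of $Z$ since the $\mathbf{v}_j\in L$ — forcing it to be $\mathbf{0}$, contradicting independence of the $\mathbf{v}_j$.

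Next I would extract a short integer point of $\mathcal{C}$. The polyhedron $Q=\{\,y:\langle\mathbf{a}_{\mathbf{x}},y\rangle\ge 1\text{ for all }\mathbf{x}\in P\,\}$ is nonempty (scale up a point of $\mathcal{C}$) and pointed (its lineality space is $\{0\}$ since the $\mathbf{a}_{\mathbf{x}}$ span), hence has a vertex $y^\ast$, which is the unique solution of $d$ linearly independent tight constraints $\langle\mathbf{a}_{\mathbf{x}},y\rangle=1$. By Cramer's rule $y^\ast$ is rational with common denominator $D=|\det\mathbf{A}'|$ for a $d\times d$ integer matrix $\mathbf{A}'$ with rows among the $\mathbf{a}_{\mathbf{x}}$, so $1\le D\le(\sqrt{d}\cdot 2^{O(k\log k)})^d=2^{O(k^2\log k)}$, and similarly every numerator is $2^{O(k^2\log k)}$; crucially these bounds do not see the $3^k$ possible constraints, only the $d\le k$ used. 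Then $\tilde y:=D\,y^\ast\in\mathbb{Z}^d$ has $\|\tilde y\|_\infty\le 2^{O(k^2\log k)}$ and $\langle\mathbf{a}_{\mathbf{x}},\tilde y\rangle=D\langle\mathbf{a}_{\mathbf{x}},y^\ast\rangle\ge D\ge 1>0$ for every $\mathbf{x}\in P$. Finally $\mathbf{w}':=\sum_j\tilde y_j\mathbf{v}_j\in\mathbb{Z}^k$ satisfies $\langle\mathbf{w}',\mathbf{x}\rangle=0$ for $\mathbf{x}\in Z$ (as $\mathbf{v}_j\in\ker\mathbf{M}$), $\langle\mathbf{w}',\mathbf{x}\rangle=\langle\mathbf{a}_{\mathbf{x}},\tilde y\rangle>0$ for $\mathbf{x}\in P$, and $\|\mathbf{w}'\|_\infty\le d\cdot\|\tilde y\|_\infty\cdot\max_j\|\mathbf{v}_j\|_\infty\le 2^{O(k^2\log k)}\le 2^{O(k^3)}$, as required.

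I expect the main obstacle to be the two places where subdeterminant estimates enter — building the integer basis $\mathbf{v}_1,\dots,\mathbf{v}_d$ of $\ker_{\mathbb{Z}}\mathbf{M}$ with entries controlled purely by $k$, and bounding the numerators and the denominator of the vertex $y^\ast$ — because one must check that these quantities depend on neither the (up to $3^k$) constraints nor on $\|\mathbf{w}\|_\infty$; everything else (the reduction to an open cone, pointedness, clearing denominators) is routine. This route in fact yields the sharper bound $2^{O(k^2\log k)}$, but since only $2^{O(k^3)}$ is claimed we will not optimize. If one prefers to stay closer to the algorithmic spirit of~\cite{franktardos}, the same estimates can be obtained by feeding a suitably scaled copy of $\mathbf{w}$ to the LLL-based simultaneous Diophantine approximation theorem and iterating, but the polyhedral argument above is shorter to state.
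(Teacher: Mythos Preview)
Your proof is correct. The paper does not actually prove this lemma; it simply cites it from~\cite{franktardos} and uses it as a black box in the proof of Lemma~\ref{lem:annoying}. So there is no ``paper's own proof'' to compare against, only the original source.

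Compared with the original Frank--Tardos argument (which is based on simultaneous Diophantine approximation via LLL and is constructive in polynomial time), your route is genuinely different: you treat the problem purely existentially, identifying the relatively open cell of the hyperplane arrangement $\{\langle\mathbf{u},\mathbf{x}\rangle=0\}_{\mathbf{x}\in\{-1,0,1\}^k}$ containing $\mathbf{w}$, building a small integer basis of its linear span from Hadamard-bounded minors, and then extracting a short integer point in the open cone via a vertex of the tightened polyhedron. The two key observations --- that the kernel basis and the vertex $y^\ast$ depend on at most $k$, respectively $d\le k$, constraints rather than on all $3^k$ of them --- are exactly what keeps the bounds independent of $\|\mathbf{w}\|_\infty$ and of the number of sign conditions, and you flag them correctly. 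The spanning argument for the $\mathbf{a}_{\mathbf{x}}$ (needed for pointedness of $Q$) is also sound: any $z$ in their common orthogonal complement yields $\sum_j z_j\mathbf{v}_j$ orthogonal to all of $P\cup(-P)\cup Z=\{-1,0,1\}^k$, hence zero.

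What you gain is a shorter, self-contained, and in fact sharper bound $2^{O(k^2\log k)}$; what you lose relative to~\cite{franktardos} is the polynomial-time algorithmic content. For the paper's application only existence with a bound of shape $2^{\mathrm{poly}(k)}$ is needed, so either route suffices.
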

\begin{proof}[Proof of Lemma~ \ref{lem:annoying}]
For convenience, assume $k$ is even (the case with $k$ being odd can be reduced to the case with $k$ being even as we can increase $k$ by $1$ and add a set $C_{k+1}=\emptyset$ with $w_{k+1}=0$ ).

By Lemma~\ref{franktardos}, there is a function $w': A \rightarrow \{0,\ldots,W\}$ with  $W=2^{O(k^3)}$ such that for all $X \subseteq A$ it holds that
$w(X)\geq w(A)/2$ if and only if $w' (X)\geq w'(A)/2$: Indeed, this directly follows by interpreting the functions $w$ and $w'$ as vectors.

Call a vertex $v \in B$ \emph{bad} if $w'(N(v)) > w'(A)/2$, and call it \emph{good} otherwise. Since $w'(N(v))$ and $w'(A)$ are integers, we have that $w'(N(v))\geq w'(A)/2+\tfrac{1}{2}$ if $v$ is bad.
If we let $b$ denote the number of bad elements, we have that
\[
 b\left(w'(A)/2+\tfrac{1}{2}\right) \leq \sum_{a \in A} w' (a)d(a) \leq \sum_{a \in A} w'(a) (\tfrac{1}{2}+\eps)n = w'(A) (\tfrac{1}{2} + \eps)n.
\]
Therefore, if we set $\eps$ such that $\eps \leq 1/(4kW) = 1/2^{O(k^3)}$, we obtain that
\[
 b \leq \left(\frac{\tfrac{1}{2}+\eps}{\tfrac{1}{2}+1/(2w'(A))}\right)n \leq \left(\frac{\tfrac{1}{2}+1/(4w' (A))}{\tfrac{1}{2}+1/(2w'(A))}\right)n  \leq \left(1-\frac{1}{4w'(A)}\right)n.
\]
Therefore, there are at least $\frac{n}{4w'(A)}=n/2^{O(k^3)}$ vertices in $B$ that are good. By the pigeonhole principle there exist $Y \subseteq A$ such that there are at least $n/(2^{O(k^3)}2^k)$ good vertices $v$ with $N(v)=Y$. Since these vertices are good, $w'(Y) \leq w'(A)/2$. If we set $\eps' = 1/(2^{O(k^3)}2^k)$, the set $X$ from the lemma is obtained as $X = A\setminus Y$: It satisfies $|N(X)| \leq (1- 1/(2^{O(k^3)}2^k)|B|$ since all good vertices are not in $N(X)$, and $w'(X)=w'(A)-w'(Y)\geq w'(A)/2$ implies that $w(X) \geq w(A)/2$.
\end{proof}

\begin{proof}[Proof of Lemma~\ref{containedsetcover}]
Let $\eps$ and $\eps'$ be as given by Lemma~\ref{lem:annoying} after fixing $k$.
Iterate over all subsets $L \subseteq [k]$ and consider $C_L = \cup_{ i\in L} C_i$. If $|C_L| \leq (1-\eps')n$, then apply Lemma~\ref{lem:crossmid} with $\mathcal{F} = \binom{C_L}{\geq n/2}$ and $\mathcal{S}$ to detect whether there exist $S_1,\ldots,S_k$ and $i \in [k]$ such that $\cup_{j=1}^{i-1}S_j \supseteq F$ and $\cup_{j=i}^k S_j\supseteq U \setminus F$ for some $F\in\mathcal{F}$. Output\true if and only if any of these iterations detect such $S_1,\ldots,S_k$.

We have that $|\dc\mathcal{F}|\leq 2^{|C_L|} \leq 2^{(1-\eps')n}$ and therefore the algorithm of Lemma~\ref{lem:crossmid} runs in time $O((2-\hat{\eps})^{n})$ time for some $\hat{\eps}>0$.

To see the correctness of this algorithm, note it is always correct if it outputs\true. For the other direction, let $S_1,\ldots,S_k$ be a solution such that $S_i \subseteq C_i$ for each $i \in \{1,\ldots,k\}$. 
Apply Lemma~\ref{lem:annoying} to the graph $G$ with $B=U$ and for each $i=1,\ldots,k$ a vertex $a \in A$ with $N(a)=C_i$ and $w(a)=|S_i|$. We conclude from the lemma that there is a set $X \subseteq [k]$ such that $|\cup_{i \in L}C_i| \leq (1-\eps)n$ and $\sum_{i \in L}|S_i| \geq  n/2$. If we try this $L$, the set $\cup_{i \in L}S_i$ is in $\binom{C_L}{\geq n/2}$ and therefore the algorithm of Lemma~\ref{lem:crossmid} will output\true.
\end{proof}

\subsection{Application to Regular Graph Coloring}
We will now use Theorem~\ref{containedsetcover} to obtain an algorithm for \textsc{$k$-Coloring} that is significantly faster than the $O^*(2^n)$ time algorithm implied by Theorem~\ref{thmsc} in the special case that $k$ is a constant and the input graph is \emph{regular} (i.e. every vertex has the same number of neighbors).

To do so we use a family of contained as given by the following lemma, which is one of the most basic results of the `container method' in combinatorics. See also~\cite[Theorem 1.6.1]{alon2016probabilistic}. 

\begin{lemma}[\cite{sapozhenko2007number}]\label{lem:container}
Let $G=(V,E)$ be an $n$-vertex $d$-regular graph and $\eps >0$. Then one can construct in $\ell \cdot \poly(n)$ time a collection of subsets $C_1,\ldots,C_\ell \subseteq V$ such that
\begin{enumerate}
    \item $\ell \leq \binom{n}{\leq n/(\eps d)}$,
    \item for each $i=1,\ldots,\ell$ we have that $|C_i| \leq \frac{n}{\eps d}+\frac{n}{2-\eps}$, and
    \item each independent set of $G$ is contained in some $C_i$.
\end{enumerate}
\end{lemma}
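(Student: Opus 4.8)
The plan is to prove this via the (algorithmic) graph container method: we design a deterministic ``fingerprint'' procedure that, given an independent set $I$, extracts a small subset $S=S(I)\subseteq I$ with $|S|\le n/(\eps d)$, together with a container $C(S)\supseteq I$ that is determined by $S$ alone and has size at most $\frac{n}{\eps d}+\frac{n}{2-\eps}$. The collection $C_1,\dots,C_\ell$ is then obtained by enumerating all $\binom{n}{\le n/(\eps d)}$ subsets $T\subseteq V$ of size at most $n/(\eps d)$ and outputting $C(T)$ for each; since the fingerprint of every independent set is among these $T$'s, property 3 follows, property 1 follows because $\ell\le\binom{n}{\le n/(\eps d)}$, and the running time is $\binom{n}{\le n/(\eps d)}\cdot\poly(n)=\ell\cdot\poly(n)$ provided each $C(T)$ is computable in polynomial time.

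First I would fix an arbitrary linear order $\prec$ on $V$ and describe the forward procedure. Starting from $A_0=V$, as long as there is a vertex of $I\cap A_i$ with at least $\eps d$ neighbours inside $A_i$, let $v_{i+1}$ be such a vertex of maximum degree into $A_i$, breaking ties by $\prec$; set $A_{i+1}=A_i\setminus(\{v_{i+1}\}\cup N(v_{i+1}))$; stop at step $t$ when no such vertex remains, and put $S=\{v_1,\dots,v_t\}$. Each step deletes more than $\eps d$ vertices from the available set, so $t\le n/(\eps d)$, bounding $|S|$. Since $I$ is independent and each $v_j\in I$, the deleted set $V\setminus A_t=\bigcup_j(\{v_j\}\cup N(v_j))$ meets $I$ exactly in $S$ (no neighbour of a selected vertex can lie in $I$), so $I\setminus S\subseteq A_t$; and by the stopping condition every vertex of $I\cap A_t$ has fewer than $\eps d$ neighbours in $A_t$. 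Hence, setting $A'=\{v\in A_t:|N(v)\cap A_t|<\eps d\}$, we get $I\subseteq S\cup A'=:C(S)$.

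The conceptual heart is that $A_t$, and hence $C(S)$, can be recomputed from the \emph{set} $S$ without knowing $I$: at each step the not-yet-selected vertices $v_{i+1},\dots,v_t$ are still in $A_i$, and $v_{i+1}$ is the $\prec$-least vertex of maximum degree-into-$A_i$ \emph{among $S\cap A_i$} (any vertex of $I$ that beat it in this order would have been selected earlier), so replaying the greedy rule on $S$ reproduces $v_1,\dots,v_t$ and $A_t$, stopping precisely when $S\cap A_i=\emptyset$. This lets us compute $C(T)$ for an arbitrary input set $T$ in polynomial time (discarding, or harmlessly keeping, those $T$ that do not arise as a genuine fingerprint), which is exactly what the enumeration requires.

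It remains to bound $|A'|\le\frac{n}{2-\eps}$, a short double-counting argument, and the step I expect to need the most care with to pin down the constant. Every vertex of $A'\subseteq A_t$ has degree $d$ but fewer than $\eps d$ neighbours in $A_t$, hence more than $(1-\eps)d$ neighbours in $V\setminus A_t$; counting the edges between $A'$ and $V\setminus A_t$ from both sides gives $|A'|(1-\eps)d<|V\setminus A_t|\,d=(n-|A_t|)d$, so $|A'|<(n-|A_t|)/(1-\eps)$. Together with the trivial $|A'|\le|A_t|$ this yields $|A'|\le\min\{|A_t|,\,(n-|A_t|)/(1-\eps)\}$, and since the right-hand side is maximised when its two terms coincide, namely at $|A_t|=n/(2-\eps)$, we conclude $|A'|\le n/(2-\eps)$. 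Therefore $|C(S)|=|S|+|A'|\le\frac{n}{\eps d}+\frac{n}{2-\eps}$, establishing property 2 and finishing the proof.
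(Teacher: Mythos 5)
Your proof is correct, and it is the standard algorithmic ``graph container'' argument (Kleitman--Winston / Sapozhenko style) underlying the cited reference; since the paper only states Lemma~\ref{lem:container} with a citation and does not spell out a proof, I'll just check the key steps.

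The fingerprint bound $t\le n/(\eps d)$ is right: each iteration deletes $v_{i+1}$ together with its at-least-$\eps d$ neighbours inside $A_i$, so more than $t\cdot\eps d$ vertices are consumed after $t$ steps. The reconstruction argument is the crux and you get it right, including the one subtlety people usually miss: you need that $v_{i+1}$ is still the tie-break winner when the maximisation is taken only over $S\cap A_i$, and this holds because $S\cap A_i=\{v_{i+1},\dots,v_t\}\subseteq I\cap A_i$ (using independence of $I$ so that later fingerprint vertices are never deleted as neighbours), and any $v_j$ with $j>i+1$ that beat $v_{i+1}$ in degree or in the $\prec$-tie would have had degree $\ge\eps d$ and would have been chosen by the original rule at step $i$. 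The termination criterion $S\cap A_i=\emptyset$ also matches, since at step $i<t$ the vertex $v_{i+1}\in S\cap A_i$ still has degree $\ge\eps d$. Finally, the double-counting bound $|A'|(1-\eps)d<(n-|A_t|)d$ combined with $|A'|\le|A_t|$ gives $|A'|\le\min\{|A_t|,(n-|A_t|)/(1-\eps)\}\le n/(2-\eps)$, and since $S$ and $A'\subseteq A_t$ are disjoint you get $|C(S)|=|S|+|A'|\le n/(\eps d)+n/(2-\eps)$ as required (implicitly assuming $\eps<1$, which is the only regime in which the stated bound is non-vacuous and is also what the paper uses). The enumeration over all $T$ of size at most $n/(\eps d)$, keeping spurious $T$ harmlessly, gives $\ell\le\binom{n}{\le n/(\eps d)}$ containers in $\ell\cdot\poly(n)$ time.
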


\begin{theorem}[\cite{DBLP:conf/stoc/Zamir23}]\label{thm:kcolzam}
For every $k$, there exists an $\eps >0$ such that \textsc{$k$-Coloring} on regular graphs with $n$ vertices can be solved in $O((2-\eps)^n)$ time.
\end{theorem}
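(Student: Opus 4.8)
The plan is to reduce \textsc{$k$-Coloring} on a $d$-regular graph $G=(V,E)$ with $|V|=n$ to the \textsc{Set Cover with Containers} problem of Theorem~\ref{containedsetcover}, using the container family supplied by Lemma~\ref{lem:container}. Recall that $G$ is $k$-colorable if and only if $V$ can be covered by $k$ independent sets, i.e. if and only if the \textsc{Set Cover} instance with $U=V$ and $\mathcal{S}=\{\,S\subseteq V : S\text{ is independent in }G\,\}$ is a yes-instance with parameter $k$. The oracle $\mathtt{A_o}$ simply tests independence of a given vertex subset, which takes polynomial time, so $\mathtt{T_o}=n^{O(1)}$.

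First I would invoke Lemma~\ref{lem:container} to build, for a constant $\eps_0>0$ to be chosen, containers $C_1,\ldots,C_\ell\subseteq V$ with $\ell\leq\binom{n}{\leq n/(\eps_0 d)}$, each of size at most $\tfrac{n}{\eps_0 d}+\tfrac{n}{2-\eps_0}$, such that every independent set lies in some $C_i$. To fit the hypothesis of Theorem~\ref{containedsetcover}, which demands containers of size at most $(1/2+\eps)n$, I would split into two cases according to the degree $d$. If $d$ is large, say $d\geq D$ for a suitable constant $D=D(k)$, then choosing $\eps_0$ small but fixed makes $\tfrac{n}{\eps_0 d}\leq \tfrac{n}{\eps_0 D}$ as small a fraction of $n$ as we like, and $\tfrac{n}{2-\eps_0}<\tfrac{n}{2}(1+\eps_0)$, so the container size is at most $(1/2+\eps)n$ for the $\eps$ demanded by Theorem~\ref{containedsetcover}; one also needs $\ell\leq\binom{n}{\leq n/(\eps_0 d)}=2^{o(n)}$ (true since $n/(\eps_0 d)$ is a tiny constant fraction of $n$, using~\eqref{eq:binstir}). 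We then have at most $2^{o(n)}$ containers $C_{i_1},\ldots,C_{i_k}$ to try as the guessed tuple — but actually the promise form is cleaner: for each container $C_i$ the promise property is automatic because any independent set $S_j$ used in a coloring lies inside \emph{some} container, and by relabelling we may assume $S_j\subseteq C_{i_j}$. Concretely, iterate over all $k$-tuples of containers (there are $\ell^k=2^{o(n)}$ of them), and for each tuple run the \textsc{Set Cover with Containers} algorithm of Theorem~\ref{containedsetcover} with those $k$ containers, universe $V$, and the independence oracle; output \true iff some tuple yields \true. Each run costs $O((2-\hat\eps)^n\cdot n^{O(1)})$, and $2^{o(n)}\cdot(2-\hat\eps)^n=(2-\eps)^n$ for a slightly smaller $\eps>0$. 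Correctness in this case: if $G$ is $k$-colorable with color classes $S_1,\ldots,S_k$, each $S_j\subseteq C_{i_j}$ for some index $i_j$, so the tuple $(C_{i_1},\ldots,C_{i_k})$ is tried and the \textsc{Set Cover with Containers} instance it defines is a yes-instance (the promise is satisfied by this very witness), hence Theorem~\ref{containedsetcover} returns \true; conversely any \true output exhibits $k$ independent sets covering $V$, i.e. a proper $k$-coloring.

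The remaining case is small degree $d<D$. Here $G$ has maximum degree at most $D-1$, so Theorem~\ref{colbd} already solves \textsc{$k$-Coloring} in time $O^*((2^{D}-1)^{n/D})$, and since $(2^{D}-1)^{1/D}<2$ for every fixed $D$, this is $O^*((2-\eps_1)^n)$ for some $\eps_1>0$ depending only on $D$, hence only on $k$. Taking $\eps>0$ to be the minimum of the constants arising in the two cases finishes the proof.

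The main obstacle I anticipate is calibrating the container parameter $\eps_0$ and the degree threshold $D$ against the fixed tolerance $\eps$ required by Theorem~\ref{containedsetcover}: one needs $\tfrac{n}{\eps_0 d}+\tfrac{n}{2-\eps_0}\leq(1/2+\eps)n$ while simultaneously keeping $\ell=\binom{n}{\leq n/(\eps_0 d)}$ subexponential, and these pull $\eps_0$ in opposite directions (small $\eps_0$ shrinks $\tfrac{1}{2-\eps_0}$ but blows up $\tfrac{1}{\eps_0 d}$ and $\ell$). The point is that because $d\geq D$ is \emph{large} and $D$ may be chosen as a function of $k$ after $\eps$ is fixed, the term $\tfrac{n}{\eps_0 d}\leq\tfrac{n}{\eps_0 D}$ can be driven below any prescribed small fraction of $n$ while $\eps_0$ itself stays bounded away from $0$, so both constraints are met. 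Everything else — the reduction to \textsc{Set Cover with Containers}, handling the promise, and the small-degree case via Theorem~\ref{colbd} — is routine.
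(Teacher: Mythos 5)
Your proposal is essentially the paper's proof: invoke Lemma~\ref{lem:container} to get containers, enumerate $k$-tuples of containers and feed each to the \textsc{Set Cover with Containers} algorithm of Theorem~\ref{containedsetcover}, and fall back to Theorem~\ref{colbd} for small degree. The only slip --- claiming $\ell\le\binom{n}{\le n/(\eps_0 d)}=2^{o(n)}$, which is false for any fixed $d$ (it is $2^{\Theta(n)}$) --- is repaired by your own closing calibration paragraph, which, exactly as in the paper, chooses the degree threshold large enough that $k\,h(1/(\eps_0 d))$ is small enough to absorb the $2^{\Theta(n)}$ factor from $\ell^k$ into a final $(2-\eps)^n$ bound.
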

\begin{proof}
Let $\eps_0,\eps' >0$ be the constants given by Theorem~\ref{containedsetcover} that depend on $k$ such that \textsc{Set Covering with Containers} in which all containers $C_i$ are of size at most $(1/2+\eps_0)n$ can be solved in $(2-\eps')^n$ time.

Let $d_0 \geq 1/\eps_0^2$ be a constant that we will fix later, and let $G$ be $d$-regular.
If $d \leq d_0$ run the algorithm from Theorem~\ref{colbd}.
Otherwise, apply Lemma~\ref{lem:container} with $\eps_1:=\eps_0/2$.
We get containers $C_1,\ldots,C_\ell$ with
\[
    |C_i|\leq n\left(\frac{1}{\eps_1 d}+\frac{1}{2-\eps_1}\right)\leq n\left(\frac{1}{2}+\eps_1+\frac{1}{\eps_1 d}\right)\leq n\left(\frac{1}{2}+\eps_0/2+\frac{2}{\eps_0d} \right)\leq n\left(\frac{1}{2}+\eps_0\right).
\]
Now we guess the containers containing the independent set $S_1,\ldots,S_k$ that form the color classes of a $k$-coloring of $G$, if it exists. Since there are $\binom{\ell}{k}$ such options, the runtime therefore will be
\[
    \binom{\ell}{k} (2-\eps')^n  \leq \binom{n}{n/(\eps_1 d)}^k (2-\eps')^n = \left(2^{k\cdot h(1/(\eps_1 d))} (2-\eps')\right)^{n}.
\]
Since $h(p)$ tends to $0$ when $p$ tends to zero, we can pick $d$ as a function of $k$ (and $\eps$ and $\eps'$, but these are also implied by $k$) such that $2^{k\cdot h(1/(\eps_1 d))} (2-\eps') < 2$.

\end{proof}

\subsection{\textsc{Set Cover} versus Asymptotic Tensor rank}
In exciting new works, \cite{DBLP:journals/corr/abs-2310-11926,
DBLP:journals/corr/abs-2311-02774} it is shown that \textsc{Set Cover} in which all sets are bounded in size by a constant can be solved $O(1.89^n)$ time, if a certain family of $3$-dimensional tensors has small asymptotic tensor rank.
Similarly as for matrices, the rank $rk(\mathbf{T})$ of a tensor $\mathbf{T}$ is the minimal number $r=rk(\mathbf{T})$ of rank-$1$ tensors $\mathbf{T}_1,\ldots,\mathbf{T}_r$ such that $\sum_{i=1}^t \mathbf{T}_i$ (where the sum is such entry-wise). Here, a rank one tensor is the outer product of three vectors (whereas for a matrix, a matrix of rank $1$ is a matrix that can be written as the outer product of $2$ vectors). The asymptotic rank of a tensor $\mathbf{T}$ is defined as $\lim_{r \to \infty} rk(\mathbf{T}^{\otimes r})^{1/r}$.

Strassen~\cite{strassen1994algebra} conjectured that any tensor satisfying certain mild conditions has small tensor rank, and curiously it is currently open to find an explicit tensor family of $3$-dimensional tensors of strongly super-quadratic tensor rank. 

It remains to be seen whether this new connection can be used to point toward more evidence that even the specific tensors at hand do have large tensor rank (contradicting the conjecture of Strassen) by for example connecting it with Hypothesis~\ref{seth}, or for directly aiming at faster algorithms for \textsc{Set Cover}.
\section{Path Finding}
\newcommand{\matchMatrix}{\ensuremath{\mathbf{H}}}
\newcommand{\cutMatrix}{\ensuremath{\mathbf{S}}}
\newcommand{\partitionsetm}[1]{\ensuremath{\Pi_{\mathrm{m}}(#1)}\xspace}
\newcommand{\cuts}[1]{\ensuremath{\Pi_2(#1)}\xspace}

Another area in which much progress has been made in recent years is that of the algorithm design for finding long paths and cycles. Formally, we consider the following problem:

\defproblemu{\textsc{(Un)directed Hamiltonicity}}{An (un)directed graph $G$ on $n$ vertices.}{Does $G$ have an Hamiltonian cycle?}

Several breakthrough results were obtained, most prominently the $O(1.66^n)$ time algorithm for \textsc{Undirected Hamiltonicity}~\cite{DBLP:journals/siamcomp/Bjorklund14}.
This research started with parameterized algorithms for finding paths of length at least $k$~\cite{DBLP:conf/icalp/Koutis08}, which was in turn inspired by a much earlier series of papers for determining whether a graph has a perfect matching in an algebraic manner via determinants. We will outline this earliest work in order to build upon it subsequently:
\begin{definition}[Tutte Matrix~\cite{Tutte47}] Let $G=(V,E)$ be a graph with linear ordering $\prec$ on $V$, let $\mathbb{F}$ be a field and for every $i<j$ let $x_{ij} \in \mathbb{F}$. Define
\[
	\mathbf{A}^{(x)}_G[i,j] =
	\begin{cases}
	x_{ij}	& \text{if } \{i,j\} \in E  \text{ and } i\prec j, \\
	-x_{ji}	& \text{if } \{i,j\} \in E  \text{ and } j \prec i, \\	
	0			& \text{otherwise }.
	\end{cases}
\]
\end{definition}
For a set $V$, we let $\partitionsetm{V}$ denote the family of all perfect matchings of the complete graph with vertex set $V$.
\begin{lemma}[\cite{Tutte47}]\label{lem:dettut} The determinant $\det(\mathbf{A}^{(x)}_G)$ is the polynomial in variables $x_{i,j}$ satisfying
\[
    \det(\mathbf{A}^{(x)}_G) = \sum_{M \in \partitionsetm{V}} \prod_{\substack{\{i,j\} \in M \\ i \prec j}} x^2_{ij}.
\]
\end{lemma}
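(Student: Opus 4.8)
The plan is to expand the determinant via the Leibniz formula and group the permutations by cycle type, showing that all contributions cancel except those coming from permutations whose cycles all have length $2$, i.e.\ involutions without fixed points, which correspond exactly to perfect matchings.

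First I would write
\[
    \det(\mathbf{A}^{(x)}_G) = \sum_{\pi \in S_V} \mathrm{sgn}(\pi) \prod_{i \in V} \mathbf{A}^{(x)}_G[i,\pi(i)].
\]
A term survives only if $\{i,\pi(i)\} \in E$ for every $i$ (so in particular $\pi$ has no fixed points, since $\mathbf{A}^{(x)}_G$ has zero diagonal), and then the product is a monomial up to sign in the $x_{ij}$. The key step is an involution argument on the set of surviving permutations: pair each $\pi$ with the permutation $\bar\pi$ obtained by reversing the single cycle of $\pi$ of largest length containing the $\prec$-smallest vertex that lies on a cycle of length $\geq 3$ (one must check this pairing is well defined, is an involution, and fixes exactly the fixed-point-free involutions). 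For a cycle $C = (v_1\, v_2\, \cdots\, v_\ell)$ with $\ell \geq 3$, reversing it changes the sign of the permutation by $(-1)^{\ell-1}$... wait — reversing an $\ell$-cycle is conjugation, which does not change parity; the sign change instead comes from the antisymmetry $\mathbf{A}^{(x)}_G[i,j] = -\mathbf{A}^{(x)}_G[j,i]$: the cycle contributes $\prod_{t} \mathbf{A}^{(x)}_G[v_t, v_{t+1}]$ and the reversed cycle contributes $\prod_t \mathbf{A}^{(x)}_G[v_{t+1}, v_t] = (-1)^\ell \prod_t \mathbf{A}^{(x)}_G[v_t,v_{t+1}]$, so for odd $\ell$ the two monomials are negatives of each other while the signs $\mathrm{sgn}(\pi)$ agree, hence they cancel. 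For even $\ell \geq 4$ the monomials agree but one still needs a cancellation: here I would pair $\pi$ instead with the permutation that reverses that even cycle; since reversal of an $\ell$-cycle with $\ell$ even is an \emph{odd} permutation on those $\ell$ points, $\mathrm{sgn}$ flips while the monomial is unchanged, giving cancellation. In both cases I should double-check the rule for choosing \emph{which} cycle to reverse is canonical so that $\bar{\bar\pi} = \pi$.

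After the cancellation, only fixed-point-free involutions $\pi$ remain; such a $\pi$ is exactly a perfect matching $M = \{\{i,\pi(i)\}\}$ of $V$ (forcing $|V|$ even, else the sum is empty and both sides are $0$). For such $\pi$, $\mathrm{sgn}(\pi) = (-1)^{|V|/2}$, and the monomial $\prod_{i} \mathbf{A}^{(x)}_G[i,\pi(i)]$: each pair $\{i,j\} \in M$ with $i \prec j$ contributes $\mathbf{A}^{(x)}_G[i,j]\cdot\mathbf{A}^{(x)}_G[j,i] = x_{ij}\cdot(-x_{ij}) = -x_{ij}^2$, so over all $|V|/2$ pairs the product is $(-1)^{|V|/2}\prod_{\{i,j\}\in M, i\prec j} x_{ij}^2$. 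Multiplying by $\mathrm{sgn}(\pi) = (-1)^{|V|/2}$ yields exactly $\prod_{\{i,j\}\in M, i\prec j} x_{ij}^2$, and summing over all $M \in \partitionsetm{V}$ (noting that $M \in \partitionsetm{V}$ contributes only if $M \subseteq E$, else the monomial vanishes, consistent with restricting to matchings of $G$) gives the claimed identity.

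The main obstacle is getting the involution on non-involution permutations genuinely well defined and sign-reversing in a \emph{uniform} way: the naive "reverse the longest odd cycle" handles odd cycles cleanly through antisymmetry of the entries, but even-length cycles $\geq 4$ need a separate mechanism, and one must be careful that the canonical choice of cycle to act on is preserved under the operation so the map is an involution. An alternative that sidesteps the even/odd split is to pair $\pi$ with the permutation obtained by reversing the one cycle through the globally $\prec$-minimal vertex lying on a cycle of length $\geq 3$: in all such cases the combined effect of the parity change from reversal and the sign change from antisymmetry of $\mathbf{A}^{(x)}_G$ along that cycle produces an overall factor of $-1$, so the terms cancel. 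I would verify this unified statement carefully, as it is the crux of the argument; the remaining steps are the routine monomial bookkeeping sketched above.
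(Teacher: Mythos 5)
Your Leibniz-expansion strategy is the right starting point and your odd-cycle cancellation via antisymmetry is correct, but the even-cycle step has a genuine error --- and that error reveals that the lemma, as stated over an arbitrary field, is actually not an identity. Reversing an $\ell$-cycle produces another $\ell$-cycle, and the sign of an $\ell$-cycle is $(-1)^{\ell-1}$ regardless of the parity of $\ell$, so $\mathrm{sgn}(\pi)=\mathrm{sgn}(\bar\pi)$ in all cases; the claim that ``reversal of an even cycle is an odd permutation so the sign flips'' conflates the sign of the permutation implementing the reversal with the sign of the reversed permutation, which are not the same thing. Combined with your own (correct) observation that for even $\ell$ the monomials also agree, the two terms \emph{reinforce} rather than cancel. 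And indeed they should not cancel: the determinant of a skew-symmetric matrix is the \emph{square} of the Pfaffian, and $\det(\mathbf{A}^{(x)}_G)=\mathrm{Pf}(\mathbf{A}^{(x)}_G)^2 = (\sum_M \pm\prod_{\{i,j\}\in M}x_{ij})^2$ has cross-terms with coefficient $\pm 2$ coming from exactly the permutations with only even cycles. For $G=K_4$, $\det(\mathbf{A}^{(x)}_{K_4}) = (x_{12}x_{34}-x_{13}x_{24}+x_{14}x_{23})^2$ contains the term $-2x_{12}x_{34}x_{13}x_{24}$, which is absent from the right-hand side of the lemma.

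What the surrounding text actually uses is one of two weaker statements. (i) Over any field, $\det(\mathbf{A}^{(x)}_G)$ is the zero polynomial iff $G$ has no perfect matching (immediate from the Pfaffian factorization, since the Pfaffian has distinct monomials for distinct matchings). (ii) Over a field of characteristic two --- the setting of the $O^*(2^n)$ and $O^*(3^{n/2})$ Hamiltonicity algorithms that invoke this lemma as an exact expansion --- the displayed identity does hold, because the Frobenius identity $(\sum_i a_i)^2 = \sum_i a_i^2$ kills all cross-terms. Your cycle-reversal involution becomes a valid direct proof in characteristic two once the sign bookkeeping is deleted: for any non-involution $\pi$, the monomials of $\pi$ and $\bar\pi$ coincide (signs no longer matter) and $2=0$, so they cancel, leaving only the fixed-point-free involutions, each contributing $\prod_{\{i,j\}\in M,\,i\prec j}x_{ij}^2$. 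Over other fields the statement as written cannot be proved because it is false; the lemma should be qualified by characteristic two or replaced by the Pfaffian-squared form.
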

Let us refer to the polynomial $\det(\mathbf{A}^{(x)}_G)$ as $P_G(x)$.
We first briefly describe an application of this lemma from~\cite{lovasz1979determinants} to get a fast randomized algorithm for determining whether $G$ has a perfect matching:
It is easy to see that $P_G$ is the zero polynomial if and only if $G$ does not have a perfect matching.
Since $P_G$ is a polynomial of degree at most $n$, we can use the following lemma to check whether $G$ has a perfect matching:

\begin{lemma}[Polynomial Identity Testing, \cite{DBLP:journals/ipl/DemilloL78, DBLP:journals/jacm/Schwartz80,DBLP:books/daglib/0000420}]\label{lem:pit}
Let $\mathbb{F}$ be a field and let $P(x_1,\ldots,x_z)$ be a non-zero polynomial on $z$ variables with values in $\mathbb{F}$ of degree at most $d$. If $r_1,\ldots,r_z \in \mathbb{F}$ are picked independently and uniformly and random, then $\Pr[P(r_1,\ldots,r_z)=0]\leq d/|\mathbb{F}|$.
\end{lemma}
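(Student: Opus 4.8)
The plan is to prove this by induction on the number of variables $z$, following the classical Schwartz--Zippel argument (note that the statement is only meaningful when $\mathbb{F}$ is finite, so that sampling uniformly from $\mathbb{F}$ makes sense). For the base case $z=1$, a nonzero univariate polynomial of degree at most $d$ has at most $d$ roots in the field $\mathbb{F}$ (a standard fact: each root $r$ splits off a factor $(x-r)$, and degree is additive under multiplication), so a uniformly random $r_1 \in \mathbb{F}$ is a root with probability at most $d/|\mathbb{F}|$.

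For the inductive step, I would single out the last variable: write $P(x_1,\ldots,x_z) = \sum_{i=0}^{k} x_z^{\,i}\, Q_i(x_1,\ldots,x_{z-1})$, where $k$ is chosen as the largest index for which $Q_k$ is not the zero polynomial (such $k$ exists since $P \not\equiv 0$). Since every monomial of $x_z^k Q_k$ has degree at most $d$, the polynomial $Q_k$ has degree at most $d-k$ and is a nonzero polynomial in $z-1$ variables, so the induction hypothesis gives $\Pr[Q_k(r_1,\ldots,r_{z-1})=0] \le (d-k)/|\mathbb{F}|$. Conditioned on the complementary event that $Q_k(r_1,\ldots,r_{z-1}) \neq 0$, the specialization $P(r_1,\ldots,r_{z-1},x_z)$ is a nonzero univariate polynomial in $x_z$ of degree exactly $k$, hence has at most $k$ roots; since $r_z$ is drawn independently of $r_1,\ldots,r_{z-1}$, we get $\Pr[P(r_1,\ldots,r_z)=0 \mid Q_k(r_1,\ldots,r_{z-1}) \neq 0] \le k/|\mathbb{F}|$.

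Combining these via the law of total probability,
\[
\Pr[P(r_1,\ldots,r_z)=0] \le \Pr[Q_k(r_1,\ldots,r_{z-1})=0] + \Pr[P(r_1,\ldots,r_z)=0 \mid Q_k \neq 0] \le \frac{d-k}{|\mathbb{F}|} + \frac{k}{|\mathbb{F}|} = \frac{d}{|\mathbb{F}|},
\]
which completes the induction. The only point requiring a little care is the degree bookkeeping in the inductive step --- verifying that $\deg Q_k \le d-k$ and that the specialized univariate polynomial genuinely has degree $k$ (not less) when $Q_k(r_1,\ldots,r_{z-1}) \neq 0$ --- together with correctly setting up the conditioning so that the independence of $r_z$ from $r_1,\dots,r_{z-1}$ is used. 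No deeper obstacle arises; the argument is elementary once the right decomposition is chosen.
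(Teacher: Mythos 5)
Your proof is correct and is the standard Schwartz--Zippel induction argument; the paper itself only states this lemma with citations to DeMillo--Lipton, Schwartz, and Zippel and does not supply a proof, so there is nothing to compare against. Your degree bookkeeping ($\deg Q_k \le d-k$ and the specialized univariate polynomial having degree exactly $k$ once $Q_k(r_1,\ldots,r_{z-1})\neq 0$), and your use of the law of total probability together with the independence of $r_z$ from $r_1,\ldots,r_{z-1}$, are all handled correctly.
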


In particular, fix $\mathbb{F}$ to be field of size at least $2n$ (which is at least twice the degree of $P_G$), and replace the variables $x_{ij}$ with random elements from $\mathbb{F}$, evaluate $P_G$ in $n^{\omega}$ time\footnote{We let $\omega$ denote the smallest constant such that $n$ by $n$ matrices can be multiplied in $n^{\omega + o(1)}$ time, $2 \leq \omega \leq 2.73$. It is well-known that the determinant of an $n \times n$ matrix can be computed in $n^{\omega +o(1)}$ time.} with a Gaussian-elimination based algorithm and output\true if and only if it evaluates to a non-zero number.

Since~\cite{DBLP:journals/siamcomp/Bjorklund14} there have been several algorithms for \textsc{(Un)directed Hamiltonicity} by evaluating the sum of (an exponential number of) determinants. We will survey some selected approaches, but skip a thorough discussion of~\cite{DBLP:journals/siamcomp/Bjorklund14} since several write-ups already exist in textbooks~\cite{DBLP:books/sp/CyganFKLMPPS15} or surveys (such as~\cite{DBLP:journals/cacm/FominK13, DBLP:journals/cacm/KoutisW16}).

\subsection{Hamiltonicity via Matrix Factorizations}
We first give a relatively simple $O^*(2^n)$ time algorithm illustrating the main idea behind a faster algorithm that we will present afterwards. For this, we first define the following two matrices:

\begin{definition}[Matchings Connectivity matrix] For even $t \geq 2$, define $\matchMatrix_t \in \{0,1\}^{\partitionsetm{[t]} \times \partitionsetm{[t]}}$ as
	\[ \matchMatrix_t[A,B] =
	\begin{cases}
	1, &\text{if $A \cup B$ is a Hamiltonian Cycle},\\
	0, &\text{otherwise}.
	\end{cases}
	\]	
\end{definition}

For a set $V$, we let $\cuts{V}$ denote the family of all \emph{cuts of $V$}, i.e. unordered partitions of $V$ into two blocks.

\begin{definition}[Split matrix]
	A matching $A \in \partitionsetm{[t]}$ is \emph{split} by a cut $C \in \cuts{[t]}$ if every edge of $A$ is either contained in $C$ or is disjoint from $C$. For even $t \geq 2$, define $\cutMatrix_t \in \{0,1\}^{\partitionsetm{[t]} \times \cuts{[t]}}$ as
	\[ \cutMatrix_t[A,C] =
	\begin{cases}
	1, &\text{if $A$ is split by $C$},\\
	0, &\text{otherwise}.
	\end{cases}
	\]	
\end{definition}

\paragraph{A $O^*(2^n)$ time algorithm for \textsc{Undirected Hamiltonicity}.}
Let $A,B \in \partitionsetm{[t]}$.
It is easy to see that the number of cuts $C \in \cuts{[t]}$ that split both $A$ and $B$ simultaneously is $2^{k-1}$, where $k$ is the number of connected components of the graph $([t],A \cup B)$. Hence, since $2^{k-1}$ is odd if and only if $k=1$, we have over any field of characteristic $2$ that $\mathbf{H}_t = \mathbf{S}_t \mathbf{S}_t^\intercal$. Thus, let $\mathbf{p},\mathbf{q}$ denote
\[
    \mathbf{p}[A] = \begin{cases}
        \displaystyle \prod_{\substack{\{i,j\} \in A \\ i \prec j}} x^2_{ij}, & \text{if } A \in \partitionsetm{[t]} ,\\
 0, & \text{otherwise},
    \end{cases} \ \qquad \qquad \ \ \
       \mathbf{q}[A] = \begin{cases}
        \displaystyle\prod_{\substack{\{i,j\} \in A \\ i \prec j}} y^2_{ij}, & \text{if } A \in \partitionsetm{[t]},\\
 0, & \text{otherwise}.
    \end{cases}
\]
Now by the earlier observation we have that, in any field of characteristic two, $\mathbf{p}^{\intercal} \mathbf{H}_t \mathbf{q}$ is the zero polynomial if and only if $G$ has no Hamiltonian cycle. It follows that the existence of a Hamiltonian cycle can be checked in $O^*(2^n)$ time by plugging in random values from the field $GF(2^{k})$ for $k=\log_2 8n$ into $x_{ij},y_{ij}$ and evaluating the following polynomial (which has degree at most $4n$) in the straightforward way:
\[
\mathbf{p}^{\intercal} \mathbf{H}_t \mathbf{q} = (\mathbf{p}^\intercal\mathbf{S}_t)( \mathbf{S}_t^\intercal\mathbf{q}) = 
\sum_{C \in \cuts{[n]}}
\det(\mathbf{A}^{(x)}_{G[C]}) \det(\mathbf{A}^{(x)}_{G[V \setminus C]})
\det(\mathbf{A}^{(y)}_{G[C]}) \det(\mathbf{A}^{(y)}_{G[V \setminus C]}).
\]

\paragraph{A $O^*(3^{n/2})$ time algorithm for \textsc{Undirected Hamiltonicity}.}
Now we see a faster algorithm that uses the same blueprint as the previous algorithm, but to get a faster algorithm we use the following more efficient factorization of $\matchMatrix_t$:

\begin{lemma}[Narrow Cut Factorization, \cite{DBLP:conf/stoc/Nederlof20}]\label{lem:narrowcutfact} Let $t \geq 2$ be an even integer.
There exists a polynomial-time computable function $C:\{0,1,2\}^{t/2-1} \rightarrow \cuts{[t]}$ such that, if we let $\mathcal{C}_t = \{ C(x) : x \in \{0,1,2\}^{t/2-1}\}$ then, over a field $\mathbf{F}$ of characteristic $2$ we have
\[
	\matchMatrix_t = \cutMatrix_t[\cdot,\mathcal{C}_t]\cdot
 \begin{pmatrix}
0 & 1 & 1\\
1 & 0 & 1\\
1 & 1 & 0
\end{pmatrix}^{\otimes t/2-1}
	\cdot
	(\cutMatrix_t[\cdot,\mathcal{C}_t])^\intercal.
	\]
\end{lemma}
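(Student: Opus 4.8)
The plan is to translate the claimed matrix identity into a parity statement about families of cuts, and then to engineer the map $C$ so that a recursive ``cut-and-count'' argument applies.

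First I would unfold the right-hand side. Recall the identity behind the $O^*(2^n)$-time algorithm: over a field of characteristic $2$ one has $\matchMatrix_t = \cutMatrix_t \cutMatrix_t^\intercal$, which rests on the fact that for $A,B \in \partitionsetm{[t]}$ the number of cuts in $\cuts{[t]}$ splitting both $A$ and $B$ equals $2^{c-1}$, where $c$ is the number of (even-cycle) components of the multigraph $([t],A\cup B)$; this number is odd precisely when $c=1$, i.e.\ when $A\cup B$ is a Hamiltonian cycle. For the narrow version, note that over characteristic $2$ the matrix $T := \begin{pmatrix}0&1&1\\1&0&1\\1&1&0\end{pmatrix}$ satisfies $T[a,b]=[a\neq b]$, so $T^{\otimes(t/2-1)}[x,y]=\prod_{i}[x_i\neq y_i]$; in particular $\mathrm{rank}_{\mathbb{F}_2} T^{\otimes(t/2-1)}=2^{t/2-1}$, consistent with the known rank of $\matchMatrix_t$. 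Expanding $\cutMatrix_t[\cdot,\mathcal{C}_t]\, T^{\otimes(t/2-1)}\,\cutMatrix_t[\cdot,\mathcal{C}_t]^\intercal$ entrywise, the lemma is equivalent to the following combinatorial claim: $C$ can be built so that for all $A,B\in\partitionsetm{[t]}$ the quantity
\[
 \bigl|\{(x,y)\in(\{0,1,2\}^{t/2-1})^2 : x_i\neq y_i \text{ for all }i,\ C(x)\text{ splits }A,\ C(y)\text{ splits }B\}\bigr|
\]
is odd if and only if $A\cup B$ is a Hamiltonian cycle. So it remains to design $C$ --- equivalently a family $\mathcal{C}_t$ of at most $3^{t/2-1}$ cuts --- so that this ``differ-in-every-coordinate'' count has the same parity as the total count $2^{c-1}$ of consistent cuts.

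Next I would construct $C$. Fix a reference perfect matching $M_0=\{Q_1,\dots,Q_{t/2}\}$ with $Q_j=\{2j-1,2j\}$, thought of as arranged along a path $Q_1-Q_2-\cdots-Q_{t/2}$, and fix vertex $1$ on a distinguished side. A coordinate $x_i\in\{0,1,2\}$ records one of three possible local behaviours of the cut at the $i$-th joint $Q_i-Q_{i+1}$ (one ``uncrossed'' option and two ``crossing'' options differing in how a single vertex is moved across); reading $x$ from left to right then determines, vertex by vertex, on which side of the cut each of $2,\dots,t$ lies, and this defines $C(x)$. One checks $C$ is polynomial-time computable and $\mathcal{C}_t=C(\{0,1,2\}^{t/2-1})$ has at most $3^{t/2-1}$ elements. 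As a sanity check, for $t=4$ this recovers $\mathcal{C}_4=\{(\{1,2\},\{3,4\}),(\{1,3\},\{2,4\}),(\{1,4\},\{2,3\})\}$, for which $\cutMatrix_4[\cdot,\mathcal{C}_4]$ is, up to ordering, the identity matrix, so the claimed factorization reads $\matchMatrix_4=I_3\,T\,I_3=T$, which is exactly correct. To prove the parity identity in general I would induct on $t/2$ (base cases $t=2,4$ immediate), matching the factorization $T^{\otimes(t/2-1)}=T^{\otimes(t/2-2)}\otimes T$ by peeling off the last pair $Q_{t/2}$: classify $A$ and $B$ by how they match the two vertices of $Q_{t/2}$, delete $Q_{t/2}$ and re-pair the freed endpoints in the three natural ways, and use the standard rerouting argument to write $\matchMatrix_t[A,B]$ in terms of $\matchMatrix_{t-2}$ on the reduced matchings; on the cut side, summing over $(x_{t/2-1},y_{t/2-1})$ with $x_{t/2-1}\neq y_{t/2-1}$ breaks the count into the six off-diagonal entries of $T$, and the recursive definition of $C$ identifies each group with the term produced by $\matchMatrix_{t-2}$ on a reduced pair, so the induction hypothesis closes the argument. (Alternatively, one can argue globally, showing that for every cut $D$ the vector $(\cutMatrix_t[A,D])_{A}$ lies in the $\mathbb{F}_2$-span of the columns of $\cutMatrix_t[\cdot,\mathcal{C}_t]$ with a combination pattern whose Gram matrix is $T^{\otimes(t/2-1)}$.)

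The main obstacle is the precise definition of the two ``crossing'' states at each joint together with the parity bookkeeping in the inductive step: deleting $Q_{t/2}$ and re-pairing its neighbours changes $c$ by an amount that depends on whether the two loose ends lie on the same cycle or on two different cycles of $A\cup B$, and one must verify that $2^{c-1}\bmod 2$ is preserved in all of these cases simultaneously with the constraint $x_{t/2-1}\neq y_{t/2-1}$ imposed by $T$. This is exactly where the asymmetric triangle matrix $T=J_3-I_3$ is forced: replacing it by $I_3$ would only give back the all-cuts factorization, while replacing it by $J_3$ would fail to track connectivity at all.
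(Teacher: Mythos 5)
The survey states this lemma without proof, citing the original paper, so there is no reference argument to compare against; I am assessing your plan on its own. Your opening reduction is correct and useful: entry $(A,B)$ of the right-hand side is $\sum_{x,y}\bigl(\prod_i[x_i\neq y_i]\bigr)\,[C(x)\text{ splits }A]\,[C(y)\text{ splits }B]$, so the lemma does collapse to the stated parity identity; the observation that $T[a,b]=[a\neq b]$ over characteristic~$2$, the $\mathbb{F}_2$-rank count ($\mathrm{rank}\,T=2$, hence $\mathrm{rank}\,T^{\otimes(t/2-1)}=2^{t/2-1}$), and the $t=4$ sanity check (that $\cutMatrix_4[\cdot,\mathcal{C}_4]=I_3$ for the three $2$-$2$ cuts, giving $\matchMatrix_4=T$) are all right.

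The genuine gap is the construction of $C$, which is the heart of the lemma and is left at the level of an intention. The rule ``fix vertex $1$, and at each joint $Q_i$--$Q_{i+1}$ choose one uncrossed and two crossing local behaviours'' is not a well-defined map, and natural ways to make it precise already fail at $t=4$. For instance, if $x_i\in\{0,1\}$ places $Q_{i+1}$ unsplit on the same/opposite side of vertex $2i$ and $x_i=2$ splits $Q_{i+1}$ (with $2i+1$ alongside $2i$), then seeding with vertex $2$ opposite vertex $1$ yields $\mathcal{C}_4=\{\{1\}|\{2,3,4\},\ \{1,3,4\}|\{2\},\ \{1,4\}|\{2,3\}\}$, and the matching $\{\{1,2\},\{3,4\}\}$ is split by none of them, so its row of $\cutMatrix_4[\cdot,\mathcal{C}_4]$ is zero and the identity is false; seeding with vertex $2$ alongside vertex $1$ yields $\mathcal{C}_4=\{[4]|\emptyset,\ \{1,2\}|\{3,4\},\ \{1,2,3\}|\{4\}\}$, and a direct computation gives $\cutMatrix_4[\cdot,\mathcal{C}_4]\,T\,\cutMatrix_4[\cdot,\mathcal{C}_4]^\intercal$ wrong in the $(M_2,M_3)$ entry. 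Notice that the correct $\mathcal{C}_4$ you exhibit (the three $2$-$2$ cuts) has the global feature that each cut either splits \emph{no} pair of $M_0$ or splits \emph{every} pair, which is not a joint-by-joint local property of the kind your description suggests; you obtained it by back-solving, not from the stated rule. Until $C$ is pinned down there is nothing to induct on, and the parity bookkeeping you yourself flag as the main obstacle --- how deleting $Q_{t/2}$ and rerouting (including the corner case $\{t-1,t\}\in A$ or $\in B$) changes the component count, and how that interacts with the constraint $x_{t/2-1}\neq y_{t/2-1}$ --- remains entirely unverified. The proposal has the right architecture but is missing its central ingredient.
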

Curiously, the rank of $\matchMatrix_t$ over fields of characteristic $2$ is equal to $2^{t/2-1}$ (see~\cite{DBLP:journals/jacm/CyganKN18}), but we are not aware of narrower factorizations in terms of $\cutMatrix$ and the narrower factorizations from~\cite{DBLP:journals/jacm/CyganKN18} seem harder to combine with the algorithmic approach outlined here.
\begin{theorem}\label{thm:undham}
	\textsc{Directed Hamiltonicity} in bipartite graphs and \textsc{Undirected Hamiltonicity} can be solved in $O^*(3^{n/2})$ time.
\end{theorem}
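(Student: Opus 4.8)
The plan is to reuse the blueprint of the $O^*(2^n)$ algorithm for \textsc{Undirected Hamiltonicity} presented above, but to replace the factorization $\matchMatrix_n=\cutMatrix_n\cutMatrix_n^\intercal$ by the much narrower one from Lemma~\ref{lem:narrowcutfact}. We may assume $n$ is even: if $n$ is odd we run the even-case algorithm on each of the (at most $n^2$) graphs obtained from $G$ by subdividing a single edge with a fresh vertex, using that $G$ has a Hamiltonian cycle if and only if one of these $(n+1)$-vertex graphs does; this costs only a polynomial overhead since $3^{1/2}=O(1)$. For \textsc{Directed Hamiltonicity} on a bipartite graph $n$ is automatically even, a directed Hamiltonian cycle alternates between the two colour classes and decomposes into a perfect matching of forward arcs and one of backward arcs, and the argument below applies verbatim with $\mathbf{p}$ and $\mathbf{q}$ supported on forward- resp.\ backward-arc matchings. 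We therefore concentrate on \textsc{Undirected Hamiltonicity} with $n$ even.

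Fix a field $\mathbb{F}$ of characteristic $2$ and size $n^{O(1)}$ and, as in the $O^*(2^n)$ algorithm, substitute independent uniformly random elements of $\mathbb{F}$ for the edge variables $x_{ij},y_{ij}$ (and $0$ for non-edges), obtaining the vectors $\mathbf{p},\mathbf{q}$. By the analysis of that algorithm together with Lemma~\ref{lem:pit}, it suffices to evaluate the scalar $\mathbf{p}^\intercal\matchMatrix_n\mathbf{q}$ and report a Hamiltonian cycle if and only if it is nonzero, amplifying the success probability by $O(1)$ independent repetitions. Substituting Lemma~\ref{lem:narrowcutfact} gives
\[
	\mathbf{p}^\intercal\matchMatrix_n\mathbf{q}
	\;=\;\mathbf{u}^\intercal\cdot
	\begin{pmatrix}0&1&1\\1&0&1\\1&1&0\end{pmatrix}^{\otimes n/2-1}\cdot\mathbf{w},
\]
where $\mathbf{u},\mathbf{w}\in\mathbb{F}^{\{0,1,2\}^{n/2-1}}$ are defined by $\mathbf{u}[x]=\sum_{A\text{ split by }C(x)}\mathbf{p}[A]$ and $\mathbf{w}[x]=\sum_{A\text{ split by }C(x)}\mathbf{q}[A]$.

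It then remains to evaluate the right-hand side in $O^*(3^{n/2})$ time, which I would do in three steps. First, for each of the $3^{n/2-1}$ strings $x$, compute the cut $C(x)=\{C_1,C_2\}$ in polynomial time; then, exactly as in the $O^*(2^n)$ algorithm, $\mathbf{u}[x]=\det(\mathbf{A}^{(x)}_{G[C_1]})\cdot\det(\mathbf{A}^{(x)}_{G[C_2]})$, because a perfect matching split by $C(x)$ is a disjoint union of a perfect matching of $C_1$ and one of $C_2$ and Lemma~\ref{lem:dettut} identifies each factor with a Tutte determinant; this is evaluated by Gaussian elimination in $n^{O(1)}$ time, and likewise for $\mathbf{w}[x]$ with the $y$-variables. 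Second, apply Yates's algorithm (Lemma~\ref{lem:yates}) with the fixed $3\times 3$ matrix above and $n/2-1$ factors to compute its Kronecker power applied to $\mathbf{w}$ in $O(3^{(n/2-1)+2})$ time. Third, take the inner product of the result with $\mathbf{u}$ in $O(3^{n/2})$ time. Each step costs $O^*(3^{n/2})$, and outputting ``yes'' precisely when the final scalar is nonzero is correct with constant probability by the preceding.

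I do not expect a genuine obstacle beyond Lemma~\ref{lem:narrowcutfact} itself (which we take as given); the things to get right are the bookkeeping that lines up the $3^{n/2-1}$-dimensional ``narrow'' index set of the factorization with the running time of Yates's algorithm, and the observation that each of the two outer factors $\cutMatrix_n[\cdot,\mathcal{C}_n]^\intercal\mathbf{p}$ and $\cutMatrix_n[\cdot,\mathcal{C}_n]^\intercal\mathbf{q}$ is still computable entry-by-entry via the same product-of-two-determinants identity that powered the $O^*(2^n)$ algorithm, since every column of $\cutMatrix_n[\cdot,\mathcal{C}_n]$ is the split-indicator of an honest cut.
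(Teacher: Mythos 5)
Your proposal is essentially identical to the paper's proof: you replace $\matchMatrix_n=\cutMatrix_n\cutMatrix_n^\intercal$ with the narrow factorization of Lemma~\ref{lem:narrowcutfact}, compute each entry $\mathbf{u}[x]$, $\mathbf{w}[x]$ as a product of two Tutte determinants over the halves of the cut $C(x)$, apply Yates's algorithm to the $3\times 3$ Kronecker power, and conclude by polynomial identity testing over a small field of characteristic $2$ --- which is exactly Algorithm~\ref{alg:undHam} and its analysis. Your explicit reduction of odd $n$ to even $n$ via edge subdivision is a small, correct addition that the paper glosses over.
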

These algorithmic results simplify and improve a similar approach from~\cite{DBLP:journals/jacm/CyganKN18}, but are in turn inferior to the results from~\cite{DBLP:journals/siamcomp/Bjorklund14} and~\cite{DBLP:conf/icalp/BjorklundKK17}.
We nevertheless present the result here since it already follows from a combination of Lemma~\ref{lem:narrowcutfact} with standard methods.

\begin{proof}[Proof sketch of Theorem~\ref{thm:undham}]
We focus on the second item of Theorem~\ref{thm:undham}.\footnote{The first item can be proved in similar fashion by only using $x$-variables for arcs in one direction and $y$-arcs for all arcs in the other direction.}
The algorithm is outlined in Algorithm~\ref{alg:undHam}.
\begin{algorithm}
	\caption{Undirected Hamiltonicity via the Narrow Cut Factorization.}
	\label{alg:undHam}
	\begin{algorithmic}[1]
		\REQUIRE $\mathtt{undirectedHamiltonicity}(G=(V,E))$
		\ENSURE \true with probability at least $1/2$ if $G$ is Hamiltonian; \false otherwise.
		\STATE For each $\{i,j\} \in E$ with $i<j$ pick $x_{ij}, y_{ij}  \in_R  GF(2^k)$, where $k= \log_2 8n$
		\FOR{$a \in \{0,1,2\}^{n/2-1}$}\label{lin:loopund}
		\STATE $\mathbf{l}[a] \gets \det(\mathbf{A}^{(x)}_{G[C(a)]}) \cdot \det(\mathbf{A}^{(x)}_{G[V \setminus C(a)]})$\label{lin:det1}
		\STATE $\mathbf{r}[a] \gets \det(\mathbf{A}^{(y)}_{G[C(a)]}) \cdot \det(\mathbf{A}^{(y)}_{G[V \setminus C(a)]})$\label{lin:det2}
		\ENDFOR
		\STATE $res \gets \mathbf{l}^\intercal \cdot   \begin{pmatrix}
0 & 1 & 1\\
1 & 0 & 1\\
1 & 1 & 0
\end{pmatrix}^{\otimes n/2-1} \cdot  \mathbf{r} $ \label{lin:kronecker}
		\LineIfElse{$res\neq 0$}{$\mathbf{return}$\true{}}{$\mathbf{return}$\false{}}
	\end{algorithmic}
\end{algorithm}
Note it takes $O^*(3^{n/2})$ time because there are $3^{n/2-1}$ iterations of the loop at Line~\ref{lin:loopund}, the determinants on Lines~\ref{lin:det1},~\ref{lin:det2} are computed in polynomial time with standard algorithms, and the vector-matrix-vector product product on Line~\ref{lin:kronecker} can be computed in $O^*(3^{n/2})$ using Yates' algorithm (Lemma~\ref{lem:yates}) and an inner-product computation.
For correctness, let us denote the $(3\times 3)$-matrix of Line~\ref{lin:kronecker} by $\mathbf{Q}$. Notice that the output $res$ of the algorithm is an evaluation of the polynomial $P(x,y)$ at random points $x,y$ where we have that $P(x,y)$ equals
\begin{align*}
&= \sum_{a,b \in \{0,1,2\}^{n/2-1}}\left(\prod_{i=1}^{n/2-1}\mathbf{Q}[a_i,b_i]\right)\det\left(\mathbf{A}^{(x)}_{G[C(a)]}\right) \det\left(\mathbf{A}^{(x)}_{G[V \setminus C(a)]}\right)
\det\left(\mathbf{A}^{(y)}_{G[C(b)]}\right) \det\left(\mathbf{A}^{(y)}_{G[V \setminus C(b)]}\right)\\
	%
	&\ \algcomment{By Lemma~\ref{lem:dettut}}\\
	&= \sum_{a,b \in \{0,1,2\}^{n/2-1}}\left(\prod_{i=1}^{n/2-1}\mathbf{Q}[a_i,b_i]\right) \left( \sum_{\substack{ M_1 \in \partitionsetm{V}\\ C(a) \text{ splits } M_1 }} \prod_{\substack{\{i,j\} \in M_1 \\ i \prec j}} x^2_{ij} \right)
	\left( \sum_{\substack{ M_2 \in \partitionsetm{V}\\ C(b) \text{ splits } M_2 }} \prod_{\substack{\{i,j\} \in M_2 \\ i \prec j}} y^2_{ij} \right)\\
	&= \sum_{M_1, M_2 \in \partitionsetm{G}}
	\left(
		\sum_{a,b \in \{0,1,2\}^{n/2-1}} [C(a) \text{ splits } M_1]\left(\prod_{i=1}^{n/2-1}\mathbf{Q}[a_i,b_i]\right) [C(b) \text{ splits } M_2]\right)\cdot\\ 
	 &\hspace{4em}\Bigg(\prod_{\substack{\{i,j\} \in M_1 \\ i \prec j}} x^2_{ij} \Bigg) \Bigg(\prod_{\substack{\{i,j\} \in M_2 \\ i \prec j}} y^2_{ij} \Bigg)\\
	 &\ \algcomment{By Lemma~\ref{lem:narrowcutfact}}\\
	 &= \sum_{M_1, M_2 \in \partitionsetm{G}}
	 \matchMatrix_t[M_1,M_2] 
	 \Bigg(\prod_{\substack{\{i,j\} \in M_1 \\ i \prec j}} x^2_{ij} \Bigg) \Bigg(\prod_{\substack{\{i,j\} \in M_2 \\ i \prec j}} y^2_{ij} \Bigg).
\end{align*}
Since $P(x,y)$ has degree at most $4n$, the correctness follows by Lemma~\ref{lem:pit}
\end{proof}

\subsection{Directed Hamiltonicity}
Now we outline the approach towards the following theorem
\begin{theorem}\label{thm:dh}[\cite{DBLP:conf/icalp/BjorklundKK17}]
    There is an algorithm that given a $n$-vertex directed graph $G=(V,E)$ and prime number $p$, counts the number of Hamiltonian cycles of $G$ modulo $p$ in time $2^{n\left(1-\frac{1}{O(p \log p)}\right)}$.
\end{theorem}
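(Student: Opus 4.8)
The plan is to combine the algebraic template from Section 4 (evaluating a sum of determinants of Tutte-style matrices, cf. Lemma~\ref{lem:dettut}) with a modular-arithmetic reduction of the number of ``cut variables'' that need to be enumerated, in the spirit of the Narrow Cut Factorization (Lemma~\ref{lem:narrowcutfact}) but pushed further using the prime $p$. First I would set up a polynomial $P(x)$ whose monomials are in bijection with closed walks / cycle covers of $G$ weighted by edge variables, arranged so that $P$ counts Hamiltonian cycles modulo $p$: the standard trick is to write the number of Hamiltonian cycles as an inclusion--exclusion or a determinant-sum over a suitable structure (a ``connectivity matrix'' analogue of $\matchMatrix_t$ for directed graphs), and then to observe that over $\mathbb{F}_p$ the relevant combinatorial coefficients collapse. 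Concretely, one decomposes a candidate Hamiltonian cycle by cutting it at a carefully chosen set of positions, obtaining two (or several) directed path systems whose contributions factor as products of determinants of submatrices $\mathbf{A}^{(x)}_{G[\cdot]}$, exactly as in the vector--matrix--vector products of Algorithm~\ref{alg:undHam}.

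The key technical step — and the place where the parameter $p$ enters — is to replace the enumeration over $\{0,1,2\}^{t/2-1}$ (which gave $3^{n/2}$) by an enumeration over $\mathbb{Z}_p^{\,m}$ for $m \approx n/\log_p(\text{something})$, so that the total count of ``cut patterns'' is $p^{m} = 2^{n/O(\log p)} \cdot \ldots$; here the factorization that realizes $\matchMatrix$-type matrices as Kronecker powers of a fixed small matrix over $\mathbb{F}_p$ must have base dimension exponentially small in the block length, which is what yields the $1 - \tfrac{1}{O(p\log p)}$ exponent rather than $1 - \tfrac{1}{O(p)}$. I would then invoke Yates' algorithm (Lemma~\ref{lem:yates}) to perform the Kronecker-power vector--matrix--vector contraction in time proportional to the number of patterns times polynomial factors, compute each determinant in $n^{\omega+o(1)}$ time, and read off the Hamiltonian-cycle count modulo $p$ from a single field evaluation (or a small number of them if interpolation in one auxiliary variable is needed to isolate the degree-$n$ part).

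The main obstacle I expect is the design of the modular factorization itself: one needs a family of cuts $\mathcal{C}$ and a fixed matrix $\mathbf{Q}_p$ over $\mathbb{F}_p$ such that the directed connectivity matrix equals $\cutMatrix[\cdot,\mathcal{C}]\,\mathbf{Q}_p^{\otimes m}\,(\cutMatrix[\cdot,\mathcal{C}])^\intercal$ over $\mathbb{F}_p$, with $m$ as large as possible relative to $\log$ of the dimension of $\mathbf{Q}_p$. This is genuinely the crux: it requires understanding which combinatorial ``splitting'' statistics of a cycle are $p$-periodic, and exploiting that $2^{k-1} \bmod p$ has period $O(p)$ in $k$ (by Fermat), combined with a bucketing of the $n$ cut positions into blocks of size $\Theta(p\log p)$ so that, within a block, the parity/residue of the component count can be encoded by a base-$O(p)$-dimensional tensor factor. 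Handling the counting (not just detection) also forces one to work over $\mathbb{F}_p$ rather than $\mathbb{F}_{2^k}$ and to control the degree of the evaluated polynomial so that a polynomial-identity-testing or interpolation step (Lemma~\ref{lem:pit}) isolates exactly the Hamiltonian contribution modulo $p$; the rest — plugging into Yates, bounding determinant computations — is routine given Sections 4.1--4.3.
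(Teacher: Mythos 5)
Your proposal takes a genuinely different route from the paper, and unfortunately it runs into a gap that the paper's actual proof is specifically designed to avoid. You build on the cut/connectivity-matrix factorization approach of Lemma~\ref{lem:narrowcutfact} and Theorem~\ref{thm:undham} and try to lift it from $\mathbb{F}_2$ to $\mathbb{F}_p$. But the identity $\matchMatrix_t = \cutMatrix_t\cutMatrix_t^\intercal$ is crucially a characteristic-two fact: it uses that the number of cuts splitting both $A$ and $B$ is $2^{k-1}$, and $2^{k-1} \equiv [k=1] \pmod 2$. Over $\mathbb{F}_p$ with $p$ odd, $2^{k-1}$ is never $0$, only periodic, and the periodicity alone does not give you an indicator of $k=1$ as the entry of a small-rank Kronecker factorization. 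No such factorization is known, and the paper explicitly remarks that despite Theorem~\ref{thm:dh} there is still no $O^*((2-\eps)^n)$ algorithm for directed Hamiltonicity — which is precisely what your approach would yield (for the decision problem) if the postulated factorization existed. So the ``main obstacle'' you flag at the end is not merely a technical detail to be filled in; it is the step that is not known to be achievable and that the actual argument routes around entirely.

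The paper's proof uses a completely different mechanism. It starts from the directed Matrix Tree Theorem: $\det(\mathbf{L}_G^{-s})$, where $\mathbf{L}_G^{-s}$ is the reduced Laplacian, counts out-branchings rooted at $s$ (Lemma~\ref{lem:factor} and equation~\eqref{eq:ob}). It then applies inclusion--exclusion over subsets $F \subseteq V \setminus \{t\}$ to sieve out-branchings down to those with a unique sink $t$, i.e.\ Hamiltonian $s$--$t$ paths: $\sum_{F} (-1)^{|F|}\det(\mathbf{L}^{-s}_{G-\mathrm{out}(F)})$, equation~\eqref{dethc}. The exponent gain modulo $p$ then comes from sparsity of the nonzero terms, not from a factorization: adding a uniformly random number of parallel $(t,v)$ arcs for each $v \ne s$ randomizes the diagonal entries of the Laplacian, and for any $v \in F$ the corresponding row of $\mathbf{L}^{-s}_{G-\mathrm{out}(F)}$ has only its diagonal entry possibly nonzero; if that entry is $0 \bmod p$ the determinant vanishes mod $p$. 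Hence a summand $F$ survives with probability $(1-1/p)^{|F|}$, the expected number of surviving $F$ is small, and one enumerates a superset of them in $2^{n(1-1/O(p\log p))}$ time. Your proposal does not contain the Laplacian/out-branching determinant, the inclusion--exclusion over sinks, or the random-parallel-edge trick that makes most terms vanish modulo $p$, and these are the actual engines of the result.
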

A stronger version appeared in~\cite{DBLP:conf/icalp/BjorklundKK17}, but we slightly simplified the statement.

The curious situation is that, despite this algorithm, there is still no known algorithm to solve \textsc{Directed Hamiltonicity} in $O^*((2-\eps)^n)$ time, for some $\eps >0$. For many algorithms, including the two previous algorithms from this section, algorithms that count the number of solutions modulo a prime number $p$ can be extended with Lemma~\ref{lem:pit} to solve problem of detecting a solution or they can solve a weighted modular counting version of the problem to which the decision version can be reduced with the \emph{isolation lemma}~\cite{DBLP:journals/combinatorica/MulmuleyVV87}. The algorithm behind~\ref{thm:dh} however explicitly relies on the fact that many intermediate computations result in value $0$ (and therefore can be skipped), which complicates the aim to ensure that solutions do not cancel each other out.

Let $G$ be a $n$-vertex directed multigraph. Since $G$ is a multi-graph, the set of edges $E(G)$ of $G$ is a multi-set. Fix two vertices $s,t \in V(G)$, and assume there is exactly one edge from $t$ to $s$ (if there are more, the approach can be easily adjusted by multiplying the outcome with the multiplicity of the edge $(t,s)$).
Let $\mathbf{A}_G$ be the adjacency matrix of $G$, so $\mathbf{A}_G[v,w] \in \mathbb{Z}_{\geq 0}$ describes the number of arcs $(v,w) \in E(G)$. We assume $\textbf{A}_G[v,v]=0$, i.e. the graph has no loops. Denote $d_G(w)=\sum_{v \in V}\mathbf{A}_G[v,w]$ for the in-degree of a vertex, and let $\mathbf{D}_G$ be the diagonal matrix with entry $\mathbf{D}_G[v,v]=d_{G}(v)$ for every vertex $v$.
The \emph{Laplacian} $\mathbf{L}_G$ is defined as $\mathbf{D}_G-\mathbf{A}_G$.
Also define the incidence matrices
\[
    \mathbf{I}_G[u,(v,w)] =
    \begin{cases}
        1, & \text{if } u=w,\\
        0, & \text{otherwise },
    \end{cases}
    \qquad 
    \text{and}
    \qquad
    \mathbf{O}_G[u,(v,w)] =
    \begin{cases}
        1, & \text{if } u=v,\\
        0, & \text{otherwise }.
    \end{cases}
\]
Note that
\begin{equation}\label{eq:factor}
    \begin{aligned}
    \left((\mathbf{I}_G-\mathbf{O}_G) \cdot \mathbf{I}_G^\intercal \right)[u,x] &= \sum_{(v,w) \in E(G)} ([u=w]-[u=v])([w=x])\\
    &=    \begin{cases}
    	|\{(v,w) \in E(G): u = w\}|, & \text{if } u=x,\\
    	-|\{(v,w) \in E(G): u =v,w=x\}|, & \text{otherwise}
    \end{cases}\\
    &=    \begin{cases}
        d_{G}(u), & \text{if } u=x,\\
        -\mathbf{A}_G[u,x], & \text{otherwise}
    \end{cases}\\
    &= (\mathbf{D}_G-\mathbf{A}_{G})[u,x] = \mathbf{L}_G[u,x].
    \end{aligned}
 \end{equation}
 Let $\mathbf{L}^{-s}_G$ denote the matrix obtained from $\mathbf{L}_G$ by removing the row and column indexed by $s$, and let $\mathbf{I}^{-s}_G$ and $\mathbf{O}^{-s}_G$ denote the matrices obtained by removing the row indexed by $s$ from respectively $\mathbf{I}^{-s}_G$ and $\mathbf{O}^{-s}_G$. By~\eqref{eq:factor} we have that
 \[
    \mathbf{L}^{-s}_G = (\mathbf{I}^{-s}_G-\mathbf{O}^{-s}_G)  (\mathbf{I}^{-s}_G)^\intercal.
 \]
We call a subset $X \subseteq E(G)$ an \emph{out-branching} if $X$ is a rooted spanning tree with all arcs directed away from the root.
 
\begin{lemma}\label{lem:factor}
    If $X \in \binom{E(G)}{n-1}$ we have that
    \begin{equation}\label{cbterm}
        \det\left((\mathbf{I}^{-s}_G-\mathbf{O}^{-s}_G)[\cdot,X]\right) \cdot \det\left(\mathbf{I}^{-s}_G[\cdot,X]\right) = [X \text{ is an out-branching rooted at $s$}].
    \end{equation}
\end{lemma}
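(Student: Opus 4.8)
The plan is to interpret both determinants combinatorially using the Cauchy–Binet-style expansion and the structure of the incidence-type matrices, and then argue that the product is nonzero precisely when $X$ is an out-branching rooted at $s$. First I would note that $\mathbf{I}^{-s}_G[\cdot,X]$ is a square matrix (rows indexed by $V \setminus \{s\}$, columns by the $n-1$ arcs in $X$), and each column (an arc $(v,w)$) has exactly one $1$, in the row $w$ — unless $w=s$, in which case the column is all zero. So $\det(\mathbf{I}^{-s}_G[\cdot,X])$ is nonzero only if every arc in $X$ has its head in $V \setminus \{s\}$ \emph{and} the head map $(v,w) \mapsto w$ is a bijection from $X$ onto $V \setminus \{s\}$; in that case the matrix is a permutation matrix and the determinant is $\pm 1$. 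Combinatorially, ``each vertex of $V\setminus\{s\}$ is the head of exactly one arc of $X$, and $s$ is the head of none'' says exactly that $X$, viewed as a functional graph on in-degrees, has every non-$s$ vertex with in-degree $1$ and $s$ with in-degree $0$: this is the in-degree sequence of an out-branching rooted at $s$ (though not yet acyclicity).

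Next I would handle $\det\bigl((\mathbf{I}^{-s}_G-\mathbf{O}^{-s}_G)[\cdot,X]\bigr)$. Here column $(v,w)$ has a $+1$ in row $w$ (if $w \ne s$) and a $-1$ in row $v$ (if $v \ne s$); this is precisely the (reduced) signed incidence matrix of the directed graph $([n],X)$ with the row of $s$ deleted. It is standard — this is essentially the matrix-tree setup — that when $X$ has $n-1$ arcs, this determinant is $\pm 1$ if $X$ (as an undirected edge set) is a spanning tree of $G$ and $0$ otherwise. I would derive this from the classical fact that a square submatrix of the incidence matrix of an undirected graph is nonsingular iff the corresponding edge set is a spanning forest touching all rows, i.e. a spanning tree here since $|X| = n-1$; deleting the row of $s$ removes the single linear dependence and leaves an invertible matrix exactly in the spanning-tree case. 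Alternatively, one can cite the all-minors matrix-tree theorem directly.

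Combining the two: the left side of \eqref{cbterm} is nonzero iff simultaneously (i) the underlying undirected edge set of $X$ is a spanning tree, and (ii) every vertex except $s$ has in-degree exactly $1$ in $X$ and $s$ has in-degree $0$. A spanning tree on $n$ vertices with such in-degrees must have all edges oriented away from $s$: starting from $s$ (in-degree $0$) and walking along tree edges, each step is forced to increase distance from $s$ since every other vertex absorbs its unique incoming arc from its tree-parent. Hence $X$ is an out-branching rooted at $s$, and conversely any out-branching rooted at $s$ satisfies (i) and (ii), making both determinants $\pm 1$. The only remaining point is that the product equals $+1$ rather than $-1$ in the out-branching case: the main obstacle is checking this sign. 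I would argue it by picking a convenient ordering — order the rows $V\setminus\{s\}$ and columns $X$ so that each arc's head is in the matching position; then $\mathbf{I}^{-s}_G[\cdot,X]$ is the identity, and $(\mathbf{I}^{-s}_G-\mathbf{O}^{-s}_G)[\cdot,X]$ is upper-triangular with respect to the BFS/DFS order from $s$ (each arc contributes a $+1$ on the diagonal from its head and a $-1$ in the strictly-earlier row of its tail), so its determinant is $+1$. Thus both factors are $+1$ and their product is $1 = [X \text{ is an out-branching rooted at } s]$, which completes the proof.
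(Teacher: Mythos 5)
Your proof is correct, but it follows a genuinely different route from the paper's. The paper argues by a direct four-way case analysis: it separately shows that one of the two determinants vanishes when (i) some arc has head $s$, (ii) some vertex has in-degree $\geq 2$, or (iii) $X$ contains a cycle, and then in the remaining case (out-branching) observes that both matrices have exactly one contributing permutation $\sigma$ in their Leibniz expansions (the ``unique incoming arc'' bijection), each contributing $\mathrm{sgn}(\sigma)$, so the product is $\mathrm{sgn}(\sigma)^2 = 1$. You instead \emph{decouple} the two determinants: you characterize $\det(\mathbf{I}^{-s}_G[\cdot,X]) \neq 0$ as the in-degree condition and $\det((\mathbf{I}^{-s}_G-\mathbf{O}^{-s}_G)[\cdot,X]) \neq 0$ as the spanning-tree condition (importing the standard fact about reduced signed incidence matrices / matrix-tree theory), and then note that the conjunction of the two is exactly ``out-branching rooted at $s$.'' Your sign argument also differs: you normalize by a BFS ordering that makes $\mathbf{I}^{-s}_G[\cdot,X]$ the identity and $(\mathbf{I}^{-s}_G-\mathbf{O}^{-s}_G)[\cdot,X]$ upper triangular with unit diagonal. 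Both arguments are valid; yours buys the conceptual clarity of identifying each factor's vanishing locus independently and a transparent sign computation, at the cost of invoking matrix-tree machinery, while the paper stays entirely self-contained and elementary. One small remark: when you reorder rows and columns to normalize, be explicit that the same row and column permutations are applied to both matrices simultaneously, so the sign change cancels in the product --- this is implicit in your write-up but worth stating.
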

\begin{proof}
We distinguish four cases:
\begin{itemize}
    \item If $(v,s) \in X$ for some $v \in V(G)$, then $\det(\mathbf{I}^{-s}_G[\cdot,X])=0$, since the column in $\mathbf{I}^{-s}_G[\cdot,X]$ indexed by $(v,s)$ consists of only zeroes
    \item If there are two distinct edges $(u,w),(v,w) \in X$ then $\det(\mathbf{I}^{-s}_G[\cdot,X])=0$ since some vertex $x$ has no incoming edges in $X$ and its corresponding row consists of only zeroes
    \item If $X$ forms a cycle, this cycle cannot pass through $s$ and hence the columns of $\mathbf{I}^{-s}_G-\mathbf{O}^{-s}_G$ that are indexed by the vertices in the cycle sum to $0$, implying $\det\left((\mathbf{I}^{-s}_G-\mathbf{O}^{-s}_G)[\cdot,X]\right)=0$
    \item Otherwise $X$ is an out-branching rooted at $s$. Then $\mathbf{I}^{-s}_G[\cdot,X]$ is a permutation matrix and therefore its determinant is $sgn(\sigma)$, where $\sigma$ maps each vertex to its unique incoming arc in $X$. But $\sigma$ is also the only permutation contributing to the determinant of $\mathbf{I}^{-s}_G-\mathbf{O}^{-s}_G$, and it contributes a factor $sgn(\sigma)$ as well. Thus~\eqref{cbterm} reduces to $sgn(\sigma)^2=1$.
\end{itemize}
\end{proof}
Now we have by Lemma~\ref{lem:factor}  and the Cauchy-Binet formula\footnote{The Cauchy-Binet formula states the following: if $\mathbf{A}$ is an $a\times b$ matrix and $\mathbf{B}$ is a $b\times a$ matrix, then $\det(\mathbf{A}\mathbf{B})=\sum_{X \subseteq \binom{[b]}{a}} \det(\mathbf{A}[\cdot,X])\det(\mathbf{B}[X,\cdot])$.} that
\begin{equation}
\label{eq:ob}
\begin{aligned}
   \det (\mathbf{L}^{-s}_G)  &=  \det ( (\mathbf{I}^{-s}_G-\mathbf{O}^{-s}_G)  (\mathbf{I}^{-s}_G)^T )\\
   &= \sum_{X \subseteq E(G)} \det\left((\mathbf{I}^{-s}_G-\mathbf{O}^{-s}_G)[\cdot,X]\right) \cdot \det\left(\mathbf{I}^{-s}_G[\cdot,X]\right)\\
    &= |\{ B: B \text{ is an outbranching of $G$ rooted at $s$}\}|.
\end{aligned}
\end{equation}

\begin{lemma} Let $out(F)$ denote all edges with starting point in $F$ and let $G-out(F)$ denote the graph obtained from $G$ by removing all edges $out(F)$. The number of Hamiltonian cycles in $G$ containing the arc $(t,s)$ equals
\begin{equation}\label{dethc}
    \sum_{F \subseteq V(G) \setminus \{t\}} (-1)^{|F|} \det(\mathbf{L}^{-s}_{G-out(F)}).
\end{equation}
\end{lemma}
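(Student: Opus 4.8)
The plan is to combine equation~\eqref{eq:ob} with a standard inclusion--exclusion cancellation. First I would rewrite each term of~\eqref{dethc} via~\eqref{eq:ob}: since $\det(\mathbf{L}^{-s}_{H})$ equals the number of out-branchings of $H$ rooted at $s$, the term $\det(\mathbf{L}^{-s}_{G-out(F)})$ counts the out-branchings of $G-out(F)$ rooted at $s$. Such an object is precisely an out-branching $B$ of $G$ rooted at $s$ that uses no arc leaving a vertex of $F$. Writing $Z(B)\subseteq V(G)$ for the set of vertices whose out-degree in $B$ is $0$, the condition ``$B$ uses no arc leaving $F$'' is exactly $F\subseteq Z(B)$. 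Hence $B$ contributes to $\det(\mathbf{L}^{-s}_{G-out(F)})$ if and only if $F\subseteq Z(B)$.

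Next I would swap the order of summation in~\eqref{dethc}:
\[
\sum_{F\subseteq V(G)\setminus\{t\}}(-1)^{|F|}\det\!\left(\mathbf{L}^{-s}_{G-out(F)}\right)
=\sum_{\substack{B\text{ out-branching}\\\text{of }G\text{ rooted at }s}}\ \sum_{F\subseteq Z(B)\setminus\{t\}}(-1)^{|F|}.
\]
The inner sum is the alternating sum over a Boolean lattice, so it equals $1$ if $Z(B)\setminus\{t\}=\emptyset$ and $0$ otherwise. Therefore the right-hand side of~\eqref{dethc} equals the number of out-branchings $B$ of $G$ rooted at $s$ with $Z(B)\subseteq\{t\}$.

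It then remains to identify these out-branchings. A spanning tree on $n$ vertices has $n-1$ arcs, so the out-degrees in $B$ sum to $n-1$, and the root $s$ has an outgoing arc (otherwise $B$ is disconnected for $n\ge 2$), so $s\notin Z(B)$. If $Z(B)\subseteq\{t\}$ then all $n-1$ vertices of $V(G)\setminus\{t\}$ have out-degree at least $1$; since these out-degrees sum to $n-1$, each equals exactly $1$ and $t$ has out-degree $0$. Combined with the in-degree constraints of an out-branching ($s$ has in-degree $0$, every other vertex in-degree $1$), this forces $B$ to be a Hamiltonian path from $s$ to $t$; conversely every Hamiltonian $s$--$t$ path is such a $B$. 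Finally, because $G$ contains exactly one arc $(t,s)$, appending that arc gives a bijection from Hamiltonian $s$--$t$ paths to Hamiltonian cycles of $G$ through $(t,s)$, which establishes~\eqref{dethc}.

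There is no deep obstacle here; the only points needing care are the combinatorial step that a rooted spanning tree with all leaves equal to $t$ is a path (the out-degree counting above), and keeping the multiset bookkeeping of parallel arcs consistent between~\eqref{eq:ob}, the definition of $out(F)$, and the count of Hamiltonian cycles—the alternating-sum cancellation itself is immediate.
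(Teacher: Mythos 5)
Your proof is correct and follows essentially the same route as the paper: apply \eqref{eq:ob} to each term, swap the order of summation so the inner sum is an alternating sum over subsets of $\mathrm{sinks}(B)\setminus\{t\}$, use the fact that this sum is $1$ iff the index set is empty, and then observe that out-branchings whose only sink is $t$ are exactly the Hamiltonian $s$--$t$ paths. The only difference is that you spell out the out-degree/in-degree counting argument identifying single-sink out-branchings with Hamiltonian paths, which the paper leaves implicit.
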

The proof combines~\eqref{eq:ob} with inclusion-exclusion: 
\begin{proof}
By~\eqref{eq:ob}, we can rewrite~\eqref{dethc} into
\[
    \sum_{B} \sum_{F \subseteq V(G)\setminus \{t\}} (-1)^{F} [B \cap out(F)=\emptyset] = \sum_{B} \sum_{F \subseteq sinks(B)\setminus \{t\}} (-1)^{F},
\]
where the sums run over all out-branchings $B$ of $G$ rooted at $s$ and $sinks(B)$ denotes $\{v \in V(G): \forall w \in V(G): (v,w) \notin B\}$. Since every non-empty set has equally many odd-sized subsets as it has even-sized subsets, only sets $B$ with one sink (being $t$) contribute to~\eqref{dethc}. These contribute exactly one to~\eqref{dethc} and since these are exactly the Hamiltonian paths from $s$ to $t$ visiting all vertices, and hence (after addition of the arcs $(t,s)$) Hamiltonian cycles containing the arc $(t,s)$, the lemma follows.
\end{proof}
So what do we gain with computing the number of Hamiltonian cycles via~\ref{dethc}, since it still consists of $2^n$ summands? The point is that for many summands $F$ the term $\det(\mathbf{L}^{-s}_{G-out(F)})$ will be equal to $0$ (modulo a small number $p$).
In particular, on any row in $\mathbf{L}^{-s}_{G-out(F)}$ corresponding to a vertex in $F$, all entries will be zero except possibly the diagonal entry. Thus, if we work modulo a small $p$, if this diagonal entry would be zero as well (modulo $p$), then in fact the matrix does not have full rank and hence determinant is equal to $0$ (modulo $p$). But how to ensure that the diagonal entry is equal to $0$ modulo $p$?

Note that the number of arcs from $t$ to $v$ for $v\neq s$ does not matter at all for the outcome of the above algorithm.
So we could as well add a (uniformly, independently chosen) random number $r_v$ of edges from $t$ to each vertex $v \neq s$.

Observe that $\det(\mathbf{L}^{-s}_{G-out(F)})$ is equal to zero modulo $p$ whenever $\sum_{u \in V\setminus (F \cup \{t\})}\mathbf{A}[u,v]+r_v$ is equal to $0$ modulo $p$ for some $v \in F$.
Thus the probability that a summand $F$ contributes to~\eqref{dethc} is $(1-1/p)^{|F|}$. The algorithm behind Theorem~\ref{thm:dh} now enumerates a superset of these contributing terms in the claimed time bound and afterwards uses the enumerated list to evaluate~\eqref{dethc} in the direct manner. We skip details on the procedure that enumerates the contributing terms, and refer to the original paper~\cite{DBLP:conf/icalp/BjorklundKK17} for details.

\subsection{\textsc{Traveling Salesperson Problem}}
If we extend the \textsc{Undirected Hamiltonicity} problem with edge weights, we arrive at the following well known problem

\defproblemu{\textsc{Traveling Salesperson Problem} (TSP)}{A undirected graph $G=(V,E)$, distances $w:E \rightarrow \mathbb{N}$}{Find a Hamiltonian cycle $C$ of $G$ that minimizes $w(C)$.}

The algorithms from the previous section extend to this problem at some cost: For example, in~\cite{DBLP:journals/siamcomp/Bjorklund14} a write-up is given of a $O^*(1.66^n W)$ time algorithm for TSP, which also gives a $O^*(1.66^n /\eps )$ time $(1+\eps)$-approximation using standard rounding tricks. Nevertheless, it is interesting to see whether the pseudo-polynomial factor $W$ can be avoided in this run time. 
Especially, since the algebraic algorithms discussed before seem impossible to solve this optimization variant exactly without incurring this pseudo-polynomial overhead in the run time.

As such, the natural question whether the natural $O^*(2^n)$ time dynamic programming algorithm by Bellman~\cite{bellman1962dynamic}, Held and Karp~\cite{held1962dynamic} can be improved remains elusive.
In the model of tropical circuits (modeling, to some extent, dynamic programming algorithms), it is even shown that no faster algorithm exists~\cite{DBLP:journals/jacm/JerrumS82} (see also the recent textbook~\cite{jukna2023tropical}).

Faster algorithms were given for graphs of bounded degree~\cite{DBLP:journals/talg/BjorklundHKK12}, graphs of small pathwidth and treewidth~\cite{DBLP:journals/iandc/BodlaenderCKN15}, and (assuming $\omega=2$, where $\omega$ is the smallest number such that $q \times q$ matrices can be multiplied in $q^{\omega+o(1)}$ time) for bipartite graphs~\cite{DBLP:conf/stoc/Nederlof20}.

It was shown in~\cite{DBLP:journals/siamcomp/GurevichS87} that TSP can be solved in $4^n n^{O(\log n)}$ time and poly space (see e.g. \cite[Section 10.1]{DBLP:series/txtcs/FominK10}), but it is not known whether it can be solved faster $O^*((4-\eps)^n)$ time and polynomial space, for some $\eps >0$.

\section{Subset Sum}\label{sec:sss}

Another computational problem that saw exciting progress from the perspective of fine-grained complexity in the last decade or so is the following:

\defproblemu{\textsc{Subset Sum}}{A weight function $w:[n] \rightarrow [W]$, and a target integer $t$.}{Is there a subset $S \subseteq [n]$ such that $w(S)=t$.}

Here we use the notation from earlier sections that $w(X)=\sum_{e \in X}w(e)$ and if $\mathcal{F} \subseteq 2^{[n]}$ is a set family then we also denote $w(\mathcal{F}) = \{w(F): F \in \mathcal{F}\}$.

Since there is a reduction from the more general \textsc{Knapsack} problem to \textsc{Subset Sum} in the regime that $n$ is small (see~\cite{DBLP:conf/mfcs/NederlofLZ12}), all remarks below apply to the (arguably, more natural and central) \textsc{Knapsack} problem as well. But, for the sake of brevity, we restrict our attention to \textsc{Subset Sum}.

\subsection{Meet in the Middle and The Representation Method}

An elegant algorithm that solves \textsc{Subset Sum} faster than the completely na\"ive $O^*(2^n)$ time algorithm was already presented 50 years ago~\cite{DBLP:journals/jacm/HorowitzS74}.
In this algorithm, and the one that we will subsequently discuss, a central role will be played by two ``lists''.

In the first approach, these lists are defined as follows:
Partition the set $[n]$ into two sets $L,R$ of size $n/2$ each (assuming for convenience here and later that $n$ is a multiple of $4$, and hence even), and define
\begin{equation}\label{eq:ss}
    \mathcal{L} := 2^{L}, \qquad \mathcal{R} := 2^{R}.
\end{equation}

Now the algorithm is as follows: First, enumerate and sort the numbers in $w(\mathcal{R})$, and for each $X \in \mathcal{L}$ we check with binary search whether $t-w(X) \in w(\mathcal{R})$. If such element exists we can output\true{} (since we can take $S=X \cup Y$ where $Y$ is the subset of $R$ satisfying $w(Y)=t-w(X)$, as $X$ and $Y$ are disjoint). If no such element exists in any iteration we can output\false (since, if a set $S$ exists with $w(S)=t$, then we would have detected it at the iteration with $X = S \cap L$).
Since binary search runs in $\log |\mathcal{R}|$ time, this procedure runs in $O^*(2^{n/2})$ time.

A notable open question is whether this can be improved to, say, $O^*(2^{0.4999n})$ time.
In~\cite{DBLP:conf/eurocrypt/Howgrave-GrahamJ10} surprising progress was made on this in the context of random instances (as opposed to worst-case analysis) with an elegant method called \emph{the representation method}. The algorithm of~\cite{DBLP:conf/eurocrypt/Howgrave-GrahamJ10} is outlined in Algorithm~\ref{alg:repmet}.
To state what it exactly achieves, we need the following definition:

\begin{definition}
 A \emph{pseudo-solution} is a pair $(X,Y) \in \binom{[n]}{n/4} \times \binom{[n]}{n/4}$ such that $w(X)+w(Y)=t$.
\end{definition}

We make the following assumption (called \textbf{A1}-\textbf{A3}):
\begin{itemize}
\item[\textbf{A1}] The number of \emph{pseudo-solutions} is at most $2^{0.9n}$.
\end{itemize}
\noindent Moreover, if the instance is a YES-instance, then there exists a set $S \subseteq [n]$ such that $w(S)=t$,
\begin{itemize}
\item[\textbf{A2}] $|S|=n/2$, and
\item[\textbf{A3}] $|w(2^S)|=2^{n/2}$.
\end{itemize}

It can be proved that, if all $w(1),\ldots,w(n)$ are picked uniformly and independently at random from $[2^n]$ and $t:= w([n/2])$, then Assumptions \textbf{A1}-\textbf{A3} indeed hold with high probability.

The algorithm is depicted in Algorithm~\ref{alg:repmet}. The idea behind this algorithm is as follows: If we would define $\mathcal{L}=\mathcal{R}=\binom{[n]}{n/4}$, then ratio of the number of pairs $(X,Y) \in \mathcal{L}\times\mathcal{R}$ that witness the solution (i.e. $X$ and $Y$ are disjoint and $X \cup Y =S$) divided by the list size $\binom{n}{n/4}$ is
\[
	\binom{|S|}{n/4} / \binom{n}{n/4} \geq \tfrac{1}{n} 2^{\left(h\left(\tfrac{1}{2}\right)/2-h\left(\tfrac{1}{4}\right)\right)n}\geq 2^{-0.32n},
\]
where we use \eqref{eq:binstir} and its subsequent remarks in the first inequality.
This ratio is larger than the analogous ratio $1/ 2^{n/2}$ of $\mathcal{L}$ and $\mathcal{R}$ as defined in~\ref{eq:ss}. This can be leveraged by sampling one witness by picking a random prime $p$ and guessing $t_L \equiv_p w(X)$, the crux being that this single guess also determines $w(Y)$ modulo $p$ since $w(Y) \equiv_p t- t_l$.

Line~\ref{lin:clist} can be implemented to run in time $O^*(2^{0.45n} + |\mathcal{L}|+|\mathcal{R}|)$ as follows: Create a directed graph $G=(\{0,\ldots,n\} \times \mathbb{Z}_p \times \{0,\ldots,n/4\}, A)$ where we have for each $0\leq i \leq n-1$, $j \in \mathbb{Z}_p$, $k \in \{0,\ldots,n/4\}$ arcs
\[
((i,j,k),(i+1,j,k)), \quad \text{ and } \quad  ((i,j,k),(i+1,j+w(i+1) \textnormal{ mod } p,k+1))
\]
in $A$. It is easy to see that paths from $(0,0)$ to $(n,t_L,n/4)$ (respectively, $(n,t-t_L,n/4)$) correspond to elements of $\mathcal{L}$ (respectively, $\mathcal{R}$) and that these paths can be enumerated in the claimed time bound with standard (backtracking) methods.

Also, on Line~\ref{lin:binsearcht}, we can enumerate all relevant $Y \in \mathcal{R}$ in $O^*(1)$ time per time per item of $\mathcal{R}$ with binary search using the sorted data structure constructed on Line~\ref{lin:binsearch}.

Hence, Algorithm~\ref{alg:repmet} can be implemented such that it runs in $O^*(2^{0.45n}+|\mathcal{L}|+|\mathcal{R}|+P)$ time, where $P$ is the number of pseudo-solutions $(X,Y) \in \mathcal{L}\times\mathcal{R}$ such that $w(X)= t_L (\mathrm{mod}\ p)$. Since $t_L$ is picked uniformly at random, we have by \textbf{A1} that the expected number of such pseudo-solutions is at most $2^{0.9n} /p =2^{0.45n}$. Thus, under assumption \textbf{A1}, Algorithm~\ref{alg:repmet} runs in $O^*(2^{0.45n})$ time.

\begin{algorithm}
	\caption{Representation Method for \textsc{Subset Sum}}
	\label{alg:repmet}
	\begin{algorithmic}[1]
		\REQUIRE $\mathtt{RepMethod}(w : [n]\rightarrow [W],t)$
		\ENSURE \true with probability at least $1/2$, if a $X \subseteq [n]$ exists with $w(X)=t$; \false otherwise
		\STATE Sample uniformly and independently a prime $p \in \{2^{0.45n},\ldots,2^{0.45n+1} \}$ and $t_L \in \mathbb{Z}_p$
        \STATE Construct
        \[
            \mathcal{L} := \left\{ X \in \binom{[n]}{n/4} : w(X) \equiv_p t_L \right\},  \qquad  \mathcal{R} := \left\{ Y \in \binom{[n]}{n/4} : w(Y) \equiv_p t- t_L \right\}.
        \]\label{lin:clist}
        \STATE Sort $\mathcal{R}$ using as key $w(Y)$ for each $Y \in \mathcal{R}$\label{lin:binsearch}
        \FORALL{$X \in \mathcal{L}$}
        \FORALL{$Y \in \mathcal{R}$ such that $w(X)+w(Y)=t$ }\label{lin:binsearcht}
        \STATE If $X$ and $Y$ are disjoint, \algorithmicreturn \true\label{linov}
        \ENDFOR
        \ENDFOR
        \STATE \algorithmicreturn \false
	\end{algorithmic}
\end{algorithm}

\begin{theorem}
If assumptions \textbf{A1}-\textbf{A3} are satisfied, then Algorithm~\ref{alg:repmet} outputs in $O^*(2^{0.45n})$ time \true with at least constant probability.
\end{theorem}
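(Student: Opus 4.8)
The plan is to establish the time bound and the success probability in turn; the time bound was essentially argued in the paragraph preceding the statement, so the real work is the success probability. Throughout, note first that the algorithm is one‑sided: it outputs \true only after exhibiting disjoint $X,Y$ with $w(X)+w(Y)=t$, hence with $w(X\cup Y)=t$, so it never produces a false positive and it suffices to lower bound the probability of a \true output on a YES‑instance. For the running time I would argue exactly as above: Algorithm~\ref{alg:repmet} runs in $O^*(2^{0.45n}+|\mathcal{L}|+|\mathcal{R}|+P)$ time, where $P$ is the number of pseudo‑solutions $(X,Y)$ with $w(X)\equiv_p t_L$. A fixed pseudo‑solution has $w(X)\equiv_p t_L$ with probability $1/p$ over $t_L$, so by \textbf{A1} we get $\mathbb{E}[P]\le 2^{0.9n}/p\le 2^{0.45n}$, and likewise $\mathbb{E}[|\mathcal{L}|],\mathbb{E}[|\mathcal{R}|]\le\binom{n}{n/4}/p=O^*(2^{0.45n})$; by Markov's inequality the total time is $O^*(2^{0.45n})$ with probability at least, say, $3/4$, and one may simply abort any run exceeding this budget.

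For correctness, let $S$ be a solution satisfying \textbf{A2},\textbf{A3}. Call $X\in\binom{S}{n/4}$ a \emph{representation}: if $w(X)\equiv_p t_L$ then $X\in\mathcal{L}$, and since $w(S\setminus X)=t-w(X)\equiv_p t-t_L$ also $S\setminus X\in\mathcal{R}$, so the algorithm reports this disjoint pair on Line~\ref{linov}. By \textbf{A2} there are $N:=\binom{n/2}{n/4}=\Theta(2^{n/2}/\sqrt{n})$ representations, and by \textbf{A3} the integers $\{w(X): X\text{ a representation}\}$ are pairwise distinct and bounded by $nW=2^{O(n)}$. Fix a prime $p$ in the sampling range and let $Z=Z_{p,t_L}$ count the representations with $w(X)\equiv_p t_L$; then $\mathbb{E}_{t_L}[Z]=N/p$ and $\mathbb{E}_{t_L}[Z^2]=(N+2C_p)/p$, where $C_p$ is the number of unordered pairs of representations whose weights agree modulo $p$. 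By the second moment method (Paley--Zygmund), $\Pr_{t_L}[Z\ge 1\mid p]\ge(\mathbb{E}_{t_L}Z)^2/\mathbb{E}_{t_L}[Z^2]=N^2/\bigl(p(N+2C_p)\bigr)$.

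The crux is bounding $C_p$. The difference of the weights of two distinct representations is a nonzero integer of magnitude at most $nW=2^{O(n)}$, hence divisible by only $O(1)$ primes in $[2^{0.45n},2^{0.45n+1}]$; that interval contains $\Pi=\Theta(2^{0.45n}/n)$ primes by the prime number theorem, so averaging over the random $p$ gives $\mathbb{E}_p[C_p]=O(N^2/\Pi)=O(2^{0.55n})$, which dominates $N$. By Markov's inequality, for at least half of the primes $p$ we have $C_p=O(2^{0.55n})$, and for such $p$ the bound above yields $\Pr_{t_L}[Z\ge 1\mid p]\ge N^2/\bigl(p\cdot O(2^{0.55n})\bigr)=\Omega(1/n)$. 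Combining this with the uniform choice of $p$ shows a single run succeeds with probability $\Omega(1/\poly(n))$; running $O(n)$ independent copies leaves the total time $O^*(2^{0.45n})$ and amplifies the success probability to any fixed constant, which proves the theorem.

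I expect the main obstacle to be precisely this last amplification step. With only \textbf{A1}--\textbf{A3} at hand I do not see how to avoid the worst‑case estimate $C_p=O(N^2/\Pi)$ for typical $p$ (the $\Pi$ rather than $p\gg\Pi$ in the denominator costs a factor $n$), which is what forces the per‑run bound to be $\Omega(1/\poly(n))$ rather than $\Omega(1)$; the interesting open point is whether a sharper combinatorial input — for instance genuinely exploiting the Sidon‑type strength of \textbf{A3} rather than merely the distinctness it implies — lets one control the spread of the representation weights modulo a random prime well enough to get a constant‑probability single run.
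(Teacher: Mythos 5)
Your argument is, at its core, the same one the paper sketches: both proofs reduce the question to the spread of the representation weights $\{w(X) : X \in \binom{S}{n/4}\}$ modulo the random prime $p$, and both control that spread by counting collisions, using that two distinct representation weights differ by a nonzero integer of magnitude $2^{O(n)}$ and hence can agree modulo at most $O(1)$ of the primes in the sampling range. Where the paper compresses this into the phrase ``hashing properties of reducing a number modulo a random prime'' and concludes that the image has size $\Omega(2^{0.45n})$ with high probability, you unfold it explicitly as a Paley–Zygmund computation with the collision count $C_p$. Your version is therefore more self-contained and more honest about the quantitative loss: the Cauchy–Schwarz / second-moment bound $m_p \geq N^2/(2C_p+N)$ together with $\mathbb{E}_p[C_p]=O(N^2/\Pi)$ and $\Pi=\Theta(2^{0.45n}/n)$ genuinely only gives $m_p = \Omega(p/n)$ for a constant fraction of primes, hence per-run success probability $\Omega(1/n)$, and not $\Omega(1)$. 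That factor of $n$ is exactly the ratio $p/\Pi$ coming from the density of primes, and it is lost by the paper's sketch in exactly the same place: the claim $|\{x \bmod p : x \in w(\binom{S}{n/4})\}| \geq \Omega(2^{0.45n})$ (rather than $\Omega(2^{0.45n}/n)$) does not follow from \textbf{A1}–\textbf{A3} alone by the collision-counting argument, so your self-criticism applies to the paper's proof as well, not just to yours.

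The practical upshot is that your proposed patch — repeat the algorithm $O(\poly(n))$ times — is the right move and is harmless for the $O^*(2^{0.45n})$ time bound, and your framing of it as ``changing the algorithm'' is perhaps overly scrupulous: in this part of the literature the per-run success probability is routinely allowed to be $1/\poly(n)$ precisely because the $O^*$ notation absorbs the repetition. Your additional observation about the open question (whether the sum-distinct structure of $w$ restricted to $S$ from \textbf{A3}, which is strictly stronger than the mere pairwise distinctness both proofs use, could be leveraged to improve the single-run probability to a constant) is apt: without that or some further input, I don't see how either proof avoids the polynomial loss. One small stylistic point: on the running time you correctly note that the paper's bound is in expectation and that a Markov/abort step converts it to a high-probability bound, which is a useful clarification the paper leaves implicit.
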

\begin{proof}[Proof sketch.]
Since the runtime under assumption~\textbf{A1} was already discussed, we only focus on the correctness.
It is clear that if the algorithm outputs 
\true, it found two disjoint sets $X,Y$ with $w(X)+w(Y)=t$ and hence $X \cup Y$ is a solution. By assumption \textbf{A2} and \textbf{A3}, it remains to show that if a set $S$ with $|S|=n/2$ and $w(2^S)=2^{n/2}$ exists, indeed\true is returned with probability at least $1/2$. Note that $w(2^S)=2^{n/2}$ implies that all subsets of $S$ generate different sums with respect to the integers $w$. Hence $w\left(\binom{S}{n/4}\right)=\binom{n/2}{n/4}$.
It follows from the hashing properties of reducing a number modulo a random prime $p$ that with high probability
\[
    \left|\left\{ x\ (\textnormal{mod } p) : x \in w\left(\binom{S}{n/4}\right) \right\}\right| \geq \Omega(2^{0.45 n}).
\]
Conditioned on this event, we have with constant probability that $t_L$ is picked such that there exists $X \in \binom{S}{n/4}$ with $w(X)\equiv_p t_L$, and hence the pair $(X,Y)$ with $Y=S \setminus X$ will be observed to be disjoint at Line~\ref{linov} (unless\true was already returned in an earlier iteration).
\end{proof}

\subsection{Further recent progress}

The question whether the "meet-in-the-middle barrier" formed by the $O^*(2^{n/2})$ runtime of~\cite{DBLP:journals/jacm/HorowitzS74} can be improved has also been studied (and positively answered)  for similar problems~\cite{DBLP:conf/soda/ChenJRS22,DBLP:conf/esa/MuchaNPW19,DBLP:journals/corr/abs-2403-19117}.

Another popular topic of study is the space usage of algorithms. A famous improvement of~\cite{DBLP:journals/jacm/HorowitzS74} is the $O^*(2^{n/2})$ time and $O^*(2^{n/4})$ space algorithm from~\cite{DBLP:journals/siamcomp/SchroeppelS81}, which recently has been improved to a $O^*(2^{n/2})$ time and $O^*(2^{0.249999n})$ space algorithm~
\cite{DBLP:conf/stoc/NederlofW21} and even further to $O^*(2^{n/2})$ time and $O^*(2^{0.246n})$ space in~\cite{BelovaCKM24}.
If one is restricted to only polynomial space, for a long time the best known algorithm was the na\"ive $O^*(2^n)$ algorithm.
In~\cite{DBLP:conf/eurocrypt/BeckerCJ11} the authors introduce the idea to solve random instances with cycle finding algorithms, which was later extended to a $O^*(2^{0.86n})$ time polynomial space algorithm in~\cite{DBLP:journals/siamcomp/BansalGN018} that assumes random read only access to random bits (not stored in memory). This latter assumption can be removed with recent works on pseudo-randomness~\cite{DBLP:conf/soda/ChenJWW22, DBLP:conf/soda/0003Z23}.

\section{Other topics}\label{sec:other}

As mentioned earlier, this survey by no means aims to be an exhaustive survey of all (recent) developments in the field of fine-grained complexity of hard problems. Here we very briefly discuss (in arbitrary order) a few of such notable directions that could have been included in a longer version of this survey.

\paragraph{Parameterized Complexity of \nptime-Complete Problems.} 
While we mentioned parameterized complexity at the start of this survey, it should be stressed that also within parameterized complexity, quite some works address the fine-grained question discussed in this survey.
For example, for many problems parameterized by treewidth or pathwidth researchers found algorithms running in time $O^*(c^{k})$ and proofs that improvements to $O^*((c-\eps)^k)$ time refute Hypothesis~\ref{seth}~\cite{DBLP:journals/talg/LokshtanovMS18}. Other notable well-studied examples are \textsc{$k$-Path} (see~\cite{DBLP:journals/jcss/BjorklundHKK17} for the currently fastest algorithm in undirected graphs) and \textsc{Feedback Vertex Set} (see~\cite{DBLP:journals/talg/LiN22} for the currently fastest algorithm in undirected graphs).
The fine-grained complexity of $\nptime$-hard ``subset'' problems parameterized by solution size was also shown to have direct implications for the fine-grained complexity parameterized by search-space size via a method called "Monotone Local Search" (see e.g.~\cite{DBLP:journals/jacm/FominGLS19}), in a fashion that is somewhat similar to Algorithm~\ref{alg:ls}.

\paragraph{(Conditional) Lower bounds / Reductions.}
An important and wide topic we glanced over in this (optimistically-oriented) survey are (conditionally) lower bounds. That is, some evidence, or a proof under the assumption of a hypothesis, that certain na\"ive algorithms cannot be improved.

The most popular and relevant hypothesis in this direction is Hypothesis~\ref{seth}, but for surprisingly few problems discussed in this survey researchers were able to derive lower bounds as a consequence of~\ref{seth}.
Some notable lower bounds are a number of equivalences to \textsc{Hitting Set} and \textsc{Set Splitting}~\cite{DBLP:journals/talg/CyganDLMNOPSW16}, and a tight lower bound for pseudo-polynomial run times for \textsc{Subset Sum}~\cite{DBLP:journals/talg/AbboudBHS22} and a large body of lower bounds for run times parameterized by treewidth that was initiated in~\cite{DBLP:journals/talg/LokshtanovMS18},
although the latter two may be viewed more as 
fine-grained \emph{parameterized} complexity results.

An outstanding open question is whether a $O^*((2-\eps)^n)$ time algorithm (for some $\eps >0$) for \textsc{Set Cover} would refute Hypothesis~\ref{seth}, or conversely, whether a refutation of Hypothesis~\ref{seth} would imply a $O^*((2-\eps)^n)$ time algorithm for \textsc{Set Cover} (for some $\eps >0$). 
It is also not clear yet how the new results on \textsc{Set Cover}~\cite{DBLP:journals/corr/abs-2310-11926,DBLP:journals/corr/abs-2311-02774} relate to this.

Such connections were given in~\cite{DBLP:journals/talg/CyganDLMNOPSW16} for the \emph{parity} versions of the problems: Roughly speaking, the number of set covers of a set system on $n$ elements can be counted modulo $2$ in $O^*((2-\eps)^n)$ time (for some $\eps >0$) if and only if the number of solutions to an $n$-variate $k$-CNF formula can be counted in $O^*((2-\eps')^n)$ time (for some $\eps' >0$). These connections were made by relating both problems to the task of computing the parity of the number of independent sets in a bipartite graph. Motivated by this, it may also be interesting to study its decision problem:

\defproblemu{\textsc{Constrained Bipartite Independent Set}}
{A bipartite graph $G=(A \cup B, E)$ and integers $t_A,t_B$}
{Is there an independent set $I$ of $G$ such that $|I \cap A| = t_A$ and $|I \cap B| = t_B$?}

Given the above motivation, we believe it is an interesting question to see whether this problem can be solved in $O^*((2-\eps)^{|A|})$ time, for some $\eps >0$.

The lack of strong lower bounds conditioned on Hypothesis~\ref{seth} led some researchers to study the existence of certain relaxed versions of algorithms (such as proof systems~\cite{DBLP:conf/innovations/CarmosinoGIMPS16} and Merlin-Arthur~\cite{DBLP:conf/coco/Williams16} or polynomial formulations~\cite{DBLP:conf/soda/BelovaKMRRS24, DBLP:conf/sosa/KulikovM24, DBLP:conf/soda/BelovaGKMS23}) to give evidence of the ``hardness of showing hardness''.

\paragraph{Branching Algorithms.}
A notable paradigm that has been very well-studied in the realm of fine-grained complexity of \nptime-complete problems is that of branching algorithms. With this paradigm and advanced analyses (such as "Measure and Conquer"~\cite{DBLP:journals/jacm/FominGK09}), researchers were able to achieve the best worst-case run time bounds in terms of the number of vertices of the input graph for, among others, fundamental problems such as \textsc{Independent Set} and \textsc{Dominating Set}. We refer to~\cite[Chapters 2 and 6]{DBLP:series/txtcs/FominK10} for more details.

\paragraph{OPP algorithms.}
A natural, but not so frequently studied model of exponential time algorithm is that of One-sided Probabilistic Polynomial-time (OPP) algorithms. These are algorithms that run in polynomial time and are always correct when they output\false, but are only guaranteed to output \true on YES-instances with inversely exponentially small probability.\footnote{An example of an OPP algorithm for \textsc{$k$-CNF SAT} would be to simply sample a random solution and output whether it satisfies the given formula.}

While this is a natural and interesting model (since it captures most branching algorithms), a number of interesting lower bounds for algorithms captured by this model are presented in the literature. For example, for $n$-input \textsc{Circuit Sat},\footnote{In this problem one is given a Boolean circuit with $n$ inputs and asked whether there exist an assignment of the inputs that make the circuit evaluate to\true.} no polynomial time Monte Carlo algorithm can output\true with probability $(2-\eps)^{-n}$ under Hypothesis~\ref{eth}, as was shown in~\cite{DBLP:conf/stoc/PaturiP10}, and no polynomial time algorithm can output for every $n$-input \textsc{$k$-CNF SAT} formula a satisfying assignment with probability at least $2^{-n^{1-\Omega(1)}}$, unless the polynomial hierarchy collapses~\cite{DBLP:conf/focs/Drucker13}.

\paragraph{Circuit lower bounds.}

A notable application of algorithms improving over na\"ive algorithms was consolidated recently in a research line targeted at proving circuit lower bounds. It was shown that even tiny improvements over na\"ive algorithms for the problem of satisfiability of Boolean circuits implies circuit lower, and that such algorithms can be given by employing a batch evaluation technique based on a fast matrix multiplication or Lemma~\ref{lem:yates}. We refer to the survey~\cite{williams2014algorithms} for more details.

\paragraph{Coarser-Grained Complexity of \nptime-complete problems.}
Even in the coarser-grained complexity regime there have recently been surprising results and the complexity of some fundamental problems remains elusive. For example,
for \textsc{Subgraph Isomorphism}, it was shown in~\cite{DBLP:journals/jacm/CyganFGKMPS17} that the simple $n^{O(n)}$ time algorithm cannot be improved to $n^{o(n)}$ unless Hypothesis~\ref{eth}, where $n$ denotes the number of vertices of the graphs. On the other hand, somewhat surprisingly, it was shown in~
\cite{DBLP:conf/icalp/BodlaenderNZ16} that \textsc{Subgraph Isomorphism} on planar graphs can be solved in $2^{O(n/ \log n)}$, and not in $2^{o(n / \log n)}$ unless the ETH fails. For the \textsc{Many Visits TSP} problem, an old $n^{O(n)}$ time algorithm from~\cite{DBLP:journals/siamcomp/CosmadakisP84} was recently improved to a $2^{O(n)}$ time algorithm in~\cite{DBLP:journals/talg/BergerKMV20}.

A notable open question in this direction is the complexity of the \textsc{Edge Coloring} problem: can it be solved in $2^{o(n^2)}$ time? See also~\cite{DBLP:conf/sosa/KulikovM24}.

\paragraph{Quantum Speed-ups.} A relatively recent topic is to study how much quantum algorithms can speed up exact algorithms for \nptime-complete problems. A straightforward application of Grover search shows that \textsc{$k$-CNF SAT} can be solved in $2^{n/2}$ time, and a quantum analogue of Hypothesis~\ref{seth} was formulated in~\cite{DBLP:conf/stacs/BuhrmanPS21} that posits that this cannot be significantly improved. 
For other problems, finding a quantum speed-ups is not straightforward, but can still be found (see e.g.~\cite{DBLP:conf/soda/AmbainisBIKPV19}).

\section*{Acknowledgements.}
The author thanks the editors for being patient with this manuscript being overdue, Carla Groenland for useful discussions on Section~\ref{sec:setcov}, L\'aszl\'o Kozma for suggestions that improved the write-up of this survey and the reviewers and editor of IPL for their extensive comments on a previous version of Subsection~\ref{subsec:sparse}.

\bibliographystyle{alpha}
\bibliography{refs}

\newcommand{\etalchar}[1]{$^{#1}$}
\begin{thebibliography}{NvLvdZ12}

\bibitem[ABHS22]{DBLP:journals/talg/AbboudBHS22}
Amir Abboud, Karl Bringmann, Danny Hermelin, and Dvir Shabtay.
\newblock {SETH}-based lower bounds for {S}ubset {S}um and bicriteria path.
\newblock {\em {ACM} Trans. Algorithms}, 18(1):6:1--6:22, 2022.

\bibitem[ABI{\etalchar{+}}19]{DBLP:conf/soda/AmbainisBIKPV19}
Andris Ambainis, Kaspars Balodis, Janis Iraids, Martins Kokainis, Krisjanis
  Prusis, and Jevgenijs Vihrovs.
\newblock Quantum speedups for exponential-time dynamic programming algorithms.
\newblock In Timothy~M. Chan, editor, {\em Proceedings of the Thirtieth Annual
  {ACM-SIAM} Symposium on Discrete Algorithms, {SODA} 2019, San Diego,
  California, USA, January 6-9, 2019}, pages 1783--1793. {SIAM}, 2019.

\bibitem[AS16]{alon2016probabilistic}
Noga Alon and Joel~H Spencer.
\newblock {\em The probabilistic method}.
\newblock John Wiley \& Sons, 2016.

\bibitem[BCJ11]{DBLP:conf/eurocrypt/BeckerCJ11}
Anja Becker, Jean{-}S{\'{e}}bastien Coron, and Antoine Joux.
\newblock Improved generic algorithms for hard knapsacks.
\newblock In Kenneth~G. Paterson, editor, {\em Advances in Cryptology -
  {EUROCRYPT} 2011 - 30th Annual International Conference on the Theory and
  Applications of Cryptographic Techniques, Tallinn, Estonia, May 15-19, 2011.
  Proceedings}, volume 6632 of {\em Lecture Notes in Computer Science}, pages
  364--385. Springer, 2011.

\bibitem[BCKM24]{BelovaCKM24}
Tatiana Belova, Nikolai Chukhin, Alexander~S. Kulikov, and Ivan Mihajlin.
\newblock Improved space bounds for subset sum.
\newblock In Timothy~M. Chan, Johannes Fischer, John Iacono, and Grzegorz
  Herman, editors, {\em 32nd Annual European Symposium on Algorithms, {ESA}
  2024, Royal Holloway, London, United Kingdom, September 2-4, 2024}, volume
  308 of {\em LIPIcs}, pages 21:1--21:17. Schloss Dagstuhl - Leibniz-Zentrum
  f{\"{u}}r Informatik, 2024.

\bibitem[BCKN15]{DBLP:journals/iandc/BodlaenderCKN15}
Hans~L. Bodlaender, Marek Cygan, Stefan Kratsch, and Jesper Nederlof.
\newblock Deterministic single exponential time algorithms for connectivity
  problems parameterized by treewidth.
\newblock {\em Inf. Comput.}, 243:86--111, 2015.

\bibitem[BE05]{DBLP:journals/jal/BeigelE05}
Richard Beigel and David Eppstein.
\newblock 3-coloring in time ${O}(1.3289^n)$.
\newblock {\em J. Algorithms}, 54(2):168--204, 2005.

\bibitem[Bea94]{beame1994switching}
Paul Beame.
\newblock A switching lemma primer.
\newblock Technical report, Technical Report UW-CSE-95-07-01, Department of
  Computer Science and Engineering, University of Washington., 1994.

\bibitem[Bel62]{bellman1962dynamic}
Richard Bellman.
\newblock Dynamic programming treatment of the travelling salesman problem.
\newblock {\em J. {ACM}}, 9(1):61--63, 1962.

\bibitem[BGK{\etalchar{+}}23]{DBLP:conf/soda/BelovaGKMS23}
Tatiana Belova, Alexander Golovnev, Alexander~S. Kulikov, Ivan Mihajlin, and
  Denil Sharipov.
\newblock Polynomial formulations as a barrier for reduction-based hardness
  proofs.
\newblock In Nikhil Bansal and Viswanath Nagarajan, editors, {\em Proceedings
  of the 2023 {ACM-SIAM} Symposium on Discrete Algorithms, {SODA} 2023,
  Florence, Italy, January 22-25, 2023}, pages 3245--3281. {SIAM}, 2023.

\bibitem[BGNV18]{DBLP:journals/siamcomp/BansalGN018}
Nikhil Bansal, Shashwat Garg, Jesper Nederlof, and Nikhil Vyas.
\newblock Faster space-efficient algorithms for {S}ubset sum, k-{S}um, and
  related problems.
\newblock {\em {SIAM} J. Comput.}, 47(5):1755--1777, 2018.

\bibitem[BHK09]{DBLP:journals/siamcomp/BjorklundHK09}
Andreas Bj{\"{o}}rklund, Thore Husfeldt, and Mikko Koivisto.
\newblock Set partitioning via inclusion-exclusion.
\newblock {\em {SIAM} J. Comput.}, 39(2):546--563, 2009.

\bibitem[BHKK10]{DBLP:journals/corr/abs-0802-2834}
Andreas Bj{\"{o}}rklund, Thore Husfeldt, Petteri Kaski, and Mikko Koivisto.
\newblock Trimmed moebius inversion and graphs of bounded degree.
\newblock {\em Theory Comput. Syst.}, 47(3):637--654, 2010.

\bibitem[BHKK12]{DBLP:journals/talg/BjorklundHKK12}
Andreas Bj{\"{o}}rklund, Thore Husfeldt, Petteri Kaski, and Mikko Koivisto.
\newblock The traveling salesman problem in bounded degree graphs.
\newblock {\em {ACM} Trans. Algorithms}, 8(2):18:1--18:13, 2012.

\bibitem[BHKK17]{DBLP:journals/jcss/BjorklundHKK17}
Andreas Bj{\"{o}}rklund, Thore Husfeldt, Petteri Kaski, and Mikko Koivisto.
\newblock Narrow sieves for parameterized paths and packings.
\newblock {\em J. Comput. Syst. Sci.}, 87:119--139, 2017.

\bibitem[Bj{\"{o}}14]{DBLP:journals/siamcomp/Bjorklund14}
Andreas Bj{\"{o}}rklund.
\newblock Determinant sums for undirected hamiltonicity.
\newblock {\em {SIAM} J. Comput.}, 43(1):280--299, 2014.

\bibitem[BK24]{DBLP:journals/corr/abs-2310-11926}
Andreas Bj{\"{o}}rklund and Petteri Kaski.
\newblock The asymptotic rank conjecture and the set cover conjecture are not
  both true.
\newblock In Bojan Mohar, Igor Shinkar, and Ryan O'Donnell, editors, {\em
  Proceedings of the 56th Annual {ACM} Symposium on Theory of Computing, {STOC}
  2024, Vancouver, BC, Canada, June 24-28, 2024}, pages 859--870. {ACM}, 2024.

\bibitem[BKK17]{DBLP:conf/icalp/BjorklundKK17}
Andreas Bj{\"{o}}rklund, Petteri Kaski, and Ioannis Koutis.
\newblock Directed hamiltonicity and out-branchings via generalized laplacians.
\newblock In Ioannis Chatzigiannakis, Piotr Indyk, Fabian Kuhn, and Anca
  Muscholl, editors, {\em 44th International Colloquium on Automata, Languages,
  and Programming, {ICALP} 2017, July 10-14, 2017, Warsaw, Poland}, volume~80
  of {\em LIPIcs}, pages 91:1--91:14. Schloss Dagstuhl - Leibniz-Zentrum
  f{\"{u}}r Informatik, 2017.

\bibitem[BKM{\etalchar{+}}24]{DBLP:conf/soda/BelovaKMRRS24}
Tatiana Belova, Alexander~S. Kulikov, Ivan Mihajlin, Olga Ratseeva, Grigory
  Reznikov, and Denil Sharipov.
\newblock Computations with polynomial evaluation oracle: ruling out
  superlinear seth-based lower bounds.
\newblock In David~P. Woodruff, editor, {\em Proceedings of the 2024 {ACM-SIAM}
  Symposium on Discrete Algorithms, {SODA} 2024, Alexandria, VA, USA, January
  7-10, 2024}, pages 1834--1853. {SIAM}, 2024.

\bibitem[BKMV20]{DBLP:journals/talg/BergerKMV20}
Andr{\'{e}} Berger, L{\'{a}}szl{\'{o}} Kozma, Matthias Mnich, and Roland
  Vincze.
\newblock Time- and space-optimal algorithm for the many-visits {TSP}.
\newblock {\em {ACM} Trans. Algorithms}, 16(3):35:1--35:22, 2020.

\bibitem[BNvdZ16]{DBLP:conf/icalp/BodlaenderNZ16}
Hans~L. Bodlaender, Jesper Nederlof, and Tom~C. van~der Zanden.
\newblock Subexponential time algorithms for embedding h-minor free graphs.
\newblock In Ioannis Chatzigiannakis, Michael Mitzenmacher, Yuval Rabani, and
  Davide Sangiorgi, editors, {\em 43rd International Colloquium on Automata,
  Languages, and Programming, {ICALP} 2016, July 11-15, 2016, Rome, Italy},
  volume~55 of {\em LIPIcs}, pages 9:1--9:14. Schloss Dagstuhl -
  Leibniz-Zentrum f{\"{u}}r Informatik, 2016.

\bibitem[BPS21]{DBLP:conf/stacs/BuhrmanPS21}
Harry Buhrman, Subhasree Patro, and Florian Speelman.
\newblock A framework of quantum strong exponential-time hypotheses.
\newblock In Markus Bl{\"{a}}ser and Benjamin Monmege, editors, {\em 38th
  International Symposium on Theoretical Aspects of Computer Science, {STACS}
  2021, March 16-19, 2021, Saarbr{\"{u}}cken, Germany (Virtual Conference)},
  volume 187 of {\em LIPIcs}, pages 19:1--19:19. Schloss Dagstuhl -
  Leibniz-Zentrum f{\"{u}}r Informatik, 2021.

\bibitem[CDL{\etalchar{+}}16]{DBLP:journals/talg/CyganDLMNOPSW16}
Marek Cygan, Holger Dell, Daniel Lokshtanov, D{\'{a}}niel Marx, Jesper
  Nederlof, Yoshio Okamoto, Ramamohan Paturi, Saket Saurabh, and Magnus
  Wahlstr{\"{o}}m.
\newblock On problems as hard as {CNF-SAT}.
\newblock {\em {ACM} Trans. Algorithms}, 12(3):41:1--41:24, 2016.

\bibitem[CFG{\etalchar{+}}17]{DBLP:journals/jacm/CyganFGKMPS17}
Marek Cygan, Fedor~V. Fomin, Alexander Golovnev, Alexander~S. Kulikov, Ivan
  Mihajlin, Jakub Pachocki, and Arkadiusz Socala.
\newblock Tight lower bounds on graph embedding problems.
\newblock {\em J. {ACM}}, 64(3):18:1--18:22, 2017.

\bibitem[CFK{\etalchar{+}}15]{DBLP:books/sp/CyganFKLMPPS15}
Marek Cygan, Fedor~V. Fomin, Lukasz Kowalik, Daniel Lokshtanov, D{\'{a}}niel
  Marx, Marcin Pilipczuk, Michal Pilipczuk, and Saket Saurabh.
\newblock {\em Parameterized Algorithms}.
\newblock Springer, 2015.

\bibitem[CGI{\etalchar{+}}16]{DBLP:conf/innovations/CarmosinoGIMPS16}
Marco~L. Carmosino, Jiawei Gao, Russell Impagliazzo, Ivan Mihajlin, Ramamohan
  Paturi, and Stefan Schneider.
\newblock Nondeterministic extensions of the strong exponential time hypothesis
  and consequences for non-reducibility.
\newblock In Madhu Sudan, editor, {\em Proceedings of the 2016 {ACM} Conference
  on Innovations in Theoretical Computer Science, Cambridge, MA, USA, January
  14-16, 2016}, pages 261--270. {ACM}, 2016.

\bibitem[CIP06]{DBLP:conf/coco/CalabroIP06}
Chris Calabro, Russell Impagliazzo, and Ramamohan Paturi.
\newblock A duality between clause width and clause density for {SAT}.
\newblock In {\em 21st Annual {IEEE} Conference on Computational Complexity
  {(CCC} 2006), 16-20 July 2006, Prague, Czech Republic}, pages 252--260.
  {IEEE} Computer Society, 2006.

\bibitem[CJRS22]{DBLP:conf/soda/ChenJRS22}
Xi~Chen, Yaonan Jin, Tim Randolph, and Rocco~A. Servedio.
\newblock Average-case subset balancing problems.
\newblock In Joseph~(Seffi) Naor and Niv Buchbinder, editors, {\em Proceedings
  of the 2022 {ACM-SIAM} Symposium on Discrete Algorithms, {SODA} 2022, Virtual
  Conference / Alexandria, VA, USA, January 9 - 12, 2022}, pages 743--778.
  {SIAM}, 2022.

\bibitem[CJWW22]{DBLP:conf/soda/ChenJWW22}
Lijie Chen, Ce~Jin, R.~Ryan Williams, and Hongxun Wu.
\newblock Truly low-space {E}lement {D}istinctness and {S}ubset {S}um via
  pseudorandom hash functions.
\newblock In Joseph~(Seffi) Naor and Niv Buchbinder, editors, {\em Proceedings
  of the 2022 {ACM-SIAM} Symposium on Discrete Algorithms, {SODA} 2022, Virtual
  Conference / Alexandria, VA, USA, January 9 - 12, 2022}, pages 1661--1678.
  {SIAM}, 2022.

\bibitem[CKN18]{DBLP:journals/jacm/CyganKN18}
Marek Cygan, Stefan Kratsch, and Jesper Nederlof.
\newblock Fast hamiltonicity checking via bases of perfect matchings.
\newblock {\em J. {ACM}}, 65(3):12:1--12:46, 2018.

\bibitem[CP84]{DBLP:journals/siamcomp/CosmadakisP84}
Stavros~S. Cosmadakis and Christos~H. Papadimitriou.
\newblock The traveling salesman problem with many visits to few cities.
\newblock {\em {SIAM} J. Comput.}, 13(1):99--108, 1984.

\bibitem[DGH{\etalchar{+}}02]{DBLP:journals/tcs/DantsinGHKKPRS02}
Evgeny Dantsin, Andreas Goerdt, Edward~A. Hirsch, Ravi Kannan, Jon~M.
  Kleinberg, Christos~H. Papadimitriou, Prabhakar Raghavan, and Uwe
  Sch{\"{o}}ning.
\newblock A deterministic ${O}((2-2/(k+1))^n)$ algorithm for k-sat based on
  local search.
\newblock {\em Theor. Comput. Sci.}, 289(1):69--83, 2002.

\bibitem[DL78]{DBLP:journals/ipl/DemilloL78}
Richard~A. DeMillo and Richard~J. Lipton.
\newblock A probabilistic remark on algebraic program testing.
\newblock {\em Inf. Process. Lett.}, 7(4):193--195, 1978.

\bibitem[Dru13]{DBLP:conf/focs/Drucker13}
Andrew Drucker.
\newblock Nondeterministic direct product reductions and the success
  probability of {SAT} solvers.
\newblock In {\em 54th Annual {IEEE} Symposium on Foundations of Computer
  Science, {FOCS} 2013, 26-29 October, 2013, Berkeley, CA, {USA}}, pages
  736--745. {IEEE} Computer Society, 2013.

\bibitem[FG06]{DBLP:series/txtcs/FlumG06}
J{\"{o}}rg Flum and Martin Grohe.
\newblock {\em Parameterized Complexity Theory}.
\newblock Texts in Theoretical Computer Science. An {EATCS} Series. Springer,
  2006.

\bibitem[FGK09]{DBLP:journals/jacm/FominGK09}
Fedor~V. Fomin, Fabrizio Grandoni, and Dieter Kratsch.
\newblock A measure {\&} conquer approach for the analysis of exact algorithms.
\newblock {\em J. {ACM}}, 56(5):25:1--25:32, 2009.

\bibitem[FGLS19]{DBLP:journals/jacm/FominGLS19}
Fedor~V. Fomin, Serge Gaspers, Daniel Lokshtanov, and Saket Saurabh.
\newblock Exact algorithms via monotone local search.
\newblock {\em J. {ACM}}, 66(2):8:1--8:23, 2019.

\bibitem[FIK08]{fomin_et_al:DagSemProc.08431.2}
Fedor~V. Fomin, Kazuo Iwama, and Dieter Kratsch.
\newblock {08431 Executive Summary – Moderately Exponential Time Algorithms}.
\newblock In Fedor~V. Fomin, Kazuo Iwama, and Dieter Kratsch, editors, {\em
  Moderately Exponential Time Algorithms}, volume 8431 of {\em Dagstuhl Seminar
  Proceedings (DagSemProc)}, pages 1--2, Dagstuhl, Germany, 2008. Schloss
  Dagstuhl -- Leibniz-Zentrum f{\"u}r Informatik.

\bibitem[FK10]{DBLP:series/txtcs/FominK10}
Fedor~V. Fomin and Dieter Kratsch.
\newblock {\em Exact Exponential Algorithms}.
\newblock Texts in Theoretical Computer Science. An {EATCS} Series. Springer,
  2010.

\bibitem[FK13]{DBLP:journals/cacm/FominK13}
Fedor~V. Fomin and Petteri Kaski.
\newblock Exact exponential algorithms.
\newblock {\em Commun. {ACM}}, 56(3):80--88, 2013.

\bibitem[FT87]{franktardos}
Andr{\'{a}}s Frank and {\'{E}}va Tardos.
\newblock An application of simultaneous diophantine approximation in
  combinatorial optimization.
\newblock {\em Combinatorica}, 7(1):49--65, 1987.

\bibitem[GS87]{DBLP:journals/siamcomp/GurevichS87}
Yuri Gurevich and Saharon Shelah.
\newblock Expected computation time for hamiltonian path problem.
\newblock {\em {SIAM} J. Comput.}, 16(3):486--502, 1987.

\bibitem[H{\aa}s89]{DBLP:conf/stoc/Hastad86}
Johan H{\aa}stad.
\newblock Almost optimal lower bounds for small depth circuits.
\newblock {\em Adv. Comput. Res.}, 5:143--170, 1989.

\bibitem[HJ10]{DBLP:conf/eurocrypt/Howgrave-GrahamJ10}
Nick Howgrave{-}Graham and Antoine Joux.
\newblock New generic algorithms for hard knapsacks.
\newblock In Henri Gilbert, editor, {\em Advances in Cryptology - {EUROCRYPT}
  2010, 29th Annual International Conference on the Theory and Applications of
  Cryptographic Techniques, Monaco / French Riviera, May 30 - June 3, 2010.
  Proceedings}, volume 6110 of {\em Lecture Notes in Computer Science}, pages
  235--256. Springer, 2010.

\bibitem[HK62]{held1962dynamic}
Michael Held and Richard~M Karp.
\newblock A dynamic programming approach to sequencing problems.
\newblock {\em Journal of the Society for Industrial and Applied Mathematics},
  10(1):196--210, 1962.

\bibitem[HKPS11]{husfeldt_et_al:DagSemProc.10441.1}
Thore Husfeldt, Dieter Kratsch, Ramamohan Paturi, and Gregory~B. Sorkin.
\newblock {10441 Abstracts Collection – Exact Complexity of {NP}-hard
  Problems}.
\newblock In Thore Husfeldt, Dieter Kratsch, Ramamohan Paturi, and Gregory~B.
  Sorkin, editors, {\em Exact Complexity of {NP}-hard Problems}, volume 10441
  of {\em Dagstuhl Seminar Proceedings (DagSemProc)}, pages 1--22, Dagstuhl,
  Germany, 2011. Schloss Dagstuhl -- Leibniz-Zentrum f{\"u}r Informatik.

\bibitem[HKZZ19]{HansenKZZ19}
Thomas~Dueholm Hansen, Haim Kaplan, Or~Zamir, and Uri Zwick.
\newblock Faster \emph{k}-sat algorithms using biased-ppsz.
\newblock In Moses Charikar and Edith Cohen, editors, {\em Proceedings of the
  51st Annual {ACM} {SIGACT} Symposium on Theory of Computing, {STOC} 2019,
  Phoenix, AZ, USA, June 23-26, 2019}, pages 578--589. {ACM}, 2019.

\bibitem[HPSW13]{husfeldt_et_al:DagRep.3.8.40}
Thore Husfeldt, Ramamohan Paturi, Gregory~B. Sorkin, and Ryan Williams.
\newblock {Exponential Algorithms: Algorithms and Complexity Beyond Polynomial
  Time (Dagstuhl Seminar 13331)}.
\newblock {\em Dagstuhl Reports}, 3(8):40--72, 2013.

\bibitem[HS74]{DBLP:journals/jacm/HorowitzS74}
Ellis Horowitz and Sartaj Sahni.
\newblock Computing partitions with applications to the knapsack problem.
\newblock {\em J. {ACM}}, 21(2):277--292, 1974.

\bibitem[IMP11]{DBLP:journals/corr/abs-1107-3127}
Russell Impagliazzo, William Matthews, and Ramamohan Paturi.
\newblock A satisfiability algorithm for ${AC}^0$.
\newblock {\em CoRR}, abs/1107.3127, 2011.

\bibitem[IP01]{DBLP:journals/jcss/ImpagliazzoP01}
Russell Impagliazzo and Ramamohan Paturi.
\newblock On the complexity of k-{SAT}.
\newblock {\em J. Comput. Syst. Sci.}, 62(2):367--375, 2001.

\bibitem[IPZ98]{DBLP:conf/focs/ImpagliazzoPZ98}
Russell Impagliazzo, Ramamohan Paturi, and Francis Zane.
\newblock Which problems have strongly exponential complexity?
\newblock In {\em 39th Annual Symposium on Foundations of Computer Science,
  {FOCS} '98, November 8-11, 1998, Palo Alto, California, {USA}}, pages
  653--663. {IEEE} Computer Society, 1998.

\bibitem[JS82]{DBLP:journals/jacm/JerrumS82}
Mark Jerrum and Marc Snir.
\newblock Some exact complexity results for straight-line computations over
  semirings.
\newblock {\em J. {ACM}}, 29(3):874--897, 1982.

\bibitem[JS99]{DBLP:conf/soda/JohnsonS99}
David~S. Johnson and Mario Szegedy.
\newblock What are the least tractable instances of max independent set?
\newblock In Robert~Endre Tarjan and Tandy~J. Warnow, editors, {\em Proceedings
  of the Tenth Annual {ACM-SIAM} Symposium on Discrete Algorithms, 17-19
  January 1999, Baltimore, Maryland, {USA}}, pages 927--928. {ACM/SIAM}, 1999.

\bibitem[JSS21]{Jain2021Anticoncentration}
Vishesh Jain, Ashwin Sah, and Mehtaab Sawhney.
\newblock Anticoncentration versus the number of subset sums.
\newblock {\em Advances in Combinatorics}, 6 2021.

\bibitem[Juk23]{jukna2023tropical}
Stasys Jukna.
\newblock {\em Tropical Circuit Complexity: Limits of Pure Dynamic
  Programming}.
\newblock Springer Nature, 2023.

\bibitem[JW24]{DBLP:journals/corr/abs-2403-19117}
Ce~Jin and Hongxun Wu.
\newblock A faster algorithm for pigeonhole equal sums.
\newblock {\em CoRR}, abs/2403.19117, 2024.

\bibitem[Kas18]{Kaski18}
Petteri Kaski.
\newblock Engineering a delegatable and error-tolerant algorithm for counting
  small subgraphs.
\newblock In {\em Proceedings of the Twentieth Workshop on Algorithm
  Engineering and Experiments, {ALENEX} 2018, New Orleans, LA, USA, January
  7-8, 2018.}, pages 184--198, 2018.

\bibitem[KM24]{DBLP:conf/sosa/KulikovM24}
Alexander~S. Kulikov and Ivan Mihajlin.
\newblock If edge coloring is hard under seth, then {SETH} is false.
\newblock In Merav Parter and Seth Pettie, editors, {\em 2024 Symposium on
  Simplicity in Algorithms, {SOSA} 2024, Alexandria, VA, USA, January 8-10,
  2024}, pages 115--120. {SIAM}, 2024.

\bibitem[Koi09]{DBLP:conf/iwpec/Koivisto09}
Mikko Koivisto.
\newblock Partitioning into sets of bounded cardinality.
\newblock In Jianer Chen and Fedor~V. Fomin, editors, {\em Parameterized and
  Exact Computation, 4th International Workshop, {IWPEC} 2009, Copenhagen,
  Denmark, September 10-11, 2009, Revised Selected Papers}, volume 5917 of {\em
  Lecture Notes in Computer Science}, pages 258--263. Springer, 2009.

\bibitem[Kou08]{DBLP:conf/icalp/Koutis08}
Ioannis Koutis.
\newblock Faster algebraic algorithms for path and packing problems.
\newblock In Luca Aceto, Ivan Damg{\aa}rd, Leslie~Ann Goldberg, Magn{\'{u}}s~M.
  Halld{\'{o}}rsson, Anna Ing{\'{o}}lfsd{\'{o}}ttir, and Igor Walukiewicz,
  editors, {\em Automata, Languages and Programming, 35th International
  Colloquium, {ICALP} 2008, Reykjavik, Iceland, July 7-11, 2008, Proceedings,
  Part {I:} Tack {A:} Algorithms, Automata, Complexity, and Games}, volume 5125
  of {\em Lecture Notes in Computer Science}, pages 575--586. Springer, 2008.

\bibitem[KW16]{DBLP:journals/cacm/KoutisW16}
Ioannis Koutis and Ryan Williams.
\newblock Algebraic fingerprints for faster algorithms.
\newblock {\em Commun. {ACM}}, 59(1):98--105, 2016.

\bibitem[LMS18]{DBLP:journals/talg/LokshtanovMS18}
Daniel Lokshtanov, D{\'{a}}niel Marx, and Saket Saurabh.
\newblock Known algorithms on graphs of bounded treewidth are probably optimal.
\newblock {\em {ACM} Trans. Algorithms}, 14(2):13:1--13:30, 2018.

\bibitem[LN22]{DBLP:journals/talg/LiN22}
Jason Li and Jesper Nederlof.
\newblock Detecting feedback vertex sets of size \emph{k} in
  \emph{O}\({}^{\mbox{{\(\star\)}}}\) (2.7\emph{k}) time.
\newblock {\em {ACM} Trans. Algorithms}, 18(4):34:1--34:26, 2022.

\bibitem[Lov79]{lovasz1979determinants}
L{\'a}szl{\'o} Lov{\'a}sz.
\newblock On determinants, matchings, and random algorithms.
\newblock In {\em FCT}, volume~79, pages 565--574, 1979.

\bibitem[LSX26]{sparsificationSOSA}
Daniel Lokshtanov, Saket Saurabh, and Jie Xue.
\newblock The sparsification lemma via measure and conquer.
\newblock In Sepehr Assadi and Eva Rotenberg, editors, {\em 2026 Symposium on
  Simplicity in Algorithms, {SOSA} 2026, Vancouver}. {SIAM}, 2026.

\bibitem[LZ23]{DBLP:conf/soda/0003Z23}
Xin Lyu and Weihao Zhu.
\newblock Time-space tradeoffs for element distinctness and set intersection
  via pseudorandomness.
\newblock In Nikhil Bansal and Viswanath Nagarajan, editors, {\em Proceedings
  of the 2023 {ACM-SIAM} Symposium on Discrete Algorithms, {SODA} 2023,
  Florence, Italy, January 22-25, 2023}, pages 5243--5281. {SIAM}, 2023.

\bibitem[MNPW19]{DBLP:conf/esa/MuchaNPW19}
Marcin Mucha, Jesper Nederlof, Jakub Pawlewicz, and Karol Wegrzycki.
\newblock Equal-subset-sum faster than the meet-in-the-middle.
\newblock In Michael~A. Bender, Ola Svensson, and Grzegorz Herman, editors,
  {\em 27th Annual European Symposium on Algorithms, {ESA} 2019, September
  9-11, 2019, Munich/Garching, Germany}, volume 144 of {\em LIPIcs}, pages
  73:1--73:16. Schloss Dagstuhl - Leibniz-Zentrum f{\"{u}}r Informatik, 2019.

\bibitem[MS85]{MONIEN1985287}
Burkhard Monien and Ewald Speckenmeyer.
\newblock Solving satisfiability in less than $2^n$ steps.
\newblock {\em Discrete Applied Mathematics}, 10(3):287--295, 1985.

\bibitem[MVV87]{DBLP:journals/combinatorica/MulmuleyVV87}
Ketan Mulmuley, Umesh~V. Vazirani, and Vijay~V. Vazirani.
\newblock Matching is as easy as matrix inversion.
\newblock {\em Comb.}, 7(1):105--113, 1987.

\bibitem[Ned16]{DBLP:conf/esa/Nederlof16}
Jesper Nederlof.
\newblock Finding large set covers faster via the representation method.
\newblock In Piotr Sankowski and Christos~D. Zaroliagis, editors, {\em 24th
  Annual European Symposium on Algorithms, {ESA} 2016, August 22-24, 2016,
  Aarhus, Denmark}, volume~57 of {\em LIPIcs}, pages 69:1--69:15. Schloss
  Dagstuhl - Leibniz-Zentrum f{\"{u}}r Informatik, 2016.

\bibitem[Ned20]{DBLP:conf/stoc/Nederlof20}
Jesper Nederlof.
\newblock Bipartite {TSP} in ${O}(1.9999^n)$ time, assuming quadratic time
  matrix multiplication.
\newblock In Konstantin Makarychev, Yury Makarychev, Madhur Tulsiani, Gautam
  Kamath, and Julia Chuzhoy, editors, {\em Proceedings of the 52nd Annual {ACM}
  {SIGACT} Symposium on Theory of Computing, {STOC} 2020, Chicago, IL, USA,
  June 22-26, 2020}, pages 40--53. {ACM}, 2020.

\bibitem[NPSW23]{DBLP:journals/siamcomp/NederlofPSW23}
Jesper Nederlof, Jakub Pawlewicz, C{\'{e}}line M.~F. Swennenhuis, and Karol
  Wegrzycki.
\newblock A faster exponential time algorithm for bin packing with a constant
  number of bins via additive combinatorics.
\newblock {\em {SIAM} J. Comput.}, 52(6):1369--1412, 2023.

\bibitem[NvLvdZ12]{DBLP:conf/mfcs/NederlofLZ12}
Jesper Nederlof, Erik~Jan van Leeuwen, and Ruben van~der Zwaan.
\newblock Reducing a target interval to a few exact queries.
\newblock In Branislav Rovan, Vladimiro Sassone, and Peter Widmayer, editors,
  {\em Mathematical Foundations of Computer Science 2012 - 37th International
  Symposium, {MFCS} 2012, Bratislava, Slovakia, August 27-31, 2012.
  Proceedings}, volume 7464 of {\em Lecture Notes in Computer Science}, pages
  718--727. Springer, 2012.

\bibitem[NW21]{DBLP:conf/stoc/NederlofW21}
Jesper Nederlof and Karol Wegrzycki.
\newblock Improving schroeppel and shamir's algorithm for {S}ubset {S}um via
  {O}rthogonal {V}ectors.
\newblock In Samir Khuller and Virginia~Vassilevska Williams, editors, {\em
  {STOC} '21: 53rd Annual {ACM} {SIGACT} Symposium on Theory of Computing,
  Virtual Event, Italy, June 21-25, 2021}, pages 1670--1683. {ACM}, 2021.

\bibitem[PP10]{DBLP:conf/stoc/PaturiP10}
Ramamohan Paturi and Pavel Pudl{\'{a}}k.
\newblock On the complexity of circuit satisfiability.
\newblock In Leonard~J. Schulman, editor, {\em Proceedings of the 42nd {ACM}
  Symposium on Theory of Computing, {STOC} 2010, Cambridge, Massachusetts, USA,
  5-8 June 2010}, pages 241--250. {ACM}, 2010.

\bibitem[Pra24]{DBLP:journals/corr/abs-2311-02774}
Kevin Pratt.
\newblock A stronger connection between the asymptotic rank conjecture and the
  set cover conjecture.
\newblock In Bojan Mohar, Igor Shinkar, and Ryan O'Donnell, editors, {\em
  Proceedings of the 56th Annual {ACM} Symposium on Theory of Computing, {STOC}
  2024, Vancouver, BC, Canada, June 24-28, 2024}, pages 871--874. {ACM}, 2024.

\bibitem[Sap07]{sapozhenko2007number}
Aleksandr~Antonovich Sapozhenko.
\newblock The number of independent sets in graphs.
\newblock {\em Moscow University Mathematics Bulletin}, 62(3):116--118, 2007.

\bibitem[Sch80]{DBLP:journals/jacm/Schwartz80}
Jacob~T. Schwartz.
\newblock Fast probabilistic algorithms for verification of polynomial
  identities.
\newblock {\em J. {ACM}}, 27(4):701--717, 1980.

\bibitem[Sch05a]{DBLP:conf/stacs/Schoning05}
Uwe Sch{\"{o}}ning.
\newblock Algorithmics in exponential time.
\newblock In Volker Diekert and Bruno Durand, editors, {\em {STACS} 2005, 22nd
  Annual Symposium on Theoretical Aspects of Computer Science, Stuttgart,
  Germany, February 24-26, 2005, Proceedings}, volume 3404 of {\em Lecture
  Notes in Computer Science}, pages 36--43. Springer, 2005.

\bibitem[Sch05b]{SCHRIJVER20051}
Alexander Schrijver.
\newblock On the history of combinatorial optimization (till 1960).
\newblock In K.~Aardal, G.L. Nemhauser, and R.~Weismantel, editors, {\em
  Discrete Optimization}, volume~12 of {\em Handbooks in Operations Research
  and Management Science}, pages 1--68. Elsevier, 2005.

\bibitem[Sch24]{Scheder24}
Dominik Scheder.
\newblock {PPSZ} is better than you think.
\newblock {\em TheoretiCS}, 3, 2024.

\bibitem[SS81]{DBLP:journals/siamcomp/SchroeppelS81}
Richard Schroeppel and Adi Shamir.
\newblock A {$T=O(2^{n/2})$, $S=O(2^{n/4})$} time algorithm for certain
  {NP}-complete problems.
\newblock {\em {SIAM} J. Comput.}, 10(3):456--464, 1981.

\bibitem[SS12]{DBLP:conf/icalp/SanthanamS12}
Rahul Santhanam and Srikanth Srinivasan.
\newblock On the limits of sparsification.
\newblock In Artur Czumaj, Kurt Mehlhorn, Andrew~M. Pitts, and Roger
  Wattenhofer, editors, {\em Automata, Languages, and Programming - 39th
  International Colloquium, {ICALP} 2012, Warwick, UK, July 9-13, 2012,
  Proceedings, Part {I}}, volume 7391 of {\em Lecture Notes in Computer
  Science}, pages 774--785. Springer, 2012.

\bibitem[Str94]{strassen1994algebra}
Volker Strassen.
\newblock Algebra and complexity.
\newblock In {\em First European Congress of Mathematics Paris, July 6--10,
  1992: Vol. II: Invited Lectures (Part 2)}, pages 429--446. Springer, 1994.

\bibitem[Tra84]{DBLP:journals/annals/Trakhtenbrot84}
Boris~A. Trakhtenbrot.
\newblock A survey of russian approaches to perebor (brute-force searches)
  algorithms.
\newblock {\em {IEEE} Ann. Hist. Comput.}, 6(4):384--400, 1984.

\bibitem[Tut47]{Tutte47}
William~T. Tutte.
\newblock The factorization of linear graphs.
\newblock {\em Journal of the London Mathematical Society}, s1-22(2):107--111,
  1947.

\bibitem[Wil14]{williams2014algorithms}
Ryan Williams.
\newblock Algorithms for circuits and circuits for algorithms: Connecting the
  tractable and intractable.
\newblock In {\em Proceedings of the International Congress of Mathematicians},
  pages 659--682, 2014.

\bibitem[Wil16]{DBLP:conf/coco/Williams16}
Richard~Ryan Williams.
\newblock Strong {ETH} breaks with merlin and arthur: Short non-interactive
  proofs of batch evaluation.
\newblock In Ran Raz, editor, {\em 31st Conference on Computational Complexity,
  {CCC} 2016, May 29 to June 1, 2016, Tokyo, Japan}, volume~50 of {\em LIPIcs},
  pages 2:1--2:17. Schloss Dagstuhl - Leibniz-Zentrum f{\"{u}}r Informatik,
  2016.

\bibitem[Woe01]{DBLP:conf/aussois/Woeginger01}
Gerhard~J. Woeginger.
\newblock Exact algorithms for {NP}-hard problems: {A} survey.
\newblock In Michael J{\"{u}}nger, Gerhard Reinelt, and Giovanni Rinaldi,
  editors, {\em Combinatorial Optimization - Eureka, You Shrink!, Papers
  Dedicated to Jack Edmonds, 5th International Workshop, Aussois, France, March
  5-9, 2001, Revised Papers}, volume 2570 of {\em Lecture Notes in Computer
  Science}, pages 185--208. Springer, 2001.

\bibitem[Woe04]{DBLP:conf/iwpec/Woeginger04}
Gerhard~J. Woeginger.
\newblock Space and time complexity of exact algorithms: Some open problems
  (invited talk).
\newblock In Rodney~G. Downey, Michael~R. Fellows, and Frank K. H.~A. Dehne,
  editors, {\em Parameterized and Exact Computation, First International
  Workshop, {IWPEC} 2004, Bergen, Norway, September 14-17, 2004, Proceedings},
  volume 3162 of {\em Lecture Notes in Computer Science}, pages 281--290.
  Springer, 2004.

\bibitem[Woe08]{DBLP:journals/dam/Woeginger08}
Gerhard~J. Woeginger.
\newblock Open problems around exact algorithms.
\newblock {\em Discret. Appl. Math.}, 156(3):397--405, 2008.

\bibitem[Yat37]{yates1937design}
Franck Yates.
\newblock {\em The Design and Analysis of Factorial Experiments}.
\newblock Imperial Bureau of Soil Science. Technical Communication. Imperial
  Bureau of Soil Science, 1937.

\bibitem[Zam21]{Zamir21}
Or~Zamir.
\newblock Breaking the $2^{n}$ barrier for 5-coloring and 6-coloring.
\newblock In Nikhil Bansal, Emanuela Merelli, and James Worrell, editors, {\em
  48th International Colloquium on Automata, Languages, and Programming,
  {ICALP} 2021, Glasgow, Scotland (Virtual Conference), July 12-16, 2021},
  volume 198 of {\em LIPIcs}, pages 113:1--113:20. Schloss Dagstuhl -
  Leibniz-Zentrum f{\"{u}}r Informatik, 2021.

\bibitem[Zam23]{DBLP:conf/stoc/Zamir23}
Or~Zamir.
\newblock Algorithmic applications of hypergraph and partition containers.
\newblock In Barna Saha and Rocco~A. Servedio, editors, {\em Proceedings of the
  55th Annual {ACM} Symposium on Theory of Computing, {STOC} 2023, Orlando, FL,
  USA, June 20-23, 2023}, pages 985--998. {ACM}, 2023.

\bibitem[Zip93]{DBLP:books/daglib/0000420}
Richard~E. Zippel.
\newblock {\em Effective polynomial computation}, volume 241 of {\em The Kluwer
  international series in engineering and computer science}.
\newblock Kluwer, 1993.

\end{thebibliography}
\end{document}